\theoremstyle{plain}
\newtheorem{definition}{Definition}
\newtheorem{theorem}{Theorem}
\newcommand\blfootnote[1]{%
  \begingroup
  \renewcommand\thefootnote{}\footnote{#1}%
  \addtocounter{footnote}{-1}%
  \endgroup
}
\newcommand{\tab}{\hspace*{.5em}}
\begin{document}

\title{Multiple Access Channels with Combined Cooperation and Partial Cribbing}
\author{Tal Kopetz, Haim Permuter and Shlomo Shamai (Shitz)}
\maketitle
\begin{abstract}
In this paper we study the multiple access channel (MAC) with combined cooperation and partial cribbing and characterize its capacity region. Cooperation means that the two encoders send a message to one another via a rate-limited link prior to transmission, while partial cribbing means that each of the two encoders obtains a deterministic function of the other encoder's output with or without delay. Prior work in this field dealt separately with cooperation and partial cribbing. However, by combining these two methods we can achieve significantly higher rates. Remarkably, the capacity region does not require an additional auxiliary random variable (RV) since the purpose of both cooperation and partial cribbing is to generate a common message between the encoders. In the proof we combine methods of block Markov coding, backward decoding, double rate-splitting, and joint typicality decoding. Furthermore, we present the Gaussian MAC with combined one-sided cooperation and quantized cribbing. For this model, we give an achievability scheme that shows how many cooperation or quantization bits are required in order to achieve a Gaussian MAC with full cooperation/cribbing capacity region. After establishing our main results, we consider two cases where only one auxiliary RV is needed. The first is a rate distortion dual setting for the MAC with a common message, a private message and combined cooperation and cribbing. The second is a state-dependent MAC with cooperation, where the state is known at a partially cribbing encoder and at the decoder. However, there are cases where more than one auxiliary RV is needed, e.g., when the cooperation and cribbing are not used for the same purposes. We present a MAC with an action-dependent state, where the action is based on the cooperation but not on the cribbing. Therefore, in this case more than one auxiliary RV is needed. We deduce a general rule for this result.
\end{abstract}
\begin{IEEEkeywords}
Action, Block Markov coding, Cooperation, Duality, Double rate splitting, Gaussian MAC, Gelfand-Pinsker coding, Multiple access channels, Partial cribbing, State.
\end{IEEEkeywords}
\section{Introduction}\label{sec:intro}
\blfootnote{This work was supported by the Israel Science Foundation, the ERC starting grant and the European Commission in the framework of the FP7 Network of Excellence in Wireless COMmunications (NEWCOM$\#$).
This paper will be presented in part at the 2014 IEEE International Symposium on Information Theory, Honolulu, HI, USA.
T. Kopetz and H. Permuter are with the department of Electrical and Computer Engineering, Ben-Gurion University of the Negev, Beer-Sheva, Israel (kopetz@post.bgu.ac.il, haimp@bgu.ac.il). 
S. Shamai (Shitz) is with the Department of Electrical Engineering, Technion Institute of Technology, Technion City, Haifa 32000, Israel (sshlomo@ee.technion.ac.il).}
The MAC with cooperating encoders was first studied by Willems \cite{Willems}\nocite{willems1983discrete}-\cite{willems1985discrete}.
Willems introduced two separate approaches to cooperating encoders; in the first, using a rate-limited cooperation link between the two encoders, the two encoders cooperate and share as much of their private messages as possible, while in the second, each encoder "listens" to the other encoder and obtains its output. The second approach was named cribbing. Capacity regions for the two approches, separately, were established by Willems.
Furthermore, the cribbing setting was generalized in \cite{asnani2013multiple} to partial cribbing which means that each of the two encoders obtains a deterministic function of the other encoder’s output. The partial cribbing is especially important in the continuous alphabet, such as the Gaussian MAC, since in a continuous alphabet perfect cribbing means full cooperation between the encoders regardless of the cribbing delay.

In this paper, we combine cooperation and partial cribbing and use them simultaneously, thus obtaining better performance and a larger capacity region.
A MAC with combined cooperation and partial cribbing is depicted in Fig. \ref{fig:main}. Encoder 1 and Encoder 2 obtain messages $M_{21}$ and $M_{12}$ prior to transmission. For the cribbing part, we address two cases. In Case A, the cribbing is done strictly causally by both encoders, i.e., $X_{1,i}$ is a function of $(M_{21},Z_2^{i-1})$ and $X_{2,i}$ is a function of $(M_{12},Z_1^{i-1})$. In Case B, the cribbing is done strictly causally by Encoder 1 and causally by Encoder 2, i.e., $X_{1,i}$ is a function of $(M_{21},Z_2^{i-1})$ and $X_{2,i}$ is a function of $(M_{12},Z_1^{i})$. The idea is that this deterministic function, $Z_1$, is on a sliding scale where one end is $Z_{1,i} = X_{1,i}$ (the actual output) and the other end is when $Z_{1,i}$ is a constant, which does not give any information about $X_{1,i}$. The same applies for $Z_2$. In this research, it was our goal to obtain a generic capacity region for a scheme with both cooperation and partial cribbing.
%

\begin{figure}[h]
\begin{center}
\begin{psfrags}
    \psfragscanon
    \psfrag{A}[][][1]{Encoder 1}
    \psfrag{B}[][][1]{Encoder 2}
    \psfrag{C}[][][1]{$Z_{1,i}=g_1(X_{1,i})$}
    \psfrag{D}[][][1]{$Z_{2,i}=g_2(X_{2,i})$}
    \psfrag{E}[][][1]{$P_{Y|X_1,X_2}$}
    \psfrag{F}[][][1]{Decoder}
    \psfrag{G}[r][][1]{$m_2 \in 2^{nR_2}$}
    \psfrag{H}[r][][1]{$m_1 \in 2^{nR_1}$}
    \psfrag{I}[][][1]{$X_{1,i}(m_1,m_{21},Z_2^{i-1})$}
    \psfrag{J}[][][1]{$X_{2,i}(m_2,m_{12},Z_1^{i-1})$}
    \psfrag{K}[][][1]{$Y_i$}
    \psfrag{L}[][][1]{$\hat{m_1},\hat{m_2}$}
    \psfrag{M}[][][1]{$C_{12}$}
    \psfrag{N}[][][1]{$C_{21}$}
\includegraphics[scale = 0.8]{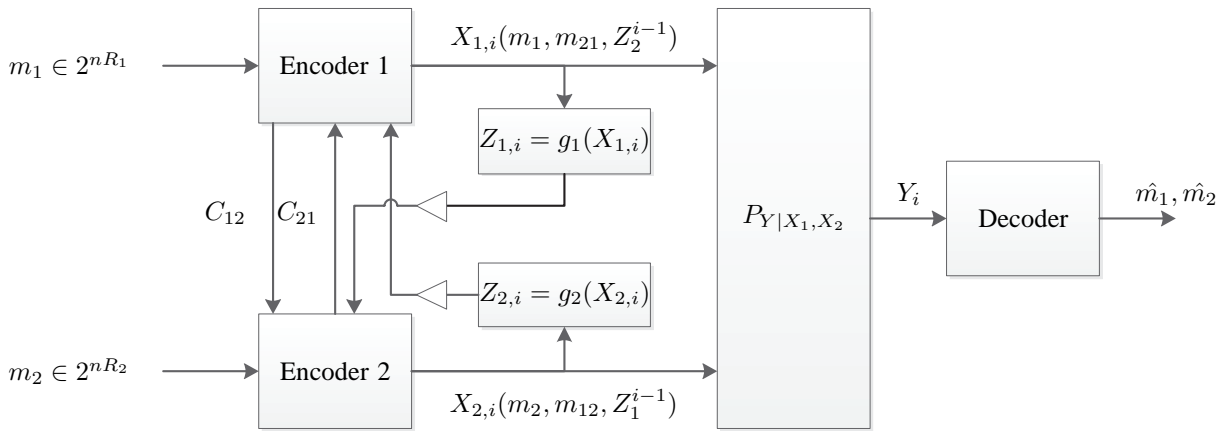}
\caption{MAC with combined cooperation and partial cribbing. Encoder 1 and Encoder 2 obtain messages $M_{21}$ and $M_{12}$ prior to transmission. The cribbing is done strictly causally by both encoders. This setting corresponds to Case A.} \label{fig:main}
\psfragscanoff
\end{psfrags}
\end{center}
\end{figure}
Cooperation and cribbing carry practical implications. In \cite[Chapter 8]{simeone2012cooperative}, Simone et. al. considered cooperative wireless cellular systems and analyzed their performance with separate cooperation and cribbing (referred to as Out-of-Band cooperation and In-Band cooperation, respectively). The results show how cooperation and cribbing separately increase capacity in wireless cellular systems. In the expected 3GPP Release 12, a standard called Proximity Services (ProSE) will be added to the LTE-Advanced "grab bag" of technologies \cite{lin2013overview}. The ProSE protocol will address issues of spectrum utilization, overall throughput, and energy consumption, while enabling new peer to peer and location based applications and services, all of which will be applied using cooperation between "nearby" users in the network. The communication between the users can be attained by using mobile ad hoc networks (Out-of-Band/Cooperation) or by using the same band as the cell sites (In-Band/Cribbing).
Settings of combined cooperation and cribbing considered in this paper give the fundamental limits and insights on how to design optimal coding for communication systems where the users have cognition capabilities and, therefore, "listen" to each other's signals and, in addition, cooperate with each other via dedicated links. We show that combining cribbing and cooperation is straightforward since it does not require any additional auxiliary RV compared with only cribbing or only cooperation.
Therefore, the combination of cooperation and cribbing should be considered in future cooperative wireless cellular systems such as ProSE.\nocite{cao2010cognitive}\nocite{shimonovich2013cognitive}

In this paper, we solve the general model that incorporates both cooperation and partial cribbing.
The capacity regions that were found for cooperation and partial cribbing, separately, in \cite{Willems} and \cite{asnani2013multiple} were constructed using an auxiliary RV, $U$.
That RV signified the information that both encoders share.
In \cite{slepian1973coding}, Slepian and Wolf discovered that the capacity region for the MAC is larger if the encoders share a common message. Therefore, we can refer to the information obtained via cooperation and cribbing as common information shared by both encoders.
One of the results in our work is that the combination of the models does not require an additional auxiliary RV; it is possible to use only one auxiliary RV that represents the common information.
This implies that if for the MAC with partial cribbing we have a "good code", namely, a code that achieves the capacity region, then by performing minor modifications, namely, increasing the common message rate, we can construct a "good code" for the MAC with combined cooperation and partial cribbing.
The coding techniques we use in this paper include block Markov coding (introduced by Willems), joint typicality decoding, backward decoding, and double rate splitting.
Double rate splitting is necessary since we need to split the original message twice; one part will be obtained through the cooperation link and the other part will be obtained using partial cribbing.

Combining cooperation and cribbing was first considered by Bracher and Lapidoth \cite{bracher2012journal} in the context of feedback and state information. However, only strictly-causal perfect cribbing was considered and in our paper we consider partial cribbing both causal and strictly-causal.

After establishing our main results, we present the Gaussian MAC with combined one-sided cooperation and partial cribbing. One can see that an outer bound for the capacity region of this setting is when Encoder 2 knows the message of Encoder 1. Inspired by the work of Asnani et al. \cite{asnani2013multiple} and Bross et al. \cite{bross2008gaussian}, we describe an achievability scheme that coincides with this outer bound in some cases.

Additionaly, we provide a duality between a MAC with a common message, a private message and combined cooperation and cribbing and the rate distortion model known as "Successive Refinement (SR) With Decoder Cooperation" presented in \cite{asnani2013successive}. The decoder cooperation is through a dedicated link and partial cribbing. In this paper we combine both cooperation and partial cribbing in the SR problem and obtain a rate region with only one auxiliary RV.

We go on to study the impact of cooperation and cribbing on state-dependent MACs where the state may provide a refined characterization of the channel, as state-dependent channels are widely studied in the literature.
We address two different state-dependent MACs with cooperation and cribbing (see \cite{bracher2012journal}, \cite{steinberg2012cooperative} for further reading).
The first is a MAC with cooperation and channel state known non-causally at a partially cribbing encoder and at the decoder. In this case we use our results to find a solution with a lone auxiliary RV. Only one auxiliary RV is needed since the purpose of both cooperation and partial cribbing is to generate a common message between the encoders.
The second is a MAC where action-dependent state is known non-causally at a cribbing encoder. Additionally, a one-sided cooperation link is attained at the cribbing encoder. Action-dependent states were introduced by Weissman in \cite{weissman2010capacity}. The action is based on the private message of the cribbing encoder and the message from the cooperation link. In this case, a lone auxiliary RV will not suffice since the purpose of the cooperation is not only to generate a common message but also to contribute to the action and affect the channel state.

The remainder of the paper is organized as follows:
In Section \ref{sec:def+results}, we define the MAC with combined cooperation and partial cribbing and provide its capacity region for two cases. The first is for strictly causal partial cribbing (Case A) and the second is for mixed causal and strictly causal partial cribbing (Case B). Thereafter, the proof for both cases is provided.
In Section \ref{sec:gaussian}, we give an achievability scheme for the Gaussian MAC with combined one-sided cooperation and partial cribbing.
In Section \ref{sec:duality}, we establish the duality between the MAC with combined cooperation and partial cribbing at the encoders and the SR problem with combined cooperation and partial cribbing at the decoders. We show that a lone RV is needed to characterize the rate region of the SR problem.
In Section \ref{sec:state}, we give an example of a state-dependent MAC with combined cooperation and partial cribbing where only one auxiliary RV is needed.
In Section \ref{sec:action}, we study the case of the MAC with an action-dependent state where more than one auxiliary RV is needed and consider its implications.
In Section \ref{sec:conc} we conclude the paper and suggest some research directions that have not yet been solved such as noncausal partial cribbing and combined cooperation and cribbing in the interference channel.

\section{The MAC with Combined Cooperation and Partial Cribbing}\label{sec:def+results}
\subsection{Definitions and Main Results}
Let us consider the MAC with combined cooperation and partial cribbing depicted in Fig. \ref{fig:main}.
The MAC setting consists of two transmitters (encoders) and one receiver (decoder).
Each transmitter $l \in \{1,2\}$ chooses an index $m_l$ uniformly from the set $\{1,\dots,2^{nR_l}\}$ and independently of the other transmitter.
The input to the channel from Encoder $l \in \{1,2\}$ is denoted by $\{X_{l,1},X_{l,2},X_{l,3},\dots\}$.
Encoder 1 and Encoder 2 obtain deterministic functions of the form $Z_{2,i} = g_2(X_{2,i})$ and $Z_{1,i} = g_1(X_{1,i})$, respectively. We address two cases in this setting:
\begin{itemize}
  \item Case A : Both Encoder 1 and Encoder 2 obtain $Z_{2,i}$ and $Z_{1,i}$, respectively, with unit delay.
  \item Case B : Encoder 1 obtains $Z_{2,i}$ with unit delay and Encoder 2 obtains $Z_{1,i}$ without delay.
\end{itemize}
Additionally, Encoder 1 obtains a message $m_{21} \in \{1,\dots,2^{nC_{21}}\}$ from Encoder 2 and Encoder 2 obtains a message $m_{12} \in \{1,\dots,2^{nC_{12}}\}$ from Encoder 1. Both messages are obtained prior to the transmission of $(X_1^n,X_2^n)$ through the channel.
The output of the channel is denoted by $\{Y_1,Y_2,Y_3,\dots\}$.
The channel is characterized by a conditional probability $P(y_i|x_{1,i},x_{2,i})$.
The channel probability does not depend on the time index $i$ and is memoryless, i.e.,
\begin{eqnarray}
P(y_i|x_1^i,x^i_2,y^{i-1}) = P(y_i|x_{1,i},x_{2,i}),\label{eq:DMC}
\end{eqnarray}
where the superscripts denote sequences in the following way: $x^i_l=(x_{l,1},x_{l,2},\dots,x_{l,i}),l\in\{1, 2\}$. Since the settings in this paper do not include feedback from the receiver to the transmitters, i.e., $P(x_{1,i},x_{2,i}|x^{i-1}_1,x^{i-1}_2,y^{i-1})=P(x_{1,i},x_{2,i}|x^{i-1}_1,x^{i-1}_2)$, equation (\ref{eq:DMC}) implies that
\begin{eqnarray}
P(y_i|x_1^n,x^n_2,y^{i-1}) = P(y_i|x_{1,i},x_{2,i}).
\end{eqnarray}
\begin{definition}\label{def:main}
A $(2^{nR_1},2^{nR_2},2^{nC_{12}},2^{nC_{21}},n)$ \textit{code} for the MAC with combined cooperation and partial cribbing, as shown in Fig. \ref{fig:main}, consists at time $i$ of encoding functions at Encoder 1 and Encoder 2
\begin{eqnarray}
f_{12}&:&\{1,\dots,2^{nR_1}\}\mapsto\{1,\dots,2^{nC_{12}}\},\\
f_{21}&:&\{1,\dots,2^{nR_2}\}\mapsto\{1,\dots,2^{nC_{21}}\},\\
f_{1,i}&:&\{1,\dots,2^{nR_1}\}\times\{1,\dots,2^{nC_{21}}\}\times\mathcal{Z}^{i-1}_2 \mapsto  \mathcal{X}_{1,i},\\
f^A_{2,i}&:&\{1,\dots,2^{nR_2}\}\times\{1,\dots,2^{nC_{12}}\}\times\mathcal{Z}^{i-1}_1 \mapsto  \mathcal{X}_{2,i},\\
f^B_{2,i}&:&\{1,\dots,2^{nR_2}\}\times\{1,\dots,2^{nC_{12}}\}\times\mathcal{Z}^{i}_1 \mapsto  \mathcal{X}_{2,i},
\end{eqnarray}
and a decoding function
\begin{eqnarray}
g:\mathcal{Y}^n \mapsto \{1,\dots,2^{nR_1}\}\times\{1,\dots,2^{nR_2}\}.
\end{eqnarray}
The average probability of error for a $(2^{nR_1},2^{nR_2},2^{nC_{12}},2^{nC_{21}},n)$ code is defined as
\begin{eqnarray}
P^{(n)}_e = \frac{1}{2^{n(R_1+R_2)}} \sum_{m_1,m_2} \Pr\{g(Y^n)\ne(m_1,m_2)|(m_1,m_2)\ \text{sent}\}.
\end{eqnarray}
\end{definition}
A rate $(R_1,R_2)$ is said to be \textit{achievable} for the MAC with combined cooperation and partial cribbing if there exists a sequence of $(2^{nR_1},2^{nR_2},2^{nC_{12}},2^{nC_{21}},n)$ codes s.t. $P^{(n)}_e \rightarrow 0$.
The \textit{capacity region} of the MAC is the closure of all achievable rates. The following theorem describes the capacity region of a MAC with combined cooperation and partial cribbing.

Let us define the following regions, $\mathcal{R}^A$ and $\mathcal{R}^B$, that are contained in $ \mathbb{R}_+^2$, namely, contained in the set of nonnegative two-dimensional real numbers.
\begin{eqnarray}
\mathcal{R}^A =\left\{
                 \begin{array}{c}
                   R_1 \leq I(X_1; Y |X_2, Z_1, U)+H(Z_1|U) + C_{12},\\
                   R_2 \leq I(X_2; Y |X_1, Z_2, U)+H(Z_2|U) + C_{21},\\
                   R_1 + R_2 \leq I(X_1,X_2; Y |U,Z_1, Z_2) + H(Z_1, Z_2|U)+C_{12}+C_{21},\\
                   R_1 + R_2 \leq I(X_1,X_2; Y ) $, for$\\
                   P(u)P(x_1|u)\mathbbm{1}_{z_1=f(x_1)}P(x_2|u)\mathbbm{1}_{z_2=f(x_2)}P(y|x_1, x_2).\\
                 \end{array}
               \right\}\label{eq:capacity}.
\end{eqnarray}
The region $\mathcal{R}^B$ is defined with the same set of inequalities as in (\ref{eq:capacity}), but the joint distribution is of the form
\begin{eqnarray}
                   P(u)P(x_1|u)\mathbbm{1}_{z_1=f(x_1)}P(x_2|u, z_1)\mathbbm{1}_{z_2=f(x_2)}P(y|x_1, x_2).
\end{eqnarray}
\begin{theorem}\label{theorem:MAC+crib+coop}{\emph{(Capacity Region of the MAC with Combined Cooperation and Partial Cribbing)}}
The capacity regions of the MAC with combined cooperation and strictly causal (Case A) and mixed strictly causal and causal (Case B) partial cribbing, as described in Def. \ref{def:main}, are $\mathcal{R}^A$ and $\mathcal{R}^B$, respectively.
\end{theorem}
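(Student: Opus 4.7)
The plan is to use a $B$-block Markov scheme with double rate-splitting and backward decoding. Each message is split as $M_l = (M_{l,c}, M_{l,x}, M_{l,p})$ for $l\in\{1,2\}$: $M_{l,c}$ is the piece carried through the cooperation link, subject to $R_{1,c}\le C_{12}$ and $R_{2,c}\le C_{21}$; $M_{l,x}$ is the piece carried through the partial cribbing signal $Z_l$; and $M_{l,p}$ is private and decoded only at the receiver. The auxiliary $U$ plays the role of the common information shared by both encoders in a given block. In block $b$ the encoders share the four indices $(m_{1,c}(b), m_{2,c}(b), m_{1,x}(b-1), m_{2,x}(b-1))$: the first two from the a priori cooperation exchange and the last two from decoding the cribbing signals $Z_1^n(b-1), Z_2^n(b-1)$ observed at the end of block $b-1$. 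Both encoders pick $U^n(b)$ from a common codebook indexed by these four indices. Conditioned on $U^n(b)$, Encoder 1 draws $X_1^n(b)$ as a function of $(U^n(b), m_{1,x}(b), m_{1,p}(b))$ with $Z_1^n(b) = g_1(X_1^n(b))$; in Case A, Encoder 2 proceeds symmetrically, whereas in Case B its codebook is conditioned on $(U^n(b), Z_{1,i}(b))$ to match the factorization $P(x_2|u,z_1)$ via the causal within-block use of $Z_1$. The decoder runs backward decoding from block $B$, jointly decoding $U^n(b)$ together with the private and cribbing indices in each block.

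\textbf{Error analysis and region.} Reliable decoding of each encoder's cribbing message at the partner encoder imposes $R_{l,x}\le H(Z_l|U)$; the cooperation link enforces $R_{l,c}\le C_{lk}$; and joint-typicality decoding at the receiver, once $U^n$ and the cribbing codewords $Z_1^n, Z_2^n$ are pinned down via the backward step, produces MAC-style bounds $R_{1,p}\le I(X_1;Y|X_2,Z_1,U)$, $R_{2,p}\le I(X_2;Y|X_1,Z_2,U)$, and $R_{1,p}+R_{2,p}\le I(X_1,X_2;Y|U,Z_1,Z_2)$; a cut-set-type joint-typicality bound obtained when the common indices of the current block are not yet resolved contributes $R_1+R_2\le I(X_1,X_2;Y)$. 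Fourier--Motzkin elimination over $R_{l,c}, R_{l,x}, R_{l,p}$ subject to $R_l = R_{l,c}+R_{l,x}+R_{l,p}$ then recovers exactly the inequalities defining $\mathcal{R}^A$ and $\mathcal{R}^B$.

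\textbf{Converse.} I would begin with Fano's inequality, $H(M_1,M_2|Y^n)\le n\epsilon_n$, and introduce the auxiliary $U_i := (M_{12}, M_{21}, Z_1^{i-1}, Z_2^{i-1})$. The structure required by the theorem is checked directly: $X_{1,i}$ is a function of $(M_1, U_i)$, and $X_{2,i}$ is a function of $(M_2, U_i)$ in Case A (resp.\ of $(M_2, U_i, Z_{1,i})$ in Case B). The conditional independence $X_{1,i}\perp X_{2,i}\mid U_i$ (and the analogous statement given $(U_i, Z_{1,i})$ for Case B) follows from a product-structure argument on the consistency set $\{(m_1,m_2): U_i(m_1,m_2)=u_i\}$: once $(m_{12}, m_{21}, z_2^{i-1})$ is fixed, the cribbing history $z_1^{i-1}$ is a deterministic function of $m_1$ alone, and symmetrically for $m_2$, so the consistency set factors as a product. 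To single-letterize $R_1$, I would write $nR_1 = H(M_1) = H(M_1, M_{12}|M_2, M_{21})$ and split it as $H(M_{12}|M_2, M_{21}) + H(M_1|M_{12}, M_2, M_{21})$; the first is at most $nC_{12}$, and the second is bounded via Fano by $I(M_1; Y^n, Z_1^n | M_{12}, M_2, M_{21}) + n\epsilon_n$, which in turn splits into a cribbing contribution $\sum_i H(Z_{1,i}|U_i)$ and a channel contribution $\sum_i I(X_{1,i}; Y_i | X_{2,i}, Z_{1,i}, U_i)$. Time-averaging with a uniform time-sharing variable and the standard memoryless-channel Markov arguments yields $R_1\le I(X_1;Y|X_2,Z_1,U)+H(Z_1|U)+C_{12}$; the $R_2$ and sum-rate bounds are derived analogously, while $R_1+R_2\le I(X_1,X_2;Y)$ is the standard MAC cut-set bound with no auxiliary.

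\textbf{Main obstacle.} The most delicate step is the converse's verification that $X_{1,i}$ and $X_{2,i}$ are conditionally independent given $U_i$ (and, in Case B, given $(U_i, Z_{1,i})$), since the cribbing signal naively couples the two encoders' histories. The product-structure argument above resolves this, but it has to be set up so that $U_i$ simultaneously (i) yields the prescribed factorization of the joint distribution and (ii) splits the mutual-information single-letterization into exactly the additive pieces $H(Z_l|U)$ and $C_{lk}$, rather than an inseparable joint entropy. On the achievability side, the corresponding subtlety is to recognize that a single auxiliary $U$ suffices: once double rate-splitting is in place, both the cooperation-carried and cribbing-carried common indices are absorbed into the same $U^n(b)$-codebook, which is precisely what makes the combined region coincide with the single-auxiliary expression of the theorem.
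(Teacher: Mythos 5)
Your proposal is correct and follows essentially the same route as the paper. The converse introduces the same auxiliary $U_i=(M_{12},M_{21},Z_1^{i-1},Z_2^{i-1})$, the same Fano/chain-rule decomposition that peels off $C_{12}$ (resp.\ $C_{21}$) and $H(Z_{1,i}|U_i)$ before the channel term, and the same time-sharing single-letterization; your product-structure argument for $X_{1,i}\perp X_{2,i}\mid U_i$ is an equivalent rewording of the paper's undirected-graph proof of the required Markov chains (and your observation that Case B only requires the weaker Markov chain with $Z_{1,i}$ added to the conditioning matches the paper exactly). On the achievability side the paper presents a two-step reduction --- first the MAC with a common message and partial cribbing from Asnani et al.\ (proved in its appendix by block Markov coding, rate-splitting, and backward decoding), then rate-splitting $\tilde R_0=C_{12}+C_{21}$, $\tilde R_l = R_l-C_{lk}$ to absorb the cooperation links --- whereas you write the same block-Markov construction directly with a three-way split $M_l=(M_{l,c},M_{l,x},M_{l,p})$ and a single $U$-codebook indexed by the cooperation indices of the current block and the cribbing indices of the previous block. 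Unwinding the paper's reduction yields exactly your three-way split, so the two presentations are the same scheme with the same error events and Fourier--Motzkin step; the paper's packaging simply makes the ``combining cooperation and cribbing needs no new auxiliary'' message explicit as a black-box reduction, which your direct construction arrives at implicitly.
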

We note that $H(Z_1|U) = I(Z_1;X_1|U)$; thus the cribbing, $I(Z_1;X_1|U)$, plays the same role (in a quantitative sense) to the cooperation link, $C_{12}$.
Similarly, the role of $I(Z_2;X_2|U)$ to $C_{21}$ and of $I(Z_1,Z_2;X_1,X_2|U)$ to $C_{12}+C_{21}$.
Hence, the important feature is the mutual information of the cooperation, whether the cooperation is done by cribbing or by dedicated links, and they both act in a similar way.

A straightforward result from Theorem \ref{theorem:MAC+crib+coop} is the capacity region for the compound MAC \cite{ahlswede1974capacity} with combined cooperation and partial cribbing. The region and proof for the compound MAC are omitted for brevity.

\subsection{Proof of Theorem \ref{theorem:MAC+crib+coop}}\label{sec:proof}
\subsubsection{Converse}
We will start with the converse of Case A.

\textit{Converse for Case A:}
Given an achievable rate $(R_1,R_2)$ we need to show that there exists a joint distribution of the form $P(u)P(x_1|u)\mathbbm{1}_{z_1=f(x_1)}P(x_2|u)\mathbbm{1}_{z_2=f(x_2)}P(y|x_1, x_2)$ such that the inequalities (\ref{eq:capacity}) are satisfied. Since $(R_1,R_2)$ is an achievable rate-pair, there exists a $(2^{nR_1},2^{nR_2},2^{nC_{12}},2^{nC_{21}},n)$ code with an arbitrarily small error probability $P^{(n)}_e$. By Fano's inequality,
\begin{eqnarray}
H(M_1,M_2|Y^n)\leq n(R_1 + R_2)P^{(n)}_e + H(P^{(n)}_e).
\end{eqnarray}
We set
\begin{equation}
(R_1 + R_2)P^{(n)}_e + \frac{1}{n}H(P^{(n)}_e)\triangleq \epsilon_n,
\end{equation}
where $\epsilon_n \rightarrow 0$ as $P^{(n)}_e\rightarrow 0$. Hence,
\begin{eqnarray}
H(M_1|Y^n,M_2)\leq H(M_1,M_2|Y^n)\leq n\epsilon_n,\\
H(M_2|Y^n,M_1)\leq H(M_1,M_2|Y^n)\leq n\epsilon_n.
\end{eqnarray}
For $R_1$ we have the following:
\begin{eqnarray}
    nR_1 &=& H(M_1)\\
        &\stackrel{(a)}=& H(M_1,M_{12},Z_1^n|M_2)\\
        &\stackrel{(b)}=& H(M_{12}|M_2) + H(Z_1^n|M_{12},M_2) + H(M_1|Z_1^n,M_{12},M_2)\\
        &=& H(M_{12}) + H(Z_1^n|M_{12},M_{21},M_2) + H(M_1|Z_1^n,M_{12},M_2) \notag\\ &&\tab + H(M_1|Y^n,Z_1^n,M_{12},M_2) - H(M_1|Y^n,Z_1^n,M_{12},M_2)\\
        &\stackrel{(c)}\leq& H(M_{12}) + H(Z_1^n|M_{12},M_{21},M_2) + I(M_1;Y^n|Z_1^n,M_{12},M_2,M_{21}) + n\epsilon_n\\
        &\stackrel{(d)}=& H(M_{12}) + \sum_{i=1}^n [H(Z_{1,i}|Z_1^{i-1},M_{12},M_{21},M_2) \notag\\ &&\tab + I(M_1;Y_i|Y^{i-1},Z_1^n,M_{12},M_2,M_{21})] + n\epsilon_n\\
        &\stackrel{(e)}=& H(M_{12}) + \sum_{i=1}^n [H(Z_{1,i}|Z_1^{i-1},Z_2^{i-1},M_{12},M_{21},M_2) \notag\\ &&\tab + I(M_1,X_{1,i};Y_i|Y^{i-1},Z_1^n,Z_2^{i-1},M_{12},M_2,M_{21})] + n\epsilon_n\\
        &\stackrel{(f)}\leq& H(M_{12}) + \sum_{i=1}^n [H(Z_{1,i}|Z_1^{i-1},Z_2^{i-1},M_{12},M_{21}) \notag\\ &&\tab + I(X_{1,i};Y_i|X_{2,i},Z_1^i,Z_2^{i-1},M_{12},M_{21})] + n\epsilon_n,
\end{eqnarray}
where (a) follows since messages $M_1$ and $M_2$ are independent and since $(M_{12},Z_1^n)=f(M_1,M_2)$, (b) and (d) follow from the chain rule, (c) follows from Fano's inequality and because $M_{21}$ is a function of $M_2$, (e) follows since $Z_2^{i-1}$ is a function of $(M_{12},M_2)$ and $X_{1,i}$ is a function of $(M_1,M_{21})$, and step (f) follows since conditioning reduces entropy and from the Markov chain $Y_i-(X_{1,i},X_{2,i},M_{12},M_{21},Z_1^i,Z_2^{i-1})-(M_1,M_2,Y^{i-1})$. From the definition of a RV
\begin{eqnarray}
    U_i\triangleq(Z_1^{i-1},Z_2^{i-1},M_{12},M_{21}),
\end{eqnarray}
we obtain
\begin{eqnarray}
    R_1&\leq& C_{12} + \frac{1}{n}\sum_{i=1}^n [H(Z_{1,i}|U_i) + I(X_{1,i};Y_i|X_{2,i},Z_{1,i},U_i)] + \epsilon_n. \label{R_1}
\end{eqnarray}
Similarly to (\ref{R_1}), we obtain
\begin{eqnarray}
    R_2&\leq& C_{21} + \frac{1}{n}\sum_{i=1}^n [H(Z_{2,i}|U_i) + I(X_{2,i};Y_i|X_{1,i},Z_{2,i},U_i)] + \epsilon_n.
\end{eqnarray}
Now, consider
\begin{eqnarray}
    n(R_1+R_2) &=& H(M_1,M_2)\\
        &\stackrel{(a)}=& H(M_1,M_2,Z_1^n,Z_2^n,M_{12},M_{21})\\
        &\stackrel{(b)}=& H(M_{12}) + H(M_{21}|M_{12}) + H(Z_1^n,Z_2^n|M_{12},M_{21}) \notag\\ &&\tab + H(M_1,M_2|Z_1^n,Z_2^n,M_{12},M_{21})\\
        &\stackrel{(c)}\leq& H(M_{12}) + H(M_{21}) + H(Z_1^n,Z_2^n|M_{12},M_{21}) \notag\\ &&\tab + I(M_1,M_2;Y^n|Z_1^n,Z_2^n,M_{12},M_{21}) + n\epsilon_n\\
        &\stackrel{(d)}\leq& nC_{12} + nC_{21} +  \sum_{i=1}^n [H(Z_{1,i},Z_{2,i}|Z_1^{i-1},Z_2^{i-1},M_{12},M_{21}) \notag\\ &&\tab + I(M_1,M_2;Y_i|Y^{i-1},Z_1^n,Z_2^n,M_{12},M_{21})] + n\epsilon_n\\
        &\stackrel{(e)}=& nC_{12} + nC_{21} + \sum_{i=1}^n [H(Z_{1,i},Z_{2,i}|Z_1^{i-1},Z_2^{i-1},M_{12},M_{21}) \notag\\ &&\tab + I(M_1,X_{1,i},M_2,X_{2,i};Y_i|Y^{i-1},Z_1^n,Z_2^n,M_{12},M_{21})] + n\epsilon_n\\
        &\stackrel{(f)}=& nC_{12} + nC_{21} + \sum_{i=1}^n [H(Z_{1,i},Z_{2,i}|Z_1^{i-1},Z_2^{i-1},M_{12},M_{21}) \notag\\ &&\tab + I(X_{1,i},X_{2,i};Y_i|Z_1^i,Z_2^i,M_{12},M_{21})] + n\epsilon_n,
\end{eqnarray}
where (a) follows from the fact that $(M_{12},M_{21},Z_1^n,Z_2^n)=f(M_1,M_2)$, (b) and (d) follow from the chain rule, (c) follows from Fano's inequality and because $M_{21}$ is independent of $M_{12}$, (e) follows from the fact that $(X_{1,i},X_{2,i})=f(M_1,M_2)$, and step (f) follows from the Markov chain $Y_i-(X_{1,i},X_{2,i},Z_1^i,Z_2^i,M_{12},M_{21})-(M_1,M_2,Y^{i-1})$. From the definition of the RV $U$, we obtain
\begin{eqnarray}
    R_1+R_2&\leq& C_{12} + C_{21} + \frac{1}{n}\sum_{i=1}^n [H(Z_{1,i},Z_{2,i}|U_i) + I(X_{1,i},X_{2,i};Y_i|Z_{1,i},Z_{2,i},U_i)] + \epsilon_n.
\end{eqnarray}
Furthermore, consider
\begin{eqnarray}
    n(R_1+R_2) &=& H(M_1,M_2)\\
        &=& H(M_1,M_2) + H(M_1,M_2|Y^n) - H(M_1,M_2|Y^n)\\
        &\stackrel{(a)}\leq& I(M_1,M_2;Y^n) + n\epsilon_n\\
        &\stackrel{(b)}=& I(X_1^n,X_2^n;Y^n) + n\epsilon_n\\
        &\stackrel{(c)}=& \sum_{i=1}^n I(X_1^n,X_2^n;Y_i|Y^{i-1}) + n\epsilon_n\\
        &\stackrel{(d)}=& \sum_{i=1}^n I(X_{1,i},X_{2,i};Y_i) + n\epsilon_n,
\end{eqnarray}
where (a) follows from Fano's inequality, (b) follows from the fact that $(X_1^n,X_2^n)$ is a deterministic function of $(M_1,M_2)$ and from the Markov chain $Y^n-(X_1^n,X_2^n)-(M_1,M_2)$, (c) follows from the chain rule, and step (d) follows from the memoryless property of the channel. Thus we obtain
\begin{eqnarray}
        R_1+R_2 \leq \frac{1}{n}\sum_{i=1}^n I(X_{1,i},X_{2,i};Y_i) + \epsilon_n.
\end{eqnarray}

Finally, we will prove the following Markov chains:
\begin{itemize}
\item $Z_{2,i} - U_i - Z_{1,i}$ - We will prove this graphically as in \cite[Section II]{permuter2010two}. Using the undirected graph in Fig. \ref{fig:markov-A}, we can see that the Markov Chain $Z_{2,i} - (M_{12},M_{21},Z_1^{i-1},Z_2^{i-1}) - Z_{1,i}$ holds since we cannot get from node $Z_{2,i}$ to node $Z_{1,i}$ without going through nodes $(M_{12},M_{21},Z_1^{i-1},Z_2^{i-1})$.
\begin{figure}[t]
    \begin{center}
        \begin{psfrags}
            \psfragscanon
            \psfrag{A}[][][1]{$M_1$}
            \psfrag{B}[][][1]{$M_{21}$}
            \psfrag{C}[][][1]{$X_{1,i}$}
            \psfrag{D}[][][1]{$Z_1^{i-1}$}
            \psfrag{E}[][][1]{$M_{12}$}
            \psfrag{F}[][][1]{$Z_2^{i-1}$}
            \psfrag{G}[][][1]{$M_2$}
            \psfrag{H}[][][1]{$X_{2,i}$}
            \psfrag{I}[][][1]{$Z_{1,i}$}
            \psfrag{J}[][][1]{$Z_{2,i}$}
            \includegraphics[scale = 0.8]{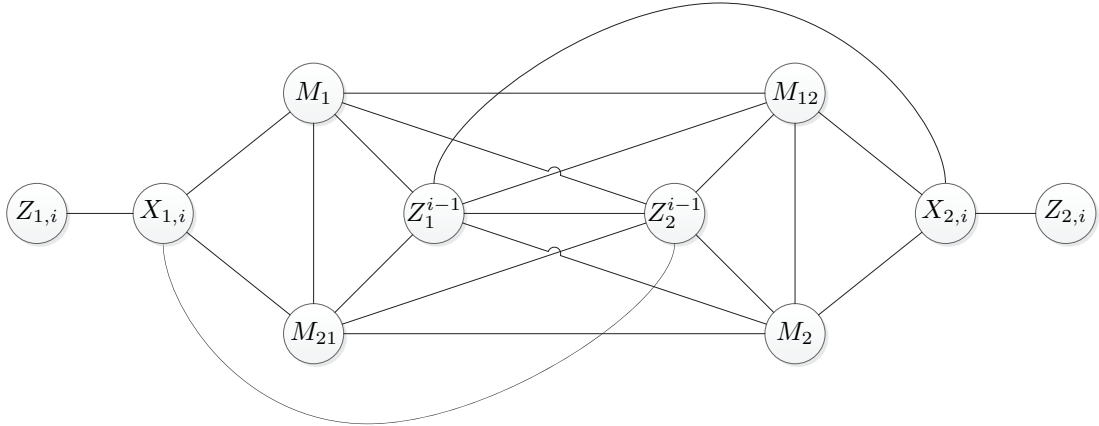}
            \caption{Proof of the Markov Chain $X_{2,i} - (M_{12},M_{21},Z_1^{i-1},Z_2^{i-1}) - X_{1,i}$ using the undirected graphical technique \cite[Section II]{permuter2010two}. This graph corresponds to the joint distribution $P(m_1)P(m_2)P(m_{12}|m_1)P(m_{21}|m_2)\prod_{k=1}^{i-1}P(z_{1,k}|m_1,m_{21},z_2^{k-1})P(z_{2,k}|m_2,m_{12},z_1^{k-1})P(x_{1,i}|m_{21},m_1,z_2^{i-1})$ $P(x_{2,i}|m_{12},m_2,z_1^{i-1})P(z_{1,i}|x_{1,i})P(z_{2,i}|x_{2,i})$.} \label{fig:markov-A}
            \psfragscanoff
        \end{psfrags}
    \end{center}
    \vspace{3mm}
\end{figure}
\item $X_{1,i} - (U_i,Z_{1,i}) - Z_{2,i}$ - Using the undirected graph in Fig. \ref{fig:markov-A}, we can see that the Markov Chain $X_{1,i} - (M_{12},M_{21},Z_1^{i},Z_2^{i-1}) - Z_{2,i}$ holds since we cannot get from node $X_{1,i}$ to node $Z_{2,i}$ without going through nodes $(M_{12},M_{21},Z_1^{i},Z_2^{i-1})$.
\item $X_{2,i} - (U_i,Z_{2,i}) - X_{1,i}$ - Using the undirected graph in Fig. \ref{fig:markov-A}, we can see that the Markov Chain $X_{2,i} - (M_{12},M_{21},Z_1^{i-1},Z_2^{i}) - X_{1,i}$ holds since we cannot get from node $X_{2,i}$ to node $X_{1,i}$ without going through nodes $(M_{12},M_{21},Z_1^{i-1},Z_2^{i})$.
\item $Y_i - (X_{1,i},X_{2,i}) - (Z_{1,i},Z_{2,i},U_i)$ - Follows since the channel output at time $i$ depends on the history $(X_1^i,X_2^i)$ only through $(X_{1,i},X_{2,i})$.
\end{itemize}
Finally, let $Q$ be an RV independent of $(X_1^n,X_2^n,Y^n)$ and uniformly distributed over the set $\{1,2,3,\dots,n\}$. We define the RVs $U\triangleq(Q,U_Q), X_1\triangleq X_{1,Q}, X_2\triangleq X_{2,Q}$, and $Y\triangleq Y_Q$ to obtain the region given in (\ref{eq:capacity}). This completes the converse for Case A.\hfill $\blacksquare$

\textit{Converse for Case B:} We repeat the same approach as for Case A, except that in the final step we need to show the Markov chain $X_{2,i}-(U_i, Z_{1,i}, Z_{2,i})-X_{1,i}$ rather than $X_{2,i}-(U_i, Z_{2,i})-X_{1,i}$ as in Case A. Since for Case A $X_{2,i} - (M_{12},M_{21},Z_1^{i-1},Z_2^{i}) - X_{1,i}$ holds, then $X_{2,i} - (M_{12},M_{21},Z_1^{i},Z_2^{i}) - X_{1,i}$ also holds.\hfill $\blacksquare$

\subsubsection{Achievability}

\textit{Achievability for Case A:} To prove the achievability of the capacity region, we need to show that for a fixed distribution of the form $P(u)P(x_1|u)\mathbbm{1}_{z_1=f(x_1)}P(x_2|u)\mathbbm{1}_{z_2=f(x_2)}P(y|x_1, x_2)$ and for $(R_1,R_2)$ that satisfy the inequalities in (\ref{eq:capacity}), there exists a sequence of $(2^{nR_1},2^{nR_2},2^{nC_{12}},2^{nC_{21}},n)$ codes for which $P^{(n)}_e\rightarrow0$ as $n\rightarrow\infty$.

The idea behind this proof is to convert the cooperation problem into a setting that corresponds to the MAC with a common message and partially cribbing encoders considered in \cite{asnani2013multiple} and rely on its capacity region to show that the cooperation capacity region is indeed achievable.
This is done by sharing as much as possible of the original private messages, $(m_1,m_2)$, through the communication links in order to create a common message;
the unshared parts of the original messages serve as the new private messages.
By doing so, the coding scheme of the setting with a common message can be employed.
The capacity region found in \cite{asnani2013multiple} for the MAC with a common message and partially cribbing encoders is
\begin{eqnarray}
    \tilde{R_1} &\leq& H(Z_1|U) + I(X_1; Y |X_2, Z_1, U), \notag\\
    \tilde{R_2} &\leq& H(Z_2|U) + I(X_2; Y |X_1, Z_2, U), \notag\\
    \tilde{R_1} + \tilde{R_2} &\leq& I(X_1,X_2; Y |U,Z_1, Z_2) + H(Z_1, Z_2|U), \notag\\
    \tilde{R_0} + \tilde{R_1} + \tilde{R_2} &\leq& I(X_1,X_2; Y).\label{eq:capacity-mac-ce}
\end{eqnarray}
The achievability proof for the MAC with a common message and partially cribbing encoders is available in Appendix \ref{appendix:achieve}.
Let us define the following rates
\begin{eqnarray}
    \tilde{R_0} = C_{12} + C_{21},\\
    \tilde{R_1} = R_1 - C_{12},\\
    \tilde{R_2} = R_2 - C_{21},
\end{eqnarray}
i.e., we defined the common message as the messages that are transmitted through the cooperation links.
With respect to these definitions, the inequalities in (\ref{eq:capacity-mac-ce}) can be rewritten as
\begin{eqnarray}
    R_1 - C_{12} &\leq& H(Z_1|U) + I(X_1; Y |X_2, Z_1, U), \notag\\
    R_2 - C_{21} &\leq& H(Z_2|U) + I(X_2; Y |X_1, Z_2, U), \notag\\
    (R_1 - C_{12}) + (R_2 - C_{21}) &\leq& I(X_1,X_2; Y |U,Z_1, Z_2) + H(Z_1, Z_2|U), \notag\\
    (C_{12} + C_{21}) + (R_1 - C_{21}) + (R_2 - C_{21}) &\leq& I(X_1,X_2; Y ), \label{eq:capacity_old}
\end{eqnarray}
which is equivalent to the region in (\ref{eq:capacity}).\hfill $\blacksquare$

\textit{Achievability for Case B:} The achievability of case B is very similar to that of case A, only the codewords of $X_2$ need to be generated according to Shannon's strategy (or a code-tree) rather than codewords. This is due to the fact that $Z_{1,i}$ is known causally and $X_2$ is generated according to a distribution $P(x_2|u, z_1, z_2)$.
\hfill $\blacksquare$

\section{Gaussian MAC with Combined Cooperation and Quantized Cribbing}\label{sec:gaussian}
We now consider a Gaussian MAC, i.e., $Y = X_1+X_2+W$ where $W\thicksim N(0,N)$, depicted in Fig. \ref{fig:gaussian}.
\begin{figure}[h]
    \begin{center}
        \begin{psfrags}
            \psfragscanon
            \psfrag{A}[][][1]{Encoder 1}
            \psfrag{B}[][][1]{Encoder 2}
            \psfrag{C}[][][1]{Quantizer}
            \psfrag{D}[][][1]{$Z_i$}
            \psfrag{F}[][][1]{Decoder}
            \psfrag{G}[r][][1]{$m_2 \in 2^{nR_2}$}
            \psfrag{H}[r][][1]{$m_1 \in 2^{nR_1}$}
            \psfrag{I}[][][1]{$X_{1,i}(m_1)$}
            \psfrag{J}[][][1]{$X_{2,i}(m_2,m_{12},Z^{i})$}
            \psfrag{K}[][][1]{$Y_i$}
            \psfrag{L}[][][1]{$\hat{m_1},\hat{m_2}$}
            \psfrag{M}[][][1]{$C_{12}$}
            \psfrag{N}[][][1]{$W\thicksim N(0,N)$}
            \includegraphics[scale = 0.8]{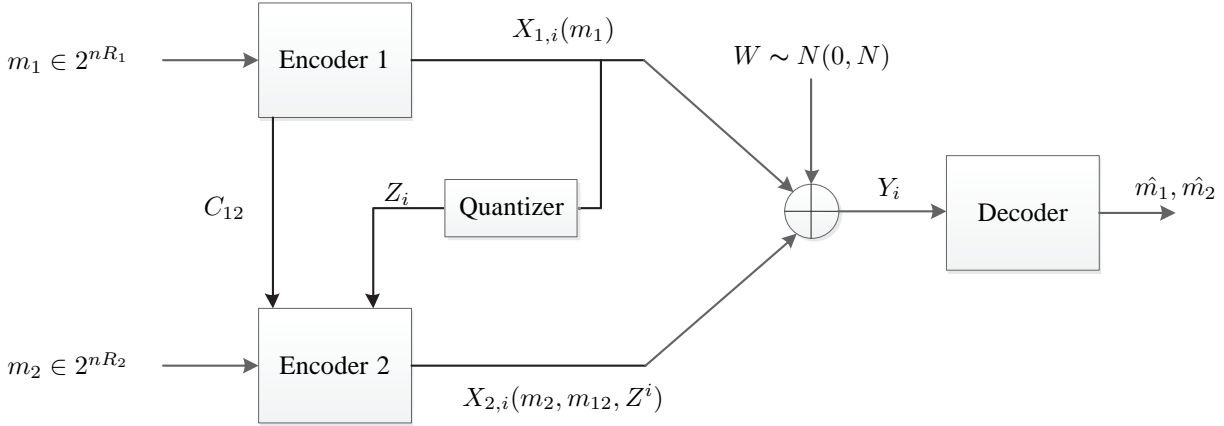}
            \caption{Gaussian MAC with one-sided combined cooperation and quantized cribbing. Message $M_{12}$ is sent prior to transmission and $Z_i$ is known causally at Encoder 2.} \label{fig:gaussian}
            \psfragscanoff
        \end{psfrags}
    \end{center}
\end{figure}

We assume that the power constraints over the outputs of Encoder 1 and Encoder 2 are $P_1$ and $P_2$, respectively. Prior to transmission, Encoder 1 sends a message $M_{12}$ to Encoder 2. In addition, Encoder 2 cribs causally from Encoder 1 and obtains $Z_i$, which is a scalar quantization of the signal $X_{1,i}$.
First, we look at an inner bound to the capacity region, which is the Gaussian MAC without cooperation and cribbing. The capacity region in this case is
\begin{eqnarray}
  R_1 &\leq& \frac{1}{2}\log(1+\frac{P_1}{N}),\notag\\
  R_2 &\leq& \frac{1}{2}\log(1+\frac{P_2}{N}),\notag\\
  R_1 + R_2 &\leq& \frac{1}{2}\log(1+\frac{P_1+P_2}{N}).
\end{eqnarray}
On the other hand, an outer bound is obtained when there is full cooperation or perfect cribbing, i.e., Encoder 2 obtains the message $m_1$ before sending $X_2$. The capacity region in this case is
\begin{eqnarray}
  R_2 &\leq& \frac{1}{2}\log(1+\frac{P_2}{N}(1-\rho^2)),\notag\\
  R_1 + R_2 &\leq& \frac{1}{2}\log(1+\frac{P_1+2\rho\sqrt{P_1P_2}+P_2}{N}).
\end{eqnarray}
We now present an achievability scheme inspired by the work of Asnani et al. \cite{asnani2013multiple} and Bross et al. \cite{bross2008gaussian}. In \cite{asnani2013multiple}, an achievable region for the Gaussian MAC with quantized cribbing has been described, whereas in \cite{bross2008gaussian}, an achievable region for the Gaussian MAC with a common message was provided. In our work, we combine the two achievability schemes. We set the following distributions:
\begin{eqnarray}
  X_1 &=& \lambda U + X^\prime_1,\\
  X_2 &=& \bar{\lambda} U + X^\prime_2,
\end{eqnarray}
where
\begin{eqnarray}
  U \thicksim N(0,P_0) &,& P_0 = \left(\sqrt{\bar{\beta_1}P_1} + \sqrt{\bar{\beta_2}P_2}\right)^2, \notag\\
  P_{X_2^{\prime}|Z,U}(x^{\prime}_2|z,u) &=& \bar{\rho}P_{X^{\prime\prime}_2}(x^{\prime}_2) + \rho P_{X_1^{\prime}|Z,U}(x^{\prime}_2|z,u),\notag\\
  X^\prime_1 &\thicksim& N(0,\beta_1P_1),\notag\\
  X^{\prime\prime}_2 &\thicksim& N(0,\beta_2P_2),\notag\\
  \lambda = \sqrt{\frac{\bar{\beta_1}P_1}{P_0}}&,& \bar{\lambda} = 1 - \lambda,\notag\\
  \beta_1,\beta_2,\rho &\in& [0,1].
\end{eqnarray}
The intuition behind the choice of these distributions is as follows. The common message, signified as $U$, is obtained via the rate-limited link and the two encoders cooperate to send that common message. Since the cooperation and cribbing are one-sided, only Encoder 2 can help Encoder 1 send his private message. The idea behind the choice of $P_{X_2^{\prime}|Z,U}(x^{\prime}_2|z,u)$ is that Encoder 2 will send $\bar{\rho}$ of the time his private message and $\rho$ of the time the estimation of Encoder 1's private message, $X_1^\prime$, conditioned on the cribbing $Z$ and the cooperation $U$.\begin{figure}[!t]
    \begin{center}
        \begin{psfrags}
            \psfragscanon
            \psfrag{A}[][][0.7]{1-bit quantization, $C_{12}=0.4$}
            \psfrag{B}[][][0.7]{$R_1$}
            \psfrag{C}[][][0.7]{$R_2$}
            \psfrag{H}[][][0.7]{2-bit quantization, $C_{12}=0.4$}
            \psfrag{I}[][][0.7]{$R_1$}
            \psfrag{J}[][][0.7]{$R_2$}
            \includegraphics[scale = 0.6]{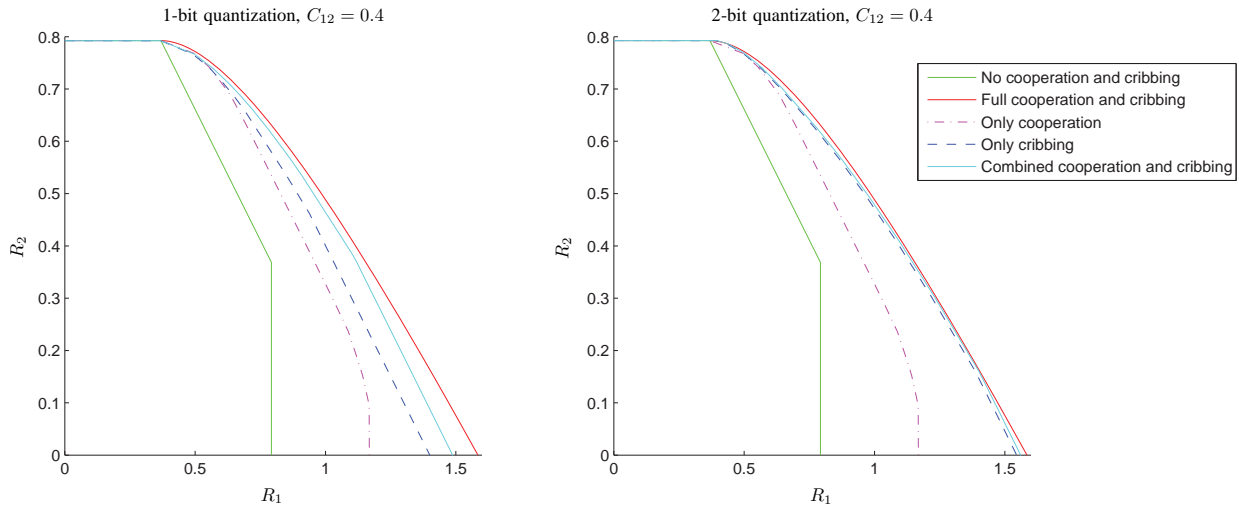}
            \caption{Achievable regions for the Gaussian MAC with combined cooperation and quantized cribbing.} \label{fig:quantized-graph}
            \psfragscanoff
        \end{psfrags}
    \end{center}
    \vspace{-8mm}
\end{figure}
Notice that under these definitions, by setting the power constraints as $P_1 = P_2 = 1$, the power constraints on both encoders hold. Evaluation of region $\mathcal{R}_B$ with $Z_2$ constant and $N = \frac{1}{2}$ is depicted in Fig. \ref{fig:quantized-graph}; achievable regions for 1-bit and 2-bit quantizations are illustrated where $C_{12}=0.4$. When only one bit of quantization is available (LHS of Fig. \ref{fig:quantized-graph}), the region of combined cooperation and cribbing encloses special cases of cribbing \cite{asnani2013multiple} and cooperation \cite{bross2008gaussian}. However, when two bits of quantization are available (RHS of Fig. \ref{fig:quantized-graph}), combining cooperation and cribbing does not significantly increase the region. This is because the difference between the achievable region with a 2-bit quantizer ($C_{12}=0$) and full cooperation is negligible.

\section{Dual Rate Distortion Setting}\label{sec:duality}
The information-theoretic duality between rate distortion and channel coding was first introduced by Shannon in \cite{shannon1959coding}. An important duality between the Wyner-Ziv rate distortion problem \cite{wyner1976rate} and the Gelfand-Pinsker channel coding problem \cite{gel1980coding} was pointed out by Cover and Chiang in \cite{cover2002duality} (see \cite{pradhan2003duality} and \cite{gupta2011operational} for further reading). In some cases, the corner points of a rate distortion region and its dual channel coding capacity region are the same. This property can help one find a region based on its dual region.
In general, there is no solution for the dual setting of the MAC. However, the rate distortion dual of the MAC with a common message has been solved. In \cite{asnani2013successive}, Asnani et. al. considered the SR problem with decoder cooperation and its channel coding duals.
In this section we show how our methods of combined cooperation and cribbing can be implemented in the rate distortion dual. We establish the duality between the MAC with a common message, a private message, and combined cooperation and partial cribbing and the SR problem with combined cooperation and partial cribbing at the decoder. As expected, the rate region for the rate distortion dual consists of a single RV. \begin{table}[!h]
\vspace{-5mm}
\begin{center}
  \caption{Principles of duality between channel coding and source coding}\label{table:principles}
  \begin{tabular}{|| c | c ||}
    \hline \hline
    \large{Channel coding} & \large{Source coding} \\ \hline\hline
    Channel decoder & Source encoder \\ \hline
    Encoder 1 input & Decoder 1 input \\
    $(M_0,M_1)\in\{1,\dots,2^{n(R_0+R_1)}\}$ & $(T_0,T_1)\in\{1,\dots,2^{n(R_0+R_1)}\}$ \\ \hline
    Encoder 1 output $X_1\in\mathcal{X}_1$ & Decoder 1 output $\hat{X}_1\in\hat{\mathcal{X}}_1$ \\ \hline
    Encoder 2 input & Decoder 2 input \\
    $M_0\in\{1,\dots,2^{nR_0}\}$, & $T_0\in\{1,\dots,2^{nR_0}\}$, \\
    $Z_i(X_{1,i}),M_{12}(M_0,M_1)$ & $Z_i(\hat{X}_{1,i}),T_{12}(T_0,T_1)$ \\ \hline
    Encoder 2 output $X_{2}\in\mathcal{X}_2$ & Decoder 2 output $\hat{X}_{2}\in\hat{\mathcal{X}}_2$ \\ \hline
    Decoder input $Y\in\mathcal{Y}$ & Encoder input $X\in\mathcal{X}$ \\ \hline
    Decoder output & Encoder output \\
    $(\hat{M}_0,\hat{M}_1)\in\{1,\dots,2^{n(R_0+R_1)}\}$ & $(T_0,T_1)\in\{1,\dots,2^{n(R_0+R_1)}\}$ \\ \hline
    Encoding function $f_1:\mathcal{M}_0\times\mathcal{M}_1\mapsto\mathcal{X}_1^n$ & Decoding function $g_1:\mathcal{T}_0\times\mathcal{T}_1\mapsto\hat{\mathcal{X}}^n$ \\ \hline
    Causal cribbing encoding function & Causal cribbing decoding function \\
    $f_2:\mathcal{M}_0\times\mathcal{M}_{12}\times\mathcal{Z}^i\mapsto\mathcal{X}_{2,i}$ & $g:\mathcal{T}_0\times\mathcal{T}_{12}\times\mathcal{Z}^i\mapsto\hat{\mathcal{X}}_{2,i}$ \\ \hline
    Decoding function & Encoding function\\
    $g:\mathcal{Y}^n\mapsto\mathcal{M}_0\times\mathcal{M}_1$ & $f_0:\mathcal{X}^n\mapsto\mathcal{T}_0$, $f_1:\mathcal{X}^n\mapsto\mathcal{T}_1$ \\ \hline
    Auxiliary RV $U$ & Auxiliary RV $U$ \\ \hline
    Joint distribution $p(u,x_1,x_2,y)$ & Joint distribution $p(u,\hat{x}_1,\hat{x}_2,x)$ \\ \hline
    Constraint: $p(y|x_1,x_2)$ is fixed & Constraint: $p(x)$ is fixed \\ \hline
    \hline
  \end{tabular}
\end{center}
\end{table}Table \ref{table:principles} describes the principles of duality between channel coding and source coding.
We start by defining the channel coding problem and state its capacity region. We continue by solving its rate distortion dual, i.e., the SR problem with combined cooperation and partial cribbing at the decoder. We end this section by pointing out the dualities between these two settings and show how the corner points of the two regions are the same.
\vspace{-2mm}
\subsection{The MAC with a Common Message, a Private Message, and Combined Cooperation and Partial Cribbing}
Let us define the setting depicted in Fig. \ref{fig:dual_channel}.\vspace{-2mm}
\begin{figure}[!h]
\vspace{-5mm}
\begin{center}
\begin{psfrags}
    \psfragscanon
    \psfrag{A}[][][1]{Encoder 1}
    \psfrag{B}[][][1]{Encoder 2}
    \psfrag{C}[][][1]{$Z_{i}=g(X_{1,i})$}
    \psfrag{E}[][][1]{$P_{Y|X_1,X_2}$}
    \psfrag{F}[][][1]{Decoder}
    \psfrag{G}[r][][1]{$m_0 \in 2^{nR_0}$}
    \psfrag{H}[r][][1]{$m_1 \in 2^{nR_1}$}
    \psfrag{I}[][][1]{$X_{1,i}(m_0,m_1)$}
    \psfrag{J}[][][1]{$X_{2,i}(m_0,m_{12},Z^{i-1})$}
    \psfrag{K}[][][1]{$Y_i$}
    \psfrag{L}[][][1]{$\hat{m_0},\hat{m_1}$}
    \psfrag{M}[][][1]{$C_{12}$}
\includegraphics[scale = 0.8]{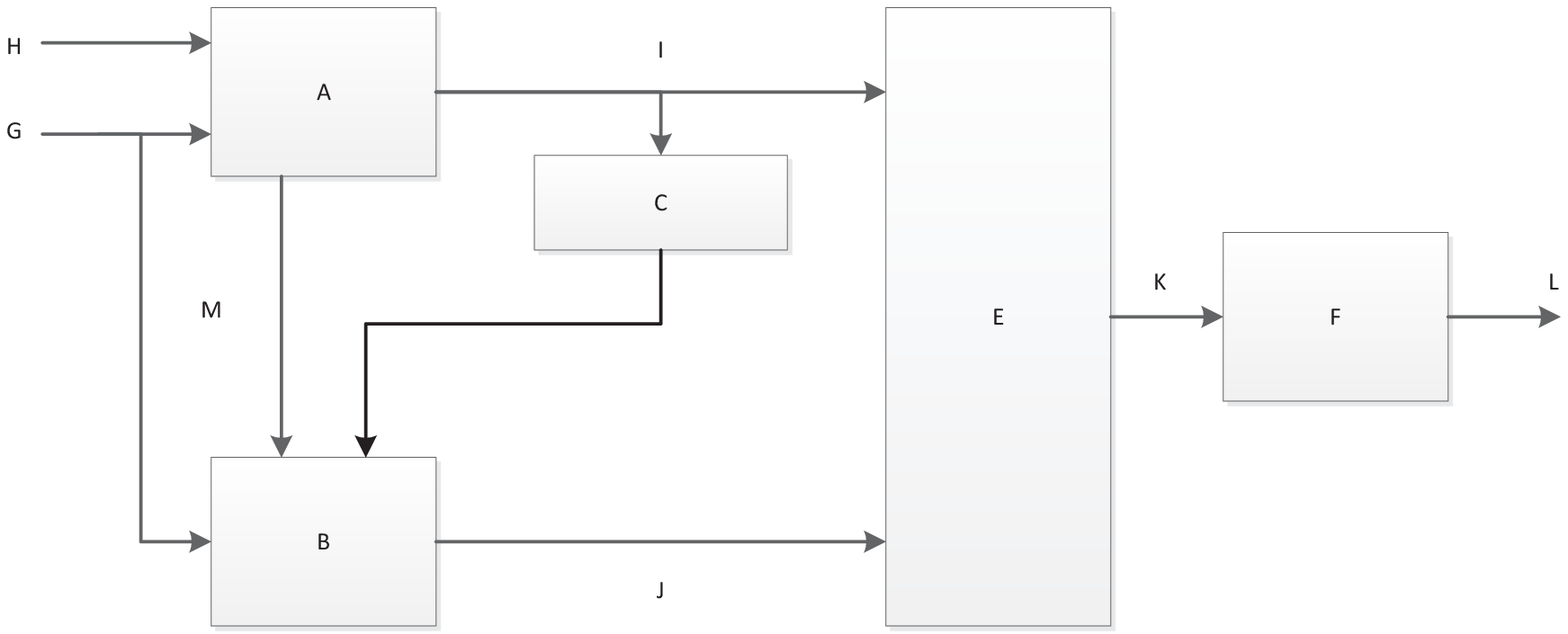}
\caption{MAC with common message, private message, and combined cooperation and cribbing. Encoder 2 obtains message $M_{12}$ prior to transmission. The cribbing is done causally.} \label{fig:dual_channel}
\psfragscanoff
\end{psfrags}
\end{center}
\vspace{-8mm}
\end{figure}
\begin{definition}\label{def:dual_channel}
A $(2^{nR_0},2^{nR_1},2^{nC_{12}},n)$ \textit{code} for the MAC with a common message, a private message, and combined cooperation and partial cribbing, as shown in Fig. \ref{fig:dual_channel}, consists at time $i$ of encoding functions at Encoder 1 and Encoder 2
\begin{eqnarray}
f_{12}&:&\{1,\dots,2^{nR_0}\}\times\{1,\dots,2^{nR_1}\}\mapsto\{1,\dots,2^{nC_{12}}\},\\
f_{1}&:&\{1,\dots,2^{nR_0}\}\times\{1,\dots,2^{nR_1}\} \mapsto  \mathcal{X}_{1}^n,\\
f_{2,i}&:&\{1,\dots,2^{nR_0}\}\times\{1,\dots,2^{nC_{12}}\}\times\mathcal{Z}^{i} \mapsto  \mathcal{X}_{2,i},
\end{eqnarray}
and a decoding function
\begin{eqnarray}
g:\mathcal{Y}^n \mapsto \{1,\dots,2^{nR_0}\}\times\{1,\dots,2^{nR_1}\}.
\end{eqnarray}
The average probability of error for a $(2^{nR_0},2^{nR_1},2^{nC_{12}},n)$ code is defined as
\begin{eqnarray}
P^{(n)}_e = \frac{1}{2^{n(R_0+R_1)}} \sum_{m_0,m_1} \Pr\{g(Y^n)\ne(m_0,m_1)|(m_0,m_1)\ \text{sent}\}.
\end{eqnarray}
\end{definition}
Let us define the following region and $\mathcal{R}_{MAC}$ that is contained in $ \mathbb{R}_+^2$, namely, contained in the set of nonnegative two-dimensional real numbers.
\begin{eqnarray}
\mathcal{R}_{MAC} =\left\{
                 \begin{array}{c}
                   R_1 \leq I(X_1; Y |Z, U)+ H(Z|U)+ C_{12},\\
                   R_0 + R_1 \leq I(X_1,U; Y ) $, for$\\
                   P(u)P(x_1|u)\mathbbm{1}_{z=f(x_1)}P(x_2|u, z)P(y|x_1, x_2).\\
                 \end{array}
               \right\}\label{eq:capacity_dual_channel}.
\end{eqnarray}
\begin{theorem}\label{theorem:dual_channel}{\emph{(Capacity Region of the MAC with Combined Cooperation and Partial Cribbing)}}
The capacity region of the MAC with common message, private message, and combined cooperation and causal partial cribbing, as described in Def. \ref{def:dual_channel}, is $\mathcal{R}_{MAC}$.
\end{theorem}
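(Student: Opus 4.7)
The plan is to mirror the proof of Theorem~\ref{theorem:MAC+crib+coop}, adapted to the one-sided setting with a common message. For the converse I would define the auxiliary
\begin{equation*}
U_i \triangleq (M_0, M_{12}, Z^{i-1}),
\end{equation*}
which is exactly the information available to Encoder~2 about the cribbing past and the cooperation link at time~$i$. The key structural fact is that, under causal cribbing, $X_{2,i}$ is a deterministic function of $(U_i, Z_i)$. Starting from $nR_1 = H(M_1|M_0)$ (using independence of $M_0$ and $M_1$), I would expand $H(M_1, M_{12}, Z^n | M_0)$ by the chain rule into $H(M_{12}) + H(Z^n | M_0, M_{12}) + H(M_1 | M_0, M_{12}, Z^n)$, apply Fano's inequality to the last term, bound $H(M_{12}) \leq nC_{12}$, and single-letterize the middle term as $\sum_i H(Z_i | U_i)$. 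Introducing $X_{1,i}$ and invoking the memoryless, no-feedback channel (so that $Y_i$ conditioned on $(X_{1,i},X_{2,i})$ is independent of everything else) together with the fact that $X_{2,i}$ is determined by $(U_i,Z_i)$, which collapses $I(X_{1,i};Y_i|X_{2,i},Z_i,U_i)$ to $I(X_{1,i};Y_i|Z_i,U_i)$, produces the $R_1$ bound. For the sum rate, the standard chain
\begin{equation*}
n(R_0+R_1) \leq I(M_0,M_1; Y^n) + n\epsilon_n \leq \sum_i I(X_{1,i},X_{2,i};Y_i) + n\epsilon_n,
\end{equation*}
combined with the observation that $X_{2,i}$ is a deterministic function of $(U_i, X_{1,i})$, yields $\sum_i I(X_{1,i}, U_i; Y_i)$, which after a standard time-sharing argument collapses to the desired $I(X_1,U;Y)$.

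For achievability I would recycle the reduction used in Theorem~\ref{theorem:MAC+crib+coop}: absorb the cooperation link into the common message by rate splitting. Specifically, I would partition $M_1 = (M_{1a},M_{1b})$ with rates $C_{12}$ and $R_1-C_{12}$, respectively, and set $M_{12} = M_{1a}$, so that Encoder~2 decodes $(M_0, M_{1a})$ noiselessly from the cooperation link. The resulting problem becomes a MAC with a common message of rate $R_0 + C_{12}$, a single private message $M_{1b}$ at Encoder~1 of rate $R_1 - C_{12}$, no private message at Encoder~2, and causal one-sided partial cribbing, which is a specialization (with $\tilde R_2 = 0$ and $Z_2$ constant) of the MAC-with-common-message-and-partial-cribbing inner bound of Asnani et~al.~\cite{asnani2013multiple} in its Case~B form. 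Instantiating that inner bound and substituting the split rates back into its inequalities yields precisely the two bounds defining $\mathcal{R}_{MAC}$ in~(\ref{eq:capacity_dual_channel}).

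The main obstacle is handling the causality of the cribbing consistently on both sides. On the converse side, the collapses $I(X_1;Y|X_2,Z,U) = I(X_1;Y|Z,U)$ and $I(X_1,X_2;Y) = I(X_1,U;Y)$ both rely on $X_{2,i}$ being deterministic in $(U_i,Z_i)$, which is the operational counterpart of the Shannon-strategy kernel $P(x_2|u,z)$ appearing in the region's joint distribution. On the achievability side, Encoder~2's codebook must be generated as a code tree (Shannon strategy) over $P(x_2|u,z)$, paralleling the Case~B remark in the achievability of Theorem~\ref{theorem:MAC+crib+coop}. I would then verify the required factorization $P(u)P(x_1|u)\mathbbm{1}_{z=g(x_1)}P(x_2|u,z)P(y|x_1,x_2)$ by an undirected-graph argument analogous to Fig.~\ref{fig:markov-A}, and close by introducing a time-sharing random variable $Q$ to pass from per-symbol bounds to single-letter ones.
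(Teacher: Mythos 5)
Your proposal is correct and follows exactly the route the paper intends: the paper omits the proof of Theorem~\ref{theorem:dual_channel}, stating only that it follows from the same methods as Theorem~\ref{theorem:MAC+crib+coop}, and your adaptation (taking $U_i=(M_0,M_{12},Z^{i-1})$, exploiting that $X_{2,i}$ is determined by $(U_i,Z_i)$ under causal cribbing to collapse the mutual-information terms, and recycling the rate-splitting reduction $\tilde R_0 = R_0+C_{12}$, $\tilde R_1 = R_1-C_{12}$, $\tilde R_2=0$, $Z_2$ constant into the Case~B region of~(\ref{eq:capacity-mac-ce})) is precisely that. Both the converse single-letterization and the achievability reduction are sound, and the Shannon-strategy remark for causal cribbing matches the paper's Case~B treatment.
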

Since the proof for Theorem \ref{theorem:dual_channel} can be obtained by using the same methods described in Subsection \ref{sec:proof}, it is omitted for brevity. We go on to define the SR setting with combined cooperation and partial cribbing at the decoders.

\subsection{The Successive Refinement with Combined Cooperation and Partial Cribbing at the Decoders}
We address the rate distortion setting depicted in Fig. \ref{fig:dual_source}.
\begin{figure}[h]
\begin{psfrags}
    \psfragscanon
    \psfrag{A}[][][1]{Decoder 1}
    \psfrag{B}[][][1]{Decoder 2}
    \psfrag{C}[][][1]{$Z_{i}=g(\hat{X}_{1,i})$}
    \psfrag{F}[][][1]{Encoder}
    \psfrag{L}[][][1]{$T_0 \in 2^{nR_0}$}
    \psfrag{H}[][][1]{$T_1 \in 2^{nR_1}$}
    \psfrag{I}[l][][1]{$\hat{X}_{1}^n(T_0,T_1)$}
    \psfrag{J}[l][][1]{$\hat{X}_{2,i}(T_0,T_{12},Z^{i-1})$}
    \psfrag{K}[][][1]{$X^n$}
    \psfrag{M}[][][1]{$C_{12}$}
\includegraphics[scale = 0.8]{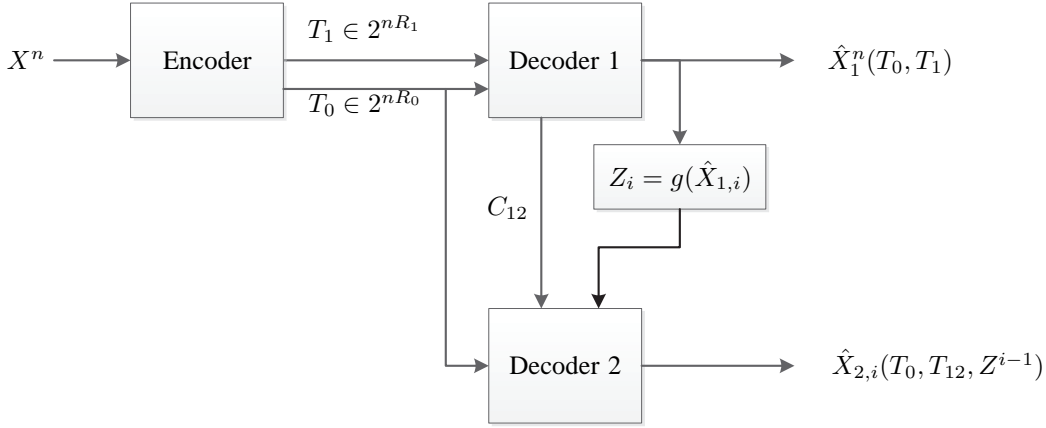}
\caption{SR with combined cooperation and partial cribbing at the decoders. The cribbing is done causally.} \label{fig:dual_source}
\psfragscanoff
\end{psfrags}
\end{figure}

The source sequence $X_i\in \mathcal{X}, i = 1, 2,\dots$ is drawn i.i.d. $\thicksim p(x)$. Let $\hat{\mathcal{X}}_1$ and $\hat{\mathcal{X}}_2$ denote the reconstruction alphabets, and $d_i: \mathcal{X}\times\hat{\mathcal{X}}_i\mapsto\left[0,\infty\right)$, for $i = 1, 2$ denote single letter distortion measures. Distortion between sequences is defined in the usual way;
\begin{equation}
    d_i(x^n,\hat{x}_i^n)=\frac{1}{n}\sum_{j=1}^nd_i(x_j,\hat{x}_{i,j}), \text{ for }i=1,2.
\end{equation}
\begin{definition}\label{def:dual_source}
A $(2^{nR_0},2^{nR_1},2^{nC_{12}},n)$ \textit{rate-distortion code} for the SR with combined cooperation and partial cribbing at the decoders, as shown in Fig. \ref{fig:dual_source}, consists at time $i$ of encoding functions at Encoder 1 and Encoder 2
\begin{eqnarray}
f_{0}&:& \mathcal{X}^n \mapsto \{1,\dots,2^{nR_0}\},\\
f_{1}&:& \mathcal{X}^n \mapsto \{1,\dots,2^{nR_1}\},\\
f_{12}&:&\{1,\dots,2^{nR_0}\}\times\{1,\dots,2^{nR_1}\} \mapsto \{1,\dots,2^{nC_{12}}\},\\
\end{eqnarray}
and a decoding function
\begin{eqnarray}
g_{1}&:& \{1,\dots,2^{nR_0}\}\times\{1,\dots,2^{nR_1}\} \mapsto \hat{\mathcal{X}}_{1}^n,\\
g_{2,i}&:&\{1,\dots,2^{nR_0}\}\times\{1,\dots,2^{nC_{12}}\}\times\mathcal{Z}^{i} \mapsto  \mathcal{X}_{2,i}.
\end{eqnarray}
\end{definition}
A rate $(R_0,R_1,D_1,D_2)$ is said to be \textit{achievable} for the SR with combined cooperation and partial cribbing at the decoders if $\forall \epsilon >0$ and a $(2^{nR_0},2^{nR_1},2^{nC_{12}},n)$ rate-distortion code the expected distortion for the decoders is bounded as,
\begin{equation}
    E\left[d_i(X^n,\hat{X}_i^n)\right]\leq D_i+\epsilon\text{, for $i=1,2$.}
\end{equation}
The \textit{rate-distortion region} $\mathcal{R}(D_1,D_2)$ is defined as the closure of the set of all achievable rate-distortion tuples $(R_0,R_1,D_1,D_2)$.

Let us define the following region $\mathcal{R}_{SR}(D_1,D_2)$ that is contained in $ \mathbb{R}_+^2$, namely, contained in the set of nonnegative two-dimensional real numbers.
\begin{eqnarray}
\mathcal{R}_{SR}(D_1,D_2) =\left\{
                 \begin{array}{c}
                   R_0 \geq I(X;Z, U)-H(Z|U)-C_{12},\\
                   R_0 + R_1 \geq I(\hat{X}_1,U; X )$, for$\\
                   P(x,x_1,u)\mathbbm{1}_{z=f(x_1),x_2=f(u,z_1)}$ s.t.$\\
                   E\left[d_i(X^n,\hat{X}_i^n)\right]\leq D_i + \epsilon$, for $i=1,2.
                 \end{array}
               \right\}\label{eq:capacity_dual_source}.
\end{eqnarray}
\begin{theorem}\label{theorem:dual_source}{\emph{(Rate Distortion Region of the Successive Refinement with Combined Cooperation and Partial Cribbing Decoders)}}
The rate-distortion region for the SR with combined cooperation and partial cribbing, as defined in Def. \ref{def:dual_source}, is $\mathcal{R}_{SR}(D_1,D_2)$.
\end{theorem}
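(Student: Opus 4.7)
The proof follows the same two-part (achievability plus converse) structure as Theorem~\ref{theorem:MAC+crib+coop}, translated through the channel/source dictionary of Table~\ref{table:principles}. The scheme is essentially a Wyner--Ziv-style binning of a single auxiliary $U$ augmented by a conditional codebook for $\hat X_1$, where the binning rate at Decoder~2 is reduced by both the cooperation rate $C_{12}$ and the cribbed side information $Z^n=g(\hat X_1^n)$. The converse is a standard rate-distortion single-letterization using an appropriately chosen auxiliary $U_i$, making use of the deterministic structure $Z_i=g(\hat X_{1,i})$ and $\hat X_{2,i}=f(U_i,Z_i)$.

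\textbf{Achievability.} Fix a joint distribution $p(x)p(u|x)p(\hat x_1|u,x)$ together with the deterministic mappings $z=g(\hat x_1)$ and $\hat x_2=f(u,z)$ that meet $D_1$ and $D_2$. First I would generate a $U$-layer codebook of $2^{n(I(X;U)+\epsilon)}$ i.i.d.\ sequences $u^n(k)\sim\prod p(u)$, randomly partitioned into $2^{n(R_0+C_{12})}$ bins with labels $(T_0(k),T_{12}(k))$. For each $u^n(k)$, I would generate a conditional $\hat X_1$-layer of $2^{n(I(X;\hat X_1|U)+\epsilon)}$ codewords $\hat x_1^n(k,\ell)\sim\prod p(\hat x_1|u_i(k))$, with $\ell$ serving as $T_1$. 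The encoder uses joint typicality encoding to select $(k^\star,\ell^\star)$ with $(x^n,u^n(k^\star),\hat x_1^n(k^\star,\ell^\star))$ jointly typical (covering lemma yields the sum-rate condition $R_0+R_1\ge I(\hat X_1,U;X)$) and transmits $(T_0,T_1)$ to Decoder~1 and $(T_0,T_{12})$ to Decoder~2 (via the rate-limited link, operated by Decoder~1 after it reconstructs $\hat X_1^n$). Decoder~2 then performs Wyner--Ziv decoding: within the bin indexed by $(T_0,T_{12})$ it looks for the unique $u^n$ jointly typical with the cribbed $Z^n$. A packing analysis shows this succeeds when $R_0+C_{12}\ge I(X;U,Z)-H(Z|U)$, matching the stated constraint. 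Finally $\hat x_{2,i}=f(u_i,z_i)$ is produced symbolwise; the typical average lemma yields the distortion bounds.

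\textbf{Converse.} Given any achievable tuple, Fano-style arguments are replaced by rate constraints on the messages. For the sum bound, I would use $n(R_0+R_1)\ge H(T_0,T_1)\ge I(T_0,T_1;X^n)=\sum_i I(T_0,T_1;X_i\mid X^{i-1})$, then identify the auxiliary $U_i\triangleq(T_0,T_{12},Z^{i-1},X^{i-1})$ so that after single-letterization this becomes $\sum_i I(U_i,\hat X_{1,i};X_i)$. For the $R_0$ bound, I would expand $n(R_0+C_{12})\ge H(T_0,T_{12})\ge I(T_0,T_{12};X^n,Z^n)-H(Z^n\mid T_0,T_{12},U^n)$ and decompose using the chain rule and memorylessness to extract the per-letter inequality $R_0+C_{12}\ge I(X;U,Z)-H(Z|U)$. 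Introducing a time-sharing variable $Q$ uniform on $\{1,\dots,n\}$ and setting $U=(Q,U_Q)$, $X=X_Q$, $\hat X_1=\hat X_{1,Q}$, $Z=Z_Q$ produces the desired single-letter form; the deterministic functional structure $Z_i=g(\hat X_{1,i})$ and $\hat X_{2,i}=f(U_i,Z_i)$ is preserved automatically by the coding definition.

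\textbf{Main obstacle.} The delicate step is the Wyner--Ziv-type packing at Decoder~2: the saving of exactly $H(Z|U)$ in the cooperation rate is non-trivial because $Z$ is generated through the test channel $p(\hat x_1|u,x)$ and is not in a clean Markov chain with $(X,U)$, so care is needed when counting jointly typical interlopers in a bin. A related subtlety is Decoder~2's causality on $\hat X_{2,i}$; since the Wyner--Ziv step decodes $u^n$ only after observing the full $Z^n$, I would embed the scheme in a $B$-block Markov construction with backward decoding so that $u^n$ of block $b$ is used to produce $\hat X_2^n$ in block $b+1$, at a vanishing rate penalty $O(1/B)$. Mirroring these choices in the converse when selecting $U_i$, and verifying that the induced joint distribution indeed has the form $p(x,\hat x_1,u)\mathbbm{1}_{z=g(\hat x_1),\hat x_2=f(u,z)}$, constitutes the technical core of the argument.
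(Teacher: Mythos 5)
Your proposal takes a genuinely different and more laborious route than the paper, and it contains gaps on both sides of the argument.

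\emph{Achievability.} The paper does not construct a scheme from scratch. It invokes the achievability of Asnani et al. for the SR problem with cribbing only, which gives $\tilde R_0\ge I(X;Z,U)-H(Z|U)$ and $\tilde R_0+\tilde R_1\ge I(\hat X_1,U;X)$, and then applies the rate-splitting substitution $\tilde R_0=R_0+C_{12}$, $\tilde R_1=R_1-C_{12}$. You instead try to rebuild the scheme using a single $U$-layer plus a conditional $\hat X_1$-layer and Wyner--Ziv-style bin decoding at Decoder~2, and you yourself flag the ``non-trivial'' $H(Z|U)$ saving as the obstacle. That flag is warranted, because your scheme gives the wrong rate. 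A standard packing analysis for finding the unique $u^n$ in a bin that is jointly typical with the observed $Z^n$ requires $R_0+C_{12}\ge I(X;U)-I(U;Z)$, which differs from the target $I(X;Z,U)-H(Z|U)$ by exactly $I(Z;U)-H(Z|U,X)$ and is therefore not the same quantity in general. The $-H(Z|U)$ saving requires that the codebook carry a separate $z^n$-layer conditioned on $u^n$ (at rate $\approx I(X;Z|U)<H(Z|U)$), so that Decoder~2 decodes by exact matching of the cribbed sequence against the $z^n$-codewords rather than by joint-typicality side-information binning. Your scheme, which only has $U$- and $\hat X_1$-layers, does not exploit this. Your block-Markov patch for causality also does not help here: the source is i.i.d.\ across blocks, so a $u^n$ decoded from block $b$ gives no information about the source in block $b+1$ and cannot be used to produce $\hat X_2^n$ there.

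\emph{Converse.} The paper takes $U_i\triangleq(Z^{i-1},T_0,T_{12})$. You include $X^{i-1}$ in $U_i$, which breaks the first bound. The $R_0$ derivation produces the entropy term $H(Z_i\mid T_0,T_{12},Z^{i-1})$; if $U_i$ also contains $X^{i-1}$ then $H(Z_i\mid U_i)\le H(Z_i\mid T_0,T_{12},Z^{i-1})$, while simultaneously $I(X_i;Z_i,U_i)\ge I(X_i;Z^i,T_0,T_{12})$. Both inequalities point the wrong way, so the chain $nR_0+nC_{12}\ge\sum_i\bigl[I(X_i;Z^i,T_0,T_{12})-H(Z_i\mid T_0,T_{12},Z^{i-1})\bigr]$ does not imply $nR_0+nC_{12}\ge\sum_i\bigl[I(X_i;Z_i,U_i)-H(Z_i\mid U_i)\bigr]$ with your $U_i$. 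The fix is simply to drop $X^{i-1}$; it is only needed transiently in the chain-rule steps of the sum bound, not in the definition of the auxiliary. Your intermediate expansion $n(R_0+C_{12})\ge I(T_0,T_{12};X^n,Z^n)-H(Z^n\mid T_0,T_{12},U^n)$ is also ill-posed, since $U^n$ is invoked before it is defined. The paper instead starts from $nR_0\ge H(T_0)$, introduces $Z^n$ and $T_{12}$ via the chain rule, peels off $-H(T_{12})\ge -nC_{12}$, and only then identifies $U_i=(Z^{i-1},T_0,T_{12})$.

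In short: your high-level picture (rate-splitting absorbs $C_{12}$ into the common layer, and $Z$ reduces the cost of conveying $U$ to Decoder~2) is correct, but the specific codebook and decoding rule you propose do not produce the $-H(Z|U)$ term, and your choice of single-letter auxiliary invalidates the $R_0$ converse bound as written.
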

\begin{proof}

\textit{Achievability:} The achievability for this model is the same as in \cite{asnani2013successive} where the achievable region was
\begin{eqnarray}
    \tilde{R}_0 &\geq& I(X;Z, U)-H(Z|U),\notag\\
    \tilde{R}_0 + \tilde{R}_1 &\geq& I(\hat{X}_1,U; X ).
\end{eqnarray}
In our case, we use rate splitting and set the following rates
\begin{eqnarray}
  \tilde{R}_0 &=& R_0 + C_{12},\\
  \tilde{R}_1 &=& R_0 - C_{12}.
\end{eqnarray}
By setting these rates we obtain the region in (\ref{eq:capacity_dual_source}).

\textit{Converse:} Assume we have a $(2^{nR_0},2^{nR_1},2^{nC_{12}},n)$ rate distortion code s.t. a $(R_0,R_1,D_1,D_2)$ tuple is feasible. For the first inequality
\begin{eqnarray}
  nR_0 &\geq& H(T_0) \\
    &\stackrel{(a)}=& H(Z^n,T_0,T_{12}) - H(Z^n|T_{12},T_0) - H(T_{12}|T_0) \\
    &\stackrel{(b)}\geq& I(X^n;Z^n,T_0,T_{12}) - H(Z^n|T_{12},T_0) - H(T_{12}) \\
    &\stackrel{(c)}\geq& \sum_{i=1}^n [I(X_i;Z^n,T_0,T_{12}|X^{i-1}) - H(Z_i|T_{12},T_0,Z^{i-1})] - nC_{12} \\
    &=& \sum_{i=1}^n [I(X_{i};Z^n,T_0,T_{12},X^{i-1}) - H(Z_i|T_{12},T_0,Z^{i-1})] - nC_{12}\\
    &\stackrel{(d)}\geq& \sum_{i=1}^n [I(X_{i};Z^i,T_0,T_{12}) - H(Z_i|T_{12},T_0,Z^{i-1})] - nC_{12}\\
    &\stackrel{(e)}=& \sum_{i=1}^n [I(X_{i};Z_i,U_i) - H(Z_i|U_i)] - nC_{12}\\
    &=& n \sum_{i=1}^n \frac{1}{n} [I(X_{i};Z_i,U_i) - H(Z_i|U_i)] - nC_{12}\\
    &\stackrel{(f)}=& n [I(X_{Q};Z_Q,U_Q|Q) - H(Z_Q,U_Q|Q) - C_{12}]\\
    &=& n [I(X_{Q};Z_Q,U_Q,Q) - H(Z_Q,U_Q|Q) - C_{12}]\\
    &\geq& n [I(X_{Q};Z_Q,U_Q) - H(Z_Q,U_Q) - C_{12}],
\end{eqnarray}
where (a) and (c) follow from the chain rule, (b) follows since conditionality reduces entropy, (d) follows since $X_{i}$ is independent of $X^{i-1}$, (e) follows by setting the random variable $U_i = (Z^{i-1},T_0,T_{12})$, and (f) follows by defining the RV $Q$ independent of $X^n$ and uniformly distributed over the set $\{1,2,3,\dots,n\}$.
For the second inequality
\begin{eqnarray}
  n(R_0 + R_1) &\geq& H(T_0,T_1) \\
    &\stackrel{(a)}=& I(X^n;T_0,T_1) \\
    &=& \sum_{i=1}^n I(X_i;T_0,T_1|X^{i-1})\\
    &\stackrel{(b)}=& \sum_{i=1}^n I(X_i;T_0,T_1,X^{i-1})\\
    &\stackrel{(c)}=& \sum_{i=1}^n I(X_i;T_0,T_1,\hat{X}_{1,i},Z^{i-1},T_{12},X^{i-1})\\
    &\geq& \sum_{i=1}^n I(X_i;\hat{X}_{1,i},Z^{i-1},T_0,T_{12})\\
    &=& \sum_{i=1}^n I(X_i;\hat{X}_{1,i},U_i)\\
    &=& n I(X_Q;\hat{X}_{1,Q},U_Q),
\end{eqnarray}
where (a) follows since $(T_0,T_1)$ is a function of $X^n$, (b) follows since since $X_{1,i}$ is independent of $X_1^{i-1}$, and (c) follows since $(\hat{X}_{1,i},Z^{i-1},T_{12})$ is a function of $(T_0,T_1)$. We complete the proof by noting that the joint distribution of $(X_Q, \hat{X}_{1,Q}, Z_Q, U_Q)$ is the same as that of $(X, \hat{X}_{1}, Z, U)$.
\end{proof}

\subsection{Duality Results Between the MAC and the Successive Refinement settings with combined cooperation and partial cribbing}
After establishing Theorems \ref{theorem:dual_channel} and \ref{theorem:dual_source}, we now point out the dualities between the two settings. The similarity between the rate regions of the two settings is evident.
Let us consider the corner points depicted in Table \ref{table-corner_points} and Fig. \ref{fig:dualityRegion}.
\begin{table}[!h]
\vspace{-4mm}
\begin{center}
  \caption{Corner points of MAC and SR}\label{table-corner_points}
  \begin{tabular}{| c | c |}
    \hline
    & $(R_0,R_1)$ \\ \hline
    MAC & $(I(Y;Z, U)-H(Z|U)-C_{12},I(Y;X_1|Z,U)+H(Z|U)+C_{12})$ \\
    (Theorem \ref{theorem:dual_channel}) & $(I(Y;X_1,U),0)$ \\ \hline
    SR & $(I(X;Z, U)-H(Z|U)-C_{12},I(X;\hat{X}_1|Z,U)+H(Z|U)+C_{12})$ \\
    (Theorem \ref{theorem:dual_source}) & $(I(X;\hat{X}_1,U),0)$ \\
    \hline
  \end{tabular}
\end{center}
\vspace{-4mm}
\end{table}
\begin{figure}[!h]
\vspace{-4mm}
\begin{center}
\begin{psfrags}
    \psfragscanon
    \psfrag{A}[r][][1]{$A$}
    \psfrag{B}[][][1]{$I(Y;Z, U)-H(Z|U)-C_{12}$}
    \psfrag{C}[][][1]{$I(Y;X_1,U)$}
    \psfrag{D}[r][][1]{$B$}
    \psfrag{E}[][][1]{$I(X;Z, U)-H(Z|U)-C_{12}$}
    \psfrag{F}[][][1]{$I(X;\hat{X}_1,U)$}
    \psfrag{G}[][][1]{$R_1$}
    \psfrag{H}[][][1]{$R_1$}
    \psfrag{I}[][][1]{$R_0$}
    \psfrag{J}[][][1]{$R_0$}
    \psfrag{T}[][][1]{MAC with combined}
    \psfrag{R}[][][1]{cooperation and cribbing}
    \psfrag{S}[][][1]{SR with combined}
    \psfrag{U}[][][1]{cooperation and cribbing}
    \includegraphics[scale = 1.1]{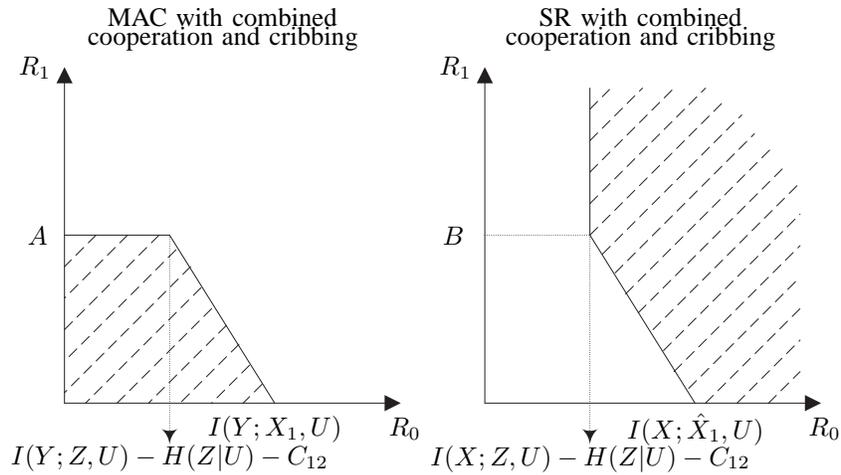}
\caption{Capacity region of the MAC and rate-distortion region of SR with combined cooperation and cribbing where $A$ is $I(Y;X_1|Z,U)+H(Z|U)+C_{12}$ and $B$ is $I(X;\hat{X}_1|Z,U)+H(Z|U)+C_{12}$.} \label{fig:dualityRegion}
\psfragscanoff
\end{psfrags}
\end{center}
\vspace{-4mm}
\end{figure}
One can see that the corner points are the same if we apply the duality rules $\hat{X}_1\leftrightarrow X_1$, $\hat{X}_2\leftrightarrow X_2$, $X\leftrightarrow Y$ and $\geq \leftrightarrow \leq$.
We notice that only one RV was used to describe the common message in both settings. This means that our methods of combining cooperation and cribbing can also be implemented in source coding problems. In the next section we address another case where only one RV is needed to describe both cooperation and cribbing.

\section{State-Dependent MAC with Combined Cooperation and Partial Cribbing}\label{sec:state}
Following our results from Section \ref{sec:def+results}, we now show that our methods can also be implemented for a state-dependent channel where still only one auxiliary RV is needed. Let us consider the MAC with cooperation and non-causal state known at a partially cribbing encoder and at the decoder, depicted in Fig. \ref{fig:coop+cribbing+decoder}.
\begin{figure}[h]
\begin{center}
\begin{psfrags}
    \psfragscanon
    \psfrag{A}[][][1]{Encoder 1}
    \psfrag{B}[][][1]{Encoder 2}
    \psfrag{D}[][][1]{$S^n$}
    \psfrag{E}[][][1]{$P_{Y|X_1,X_2,S}$}
    \psfrag{F}[][][1]{Decoder}
    \psfrag{G}[][][1]{$m_2 \in 2^{nR_2}$}
    \psfrag{H}[][][1]{$m_1 \in 2^{nR_1}$}
    \psfrag{I}[][][1]{$X_{1,i}(m_1,m_{21})$}
    \psfrag{J}[][][1]{$X_{2,i}(m_2,m_{12},Z^{i-1},S^n)$}
    \psfrag{K}[][][1]{$Y_i$}
    \psfrag{L}[][][1]{$(\hat{m_1},\hat{m_2})$}
    \psfrag{M}[][][1]{$C_{21}$}
    \psfrag{N}[][][1]{$C_{12}$}
    \psfrag{O}[][][1]{$Z_i=f(X_{1,i})$}
    \includegraphics[scale = 0.8]{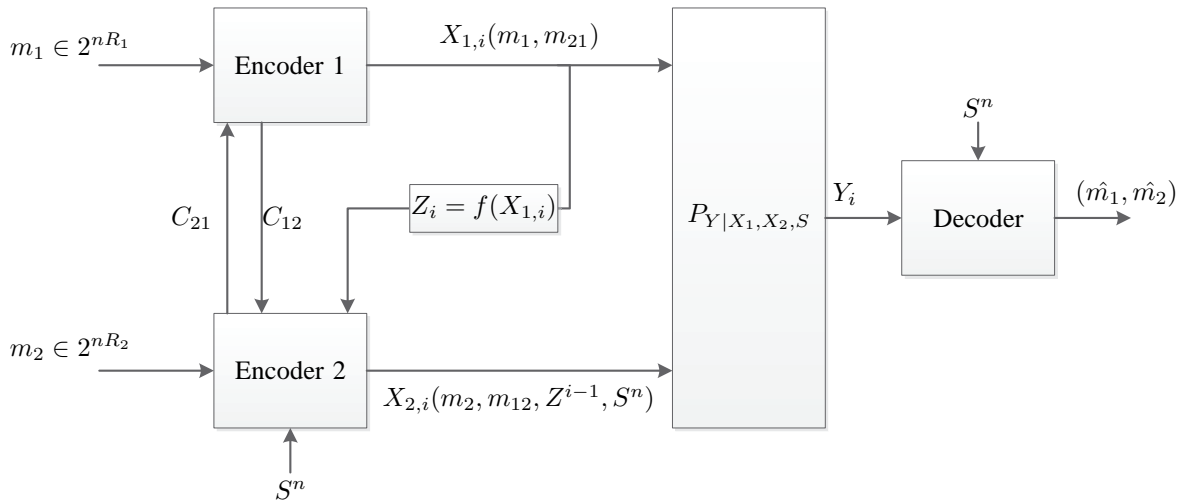}
\caption{The MAC with cooperation and state known at a partially cribbing encoder and at the decoder. Encoder 1 and Encoder 2 obtain messages $M_{21}$ and $M_{12}$ prior to transmission. The partial cribbing is done strictly causally only by Encoder 2. This setting corresponds to the strictly causal case.} \label{fig:coop+cribbing+decoder}
\psfragscanoff
\end{psfrags}
\end{center}
\end{figure}

We note that message $M_{12}$ is sent prior to message $M_{21}$.
For this model we address two different cases:
\begin{itemize}
  \item The strictly causal case (sc) : Encoder 2 obtains $Z_{i}$ with unit delay.
  \item The causal case (c) : Encoder 2 obtains $Z_{i}$ without delay.
\end{itemize}
The channel probability does not depend on the time index $i$ and is memoryless, i.e.,
\begin{equation}
P(y_i|x_1^i,x^i_2,s^i,y^{i-1}) = P(y_i|x_{1,i},x_{2,i},s_i)\label{eq:DMCS}
\end{equation}
\begin{definition}\label{coop+cribbing+decoder}
A $(2^{nR_1},2^{nR_2},2^{nC_{12}},2^{nC_{21}},n)$ \textit{code} for the MAC with cooperation and non-causal state known at a partially cribbing encoder and at the decoder, as shown in Fig. \ref{fig:coop+cribbing+decoder}, consists at time $i$ of encoding functions at Encoder 1 and Encoder 2.
\begin{eqnarray}
f_{12}&:&\{1,\dots,2^{nR_1}\} \mapsto  \{1,\dots,2^{nC_{12}}\}, \label{eq:en-relation-state1}\\ f_{21}&:&\{1,\dots,2^{nR_2}\}\times\mathcal{S}^{n}\times\{1,\dots,2^{nC_{12}}\} \mapsto  \{1,\dots,2^{nC_{21}}\}, \\
f_{1}&:&\{1,\dots,2^{nC_{21}}\}\times\{1,\dots,2^{nR_1}\} \mapsto  \mathcal{X}_{1}^n, \\ \label{eq:f1-state}
f^{sc}_{2,i} &:& \{1,\dots,2^{nC_{12}}\} \times \{1,\dots,2^{nR_2}\} \times \mathcal{S}^{n} \times \mathcal{Z}^{i-1} \mapsto  \mathcal{X}_{2,i}, \\
f^{c}_{2,i} &:&\{1,\dots,2^{nC_{12}}\}\times\{1,\dots,2^{nR_2}\}\times\mathcal{S}^{n}\times\mathcal{Z}^{i} \mapsto  \mathcal{X}_{2,i}, \label{eq:en-relation-state2}
\end{eqnarray}
and a decoding function
\begin{eqnarray}
g&:& \mathcal{S}^{n}\times\mathcal{Y}^n \mapsto \{1,\dots,2^{nR_1}\}\times\{1,\dots,2^{nR_2}\}.
\end{eqnarray}
The average probability of error for a $(2^{nR_1},2^{nR_2},2^{nC_{12}},2^{nC_{21}},n)$ code is defined as
\begin{eqnarray}
P^{(n)}_e = \frac{1}{2^{n(R_1+R_2)}} \sum_{m_1,m_2} \Pr\{g(Y^n,S^n)\ne(m_1,m_2)|(m_1,m_2)\ \text{sent}\}.
\end{eqnarray}
\end{definition}
Let us define the following regions, $\mathcal{R}^{sc}_{State}$ and $\mathcal{R}^{c}_{State}$, that are contained in $\mathbb{R}_+^2$, namely, contained in the set of nonnegative two-dimensional real numbers.
\begin{eqnarray}
\mathcal{R}^{sc}_{State} =\left\{
                 \begin{array}{c}
                   C_{21}\geq I(U;S),\\
                   R_1 \leq H(Z |U)+I(X_1;Y|S,U,X_2,Z)+C_{12},\\
                   R_2 \leq I(X_2; Y |X_1, S,U) + C_{21} - I(U;S),\\
                   R_1 + R_2 \leq I(X_1, X_2 ; Y|S),\\
                   R_1 + R_2 \leq I(X_1, X_2 ; Y|U, Z, S)+H(Z|U) + C_{12}  + C_{21} - I(U;S)$, for$\\
                   P(s)P(u|s)P(x_1|u)\mathbbm{1}_{z=f(x_1)}P(x_2|s,u)P(y|x_1,x_2,s).
                 \end{array}
               \right\}\label{eq:capacityS}
\end{eqnarray}
The region $\mathcal{R}^{c}_{State}$ is defined with the same set of inequalities as in (\ref{eq:capacityS}), but the joint distribution is of the form
\begin{eqnarray}
                   P(s)P(u|s)P(x_1|u)\mathbbm{1}_{z=f(x_1)}P(x_2|s,u,z)P(y|x_1,x_2,s)\label{eq:state-joint-B}.
\end{eqnarray}
\begin{theorem}\label{theorem:add2}{\emph{(Capacity Region of the MAC with Cooperation and State Known at a Partial Cribbing Encoder)}}
The capacity regions of the MAC with cooperation and non-causal state known at a partially cribbing encoder and at the decoder for the strictly causal case and the causal case, as described in Def. \ref{coop+cribbing+decoder}, are $\mathcal{R}^{sc}_{State}$ and $\mathcal{R}^{c}_{State}$, respectively.
\end{theorem}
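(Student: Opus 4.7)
My plan is to mirror the structure of the proof of Theorem \ref{theorem:MAC+crib+coop} and overlay Gelfand--Pinsker coding to handle the state. The cooperation link $C_{21}$ plays a dual role: it carries part of Encoder 2's private message \emph{and} conveys the index of a $U^n$ codeword chosen jointly typical with $S^n$. Because Encoder 1 produces $X_1^n$ without ever seeing $S^n$, the factorization $P(u|s)P(x_1|u)$ is forced, and $U$ is the natural common variable shared by the two encoders after the cooperation round.

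\textbf{Achievability.} I would first double-split both rates as in Theorem \ref{theorem:MAC+crib+coop}: write $R_1=R_1^{\text{priv}}+R_1^{\text{coop}}$ and $R_2=R_2^{\text{priv}}+R_2^{\text{coop}}$ with $R_1^{\text{coop}}\leq C_{12}$ and $R_2^{\text{coop}}\leq C_{21}-I(U;S)$. Encoder 2 generates $2^{n(R_2^{\text{coop}}+I(U;S)+\epsilon)}$ i.i.d.\ $U^n$ codewords partitioned into $2^{nR_2^{\text{coop}}}$ Gelfand--Pinsker bins; upon seeing $(S^n,m_2^{\text{coop}})$ it selects a $U^n$ in the bin indexed by $m_2^{\text{coop}}$ that is jointly typical with $S^n$ and transmits its (bin-relative) index plus the binning overhead to Encoder 1 over $C_{21}$. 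The bit budget balances precisely when $C_{21}\geq I(U;S)+R_2^{\text{coop}}$, explaining both the feasibility requirement $C_{21}\geq I(U;S)$ and the reduction of the effective cooperation rate for Encoder 2 to $C_{21}-I(U;S)$. Encoder 1 in parallel sends $m_1^{\text{coop}}$ over $C_{12}$ as usual. After the cooperation round $U^n$ is common knowledge, and the residual problem is exactly a MAC with a common message $U^n$, private messages $(m_1^{\text{priv}},m_2^{\text{priv}})$, one-sided partial cribbing at Encoder 2, and state $S^n$ available at Encoder 2 and at the decoder; this is dispatched by the block-Markov/backward-decoding scheme of Theorem \ref{theorem:MAC+crib+coop}, generating $X_1^n\sim P(x_1|u)$ and $X_2^n\sim P(x_2|s,u)$ (or via Shannon strategies $\sigma(s,z)$ conditioned on $u$ in Case B). The decoder, aided by $S^n$, recovers the private messages by joint typicality.

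\textbf{Converse.} Following the converse of Theorem \ref{theorem:MAC+crib+coop}, I would take the auxiliary $U_i\triangleq(M_{12},M_{21},Z^{i-1},S_{i+1}^n)$. Fano's inequality and the same chain-rule manipulations used in the cribbing-only converse --- with $S_i$ added on the right of every relevant conditioning bar since the decoder observes $S^n$ --- yield the bounds on $R_1$, $R_2$, and $R_1+R_2$. The Markov chain $X_{1,i}-U_i-S_i$ holds because $X_1^n$ depends on $S^n$ only through $M_{21}$ and $M_1$ is independent of the state; the factorization $P(x_2|s,u)$ (respectively $P(x_2|s,u,z)$ in Case B) follows directly from the definition of $f_{2,i}$. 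The novel ingredient is the bound $C_{21}\geq I(U;S)$, which I would obtain starting from $nC_{21}\geq H(M_{21}|M_{12})\geq I(M_{21};S^n|M_{12})$, single-letterizing via the backward chain rule together with the independence of $S_i$ from $(S_{i+1}^n,M_{12})$, and then enlarging the conditioning set to absorb $Z^{i-1}$ into $U_i$. A standard time-sharing $Q$ argument completes the single-letterization.

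\textbf{Main obstacle.} The subtlest step is arranging a \emph{single} auxiliary that simultaneously plays the Gelfand--Pinsker role (so that $I(U;S)$ emerges as a clean overhead term and $C_{21}\geq I(U;S)$ comes out naturally) and the Willems common-randomness role (so that $H(Z|U)$ appears as in the cribbing-only converse). Stuffing $M_{21}$ (for the state aspect), $Z^{i-1}$ (for the cribbing aspect), and the tail $S_{i+1}^n$ (for the Gelfand--Pinsker bookkeeping) all into $U_i$ is what aligns the two roles, and the key technical verification will be checking that the induced joint distribution factors as $P(s)P(u|s)P(x_1|u)P(x_2|s,u)$ (respectively $P(x_2|s,u,z)$ in Case B) --- most importantly, that $X_{1,i}$ is conditionally independent of $S_i$ given $U_i$.
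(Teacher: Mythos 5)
Your plan is essentially the paper's: convert both cooperation links and the state coordination into a single common auxiliary via rate-splitting, run block-Markov coding with backward decoding on top, and in the converse stuff all the "shared" history into a single $U_i$. The converse does go through with your $U_i=(M_{12},M_{21},Z^{i-1},S_{i+1}^n)$, but the paper takes $U_i=(M_{12},M_{21},Z^{i-1},S^{i-1})$ and uses the forward chain rule on $I(M_{21};S^n|M_1)$ rather than the backward one; since $S$ is i.i.d.\ and the decoder sees $S^n$, the two choices give equivalent single-letter expressions after time-sharing (you would simply have to redo the Markov-chain checks with the future tail instead of the past prefix --- they hold by the same argument, namely that $M_1\perp(M_2,S^n)$ given $(M_{12},M_{21})$). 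On the achievability side, be careful with the phrase "the residual problem is exactly a MAC with a common message $U^n$, \ldots dispatched by the scheme of Theorem~\ref{theorem:MAC+crib+coop}": if you first bin a state-coordinated $U^n$ and then hand off to the Theorem~\ref{theorem:MAC+crib+coop} scheme with $U^n$ as the common \emph{message}, you would end up layering a fresh common-message codeword on top of $U^n$, i.e., two auxiliaries. The paper avoids this by making the state-coordinated $U^n$ itself the block-Markov cloud center: codewords $u^n$ are generated for every triple $(m_0,m^\prime_{1,b-1},m^\prime_{2,b})$, organized into super-bins indexed by $(m_0,m^\prime_{1,b-1})$ and bins indexed by $m^\prime_{2,b}$, with the remaining $2^{n(C_{21}-R_2^\prime)}$ codewords per bin providing the $I(U;S)$ slack and the chosen within-superbin index sent over $C_{21}$. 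Your bit-budget count $C_{21}\geq R_2^{\text{coop}}+I(U;S)$ is exactly the right constraint; the paper then applies Fourier--Motzkin and the $\tilde R_0=C_{12}$, $\tilde R_1=R_1-C_{12}$ substitution to recover $\mathcal{R}^{sc}_{State}$, with the causal case handled by letting $X_2$ be a Shannon strategy $p(x_2|u,s,z)$.
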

The role of the RV $U$ is to generate an empirical coordination between the two encoders regarding the state channel and to generate a common message between the two encoders by combining the cooperation links and the partial cribbing. We now examine two special cases of this capacity region.

\textit{Case 1: The One-Sided Cooperation and No Cribbing Case, i.e., $|\mathcal{Z}|=1$ and $C_{12}=0$:} In this case $H(Z|U)= 0$ and hence the region $\mathcal{R}^{sc}_{State}$ coincides with the region in \cite[Theorem 1]{permuter2011message}.

\textit{Case 2: $|\mathcal{S}|=1$, The Memoryless Case:} Notice that in this case $I(U;S)=0$ and the region $\mathcal{R}^{sc}_{State}$ reduces to
\begin{eqnarray}
\mathcal{R}^2_{State} =\left\{
             \begin{array}{c}
               R_1 \leq H(Z |U)+I(X_1;Y|U,X_2,Z)+C_{12},\\
               R_2 \leq I(X_2; Y |X_1, U) + C_{21},\\
               R_1 + R_2 \leq I(X_1, X_2 ; Y),\\
               R_1 + R_2 \leq I(X_1, X_2 ; Y|U, Z)+H(Z|U) + C_{12}  + C_{21}$, for$\\
               P(u)P(x_1|u)\mathbbm{1}_{z=f(x_1)}P(x_2|u)P(y|x_1,x_2).
             \end{array}
           \right\}\label{eq:capacityS2}
\end{eqnarray}
which is the region in Theorem \ref{theorem:MAC+crib+coop} where $Z_1=Z$ and only Encoder 2 cribs from Encoder 1, i.e., $|\mathcal{Z}_2|=1$.

The proof of Theorem \ref{theorem:add2} is given in Appendix \ref{appendix:crib+coop+state+decoder}.

Although we have shown that for combined cooperation and cribbing only one auxiliary RV is needed to describe the capacity region, in some cases this is not possible.
For instance, if the role of the cribbing and cooperation in the communication setting is different, then more then one auxiliary RV is needed. In the next section, we introduce a MAC with cooperation and action-dependent state known at a cribbing encoder. Because of the nature of actions and of non-causal states, the actions depend only on the cooperation and, therefore, two auxiliary RVs are needed, one for the cooperation and one for the cribbing.

\section{MAC with Cooperation and Action-dependent State known at a Cribbing Encoder}\label{sec:action}
We now address a MAC where two auxiliary RVs are needed in order to combine cooperation and cribbing. Consider the MAC with one-way cooperation and action-dependent state known at a cribbing encoder, depicted in Fig. \ref{fig:action+cribbing+coop}. Notice that the action $A^n$ is taken from $(m_2,m_{12})$.
\begin{figure}[h]
\begin{center}
\begin{psfrags}
    \psfragscanon
    \psfrag{A}[][][1]{Encoder 1}
    \psfrag{B}[][][1]{Encoder 2}
    \psfrag{C}[][][1]{$A^n$}
    \psfrag{D}[][][1]{$S^n$}
    \psfrag{E}[][][1]{$P_{Y|X_1,X_2,S}$}
    \psfrag{F}[][][1]{Decoder}
    \psfrag{G}[][][1]{$m_2 \in \{1,\dots,2^{nR_2}\}$}
    \psfrag{H}[][][1]{$m_1 \in \{1,\dots,2^{nR_1}\}$}
    \psfrag{I}[][][1]{$X_{1,i}(m_1)$}
    \psfrag{J}[][][1]{$X_{2,i}(m_2,m_{12},X_1^{i-1},S^n)$}
    \psfrag{K}[][][1]{$Y_i$}
    \psfrag{L}[][][1]{$(\hat{m}_1,\hat{m}_2)$}
    \psfrag{O}[][][1]{$p(s|a)$}
    \psfrag{M}[][][1]{$C_{12}$}
\includegraphics[scale = 0.9]{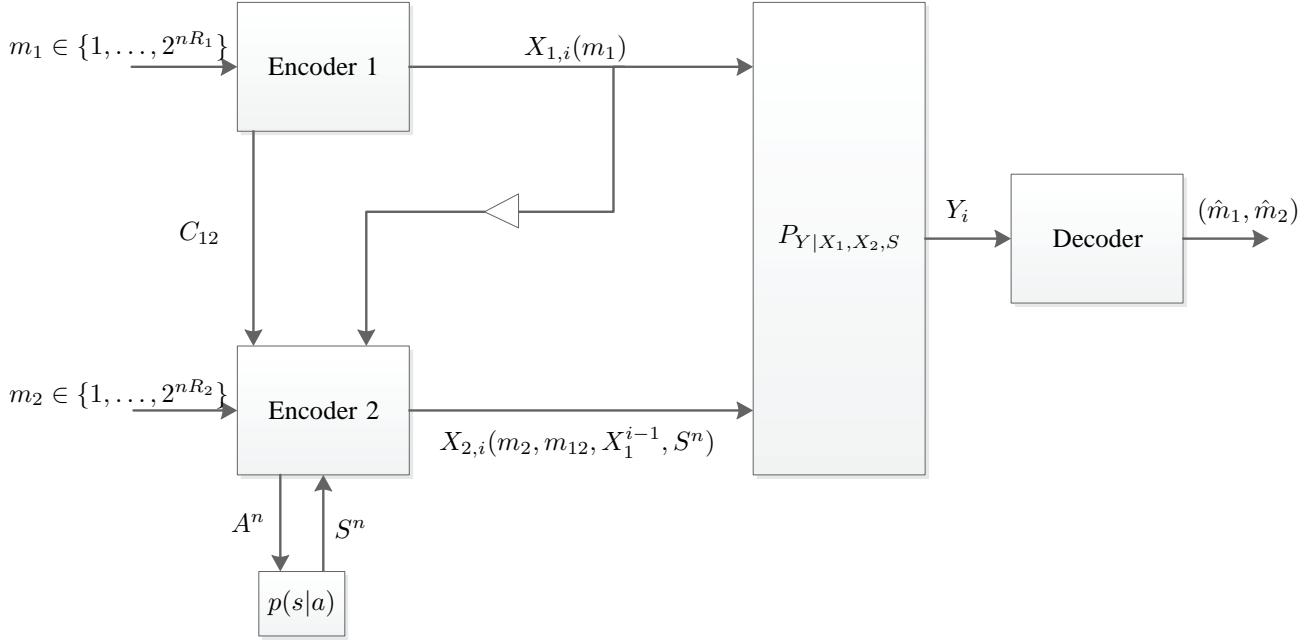}
\caption{The MAC with one-way cooperation and action-dependent state known at a cribbing encoder. Encoder 2 obtains messages $M_{12}$ prior to transmission. The cribbing is done strictly causally only by Encoder 2. This setting corresponds to the strictly causal case.} \label{fig:action+cribbing+coop}
\psfragscanoff
\end{psfrags}
\end{center}
\end{figure}

We address two cases for this setting:
\begin{itemize}
  \item The strictly causal case (sc) : Encoder 2 obtains $X_{1,i}$ with unit delay.
  \item The causal case (c) : Encoder 2 obtains $X_{1,i}$ without delay.
\end{itemize}
The channel probability is defined as in (\ref{eq:DMCS}).
\begin{definition}\label{action+cribbing}
A $(2^{nR_1},2^{nR_2},2^{nC_{12}},n)$ \textit{code} for the MAC with one-way cooperation and action-dependent state known at a cribbing encoder, as shown in Fig. \ref{fig:action+cribbing+coop}, consists at time $i$ of encoding functions at Encoder 1 and Encoder 2
\begin{eqnarray}
f_{12}&:&\{1,\dots,2^{nR_1}\} \mapsto  \{1,\dots,2^{nC_{12}}\}, \\
f_{1}&:&\{1,\dots,2^{nR_1}\} \mapsto  \mathcal{X}_{1}^n, \label{eq:f1} \\
f_{Action}&:&\{1,\dots,2^{nR_2}\}\times\{1,\dots,2^{nC_{12}}\} \mapsto \mathcal{A}^n, \label{eq:f3}\\
f^{sc}_{2,i}&:&\{1,\dots,2^{nR_2}\}\times\{1,\dots,2^{nC_{12}}\}\times\mathcal{S}^{n}\times\mathcal{X}^{i-1}_1 \mapsto  \mathcal{X}_{2,i},\\
f^c_{2,i}&:&\{1,\dots,2^{nR_2}\}\times\{1,\dots,2^{nC_{12}}\}\times\mathcal{S}^{n}\times\mathcal{X}^{i}_1 \mapsto  \mathcal{X}_{2,i},
\end{eqnarray}
and a decoding function
\begin{eqnarray}
g&:& \mathcal{Y}^n \mapsto \{1,\dots,2^{nR_1}\}\times\{1,\dots,2^{nR_2}\}.
\end{eqnarray}
The average probability of error for a $(2^{nR_1},2^{nR_2},2^{nC_{12}},n)$ code is defined as
\begin{eqnarray}
P^{(n)}_e = \frac{1}{2^{n(R_1+R_2)}} \sum_{m_1,m_2} \Pr\{g(Y^n)\ne(m_1,m_2)|(m_1,m_2)\ \text{sent}\}.
\end{eqnarray}
\end{definition}
Let us define the following regions $\mathcal{R}^{sc}_{Action}$ and $\mathcal{R}^c_{Action}$ that are contained in $\mathbb{R}_+^2$, namely, contained in the set of nonnegative two-dimensional real numbers.
\begin{eqnarray}
\mathcal{R}^{sc}_{Action} =\left\{
                 \begin{array}{c}
                   R_1 \leq \min\{H(X_1 |V,W),I(Y ;V ,X_1 ,U |W,A)-I(S;U|W,V,A)\} + C_{12},\\
                   R_2 \leq I(U, A ; Y |X_1, V, W) - I(U;S|W,V,A),\\
                   R_1 + R_2 \leq I(X_1, V, U, A ; Y|W) - I(U;S|W,V,A) + C_{12},\\
                   R_1 + R_2 \leq I(X_1, V, U, A, W; Y) - I(U;S|W,V,A)$, for$\\
                   P(w)P(v|w)p(a|w)P(s|a)P(x_1|v,w)P(u,x_2|s,v,a,w)P(y|x_1,x_2,s).
                 \end{array}
               \right\}\label{eq:capacity8}
\end{eqnarray}
The region $\mathcal{R}^c_{Action}$ is defined with the same set of inequalities as in (\ref{eq:capacity8}), but the joint distribution is of the form
\begin{eqnarray}
                   P(w)P(v|w)p(a|w)P(s|a)P(x_1|v,w)P(u|s,v,a,w)P(x_2|v,u,s,a,w,x_1)P(y|x_1,x_2,s).\label{eq:action-joint-B}
\end{eqnarray}
\begin{theorem}\label{theorem:add1}{\emph{(Capacity Region of the MAC with Cooperation and Action-Dependent State Known at a Cribbing Encoder)}}
The capacity regions of the MAC with one-way cooperation and action-dependent state known at a strictly causal and causal cribbing encoder, as described in Def. \ref{coop+cribbing+decoder}, are $\mathcal{R}^{sc}_{Action}$ and $\mathcal{R}^c_{Action}$, respectively.
\end{theorem}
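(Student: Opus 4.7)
The plan is to combine the block-Markov, double-rate-splitting, and backward-decoding techniques of Theorem~\ref{theorem:MAC+crib+coop} with Gelfand--Pinsker coding against the state and action coding as in \cite{weissman2010capacity}. The crucial structural point is that the action $A^n$ is generated from $(m_2,m_{12})$ only and is independent of the cribbing, so the cooperation-based common information must enter the action while the cribbing-based common information must not. This is exactly what forces two auxiliaries: $W$ for the cooperation (which shapes the action and hence the state) and $V$ for the cribbing (pure common message between the encoders), in contrast to the single-$U$ regions of Theorems~\ref{theorem:MAC+crib+coop} and~\ref{theorem:add2}.

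For achievability I would use block-Markov coding over $B$ blocks with the rate split $m_1=(m_{1w},m_{1v},m_{1p})$ at Encoder~1. Before block $b$, the piece $m_{1w}^{(b)}$ of rate $\approx C_{12}$ is sent through the cooperation link and indexes a codeword $W^n$. Using $(m_2^{(b)},m_{1w}^{(b)})$, Encoder~2 draws an action codeword $A^n$ conditional on $W^n$, and nature produces $S^n\sim\prod_i P(s_i|a_i)$. The cribbing piece $m_{1v}^{(b-1)}$ is conveyed to Encoder~2 via its cribbing observations in the previous block, indexing a codeword $V^n$ conditional on $W^n$. Equipped with $(W^n,V^n,A^n,S^n)$, Encoder~2 performs Gelfand--Pinsker binning to select $U^n$ jointly typical with $(S^n,W^n,V^n,A^n)$ and then draws $X_2^n$ from $P(x_2|u,s,v,a,w)$. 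Encoder~1 draws $X_1^n$ from $P(x_1|v,w)$, and the decoder employs backward decoding together with joint-typicality decoding of $(W^n,V^n,U^n,A^n,X_1^n)$. The causal case (c) differs only in that $X_{2,i}$ is generated via Shannon strategies conditioned on the time-$i$ cribbed symbol, exactly as in Case~B of Theorem~\ref{theorem:MAC+crib+coop}. Standard covering/packing arguments yield the rate constraints in (\ref{eq:capacity8}), the $-I(U;S|W,V,A)$ penalty coming from the Gelfand--Pinsker covering lemma applied conditionally on $(W,V,A)$.

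For the converse I would apply Fano's inequality and identify $W_i=M_{12}$, $V_i$ as the cribbing history $X_1^{i-1}$ (or a function thereof so that $V_i$ captures exactly what Encoder~2 has cribbed), $A_i$ as the time-$i$ action, and the Gelfand--Pinsker auxiliary $U_i=(M_2,S_{i+1}^n,M_{12})$; the $-I(U;S|W,V,A)$ term is extracted via Csisz\'ar's sum identity in the usual way. The expected main obstacle is verifying in the converse the factorization $P(w,v,a)=P(w)P(v|w)P(a|w)$, i.e.\ that the action $A$ is conditionally independent of the cribbing information $V$ given the cooperation $W$. This Markov structure, which forces the two auxiliaries to remain distinct and rules out a single-$U$ representation, must be read off from the encoding maps (\ref{eq:f1})--(\ref{eq:f3}) together with the fact that $S^n$ is independent of $(M_1,M_2,M_{12})$ conditional on $A^n$. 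Once this factorization, together with the memorylessness and the Markov chain $S-A-(W,V,X_1)$, is established, the remaining chain-rule manipulations parallel those in the converse of Theorem~\ref{theorem:MAC+crib+coop} and yield the bounds in (\ref{eq:capacity8}).
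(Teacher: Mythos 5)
Your proposal follows essentially the same route as the paper's Appendix~C: block-Markov coding with a rate split of $M_1$ into a cooperation piece (indexing $W^n$) and a cribbing piece (indexing $V^n$ and decoded at Encoder~2 via the perfect cribbing), Gelfand--Pinsker binning for $U^n$ conditioned on $(W,V,A,S)$, backward decoding, and in the converse Fano's inequality with $W_i=M_{12}$, $V_i=X_1^{i-1}$, and Csisz\'ar's sum identity to extract the $-I(U;S|W,V,A)$ penalty. You also correctly identify the key structural point that forces two auxiliaries: the action $A^n$ is a function of $(M_2,M_{12})$ only, so the Markov chain $A-W-V$ must hold and the cooperation and cribbing information cannot be merged into a single $U$ as in Theorems~1--3. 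One small slip: your Gelfand--Pinsker auxiliary $U_i=(M_2,S_{i+1}^n,M_{12})$ should instead be $U_i=(Y^{i-1},M_2,S_{i+1}^n)$ (the $Y^{i-1}$ must be absorbed into $U_i$ for the Csisz\'ar sum terms to single-letterize, and $M_{12}$ is already carried by $W_i$), but this is a detail you would recover once the sum identity is written out, and it does not change the structure of the argument.
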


In this case, $U$ is a Gelfand-Pinsker coding RV \cite{gel1980coding}. The role of the RV $W$ is to generate a common message based on the cooperation link, whereas the RV $V$ generates a common message based on the cribbing. The reason why in this case we cannot combine the cooperation and cribbing is that only part of the common information of both encoders is being used to generate the action sequence $A^n$. This example shows that in cases where only part of the common information that the encoders share is being used for arbitrary purposes, cooperation and cribbing cannot be combined into one RV. We now address two previous results in this field and show that they are special cases of our result.

\textit{Case 1: The Action-Dependent MAC where $C_{12}=R_1$:} In this case the region reduces to
\begin{eqnarray}
\mathcal{R}^1_{Action} =\left\{
             \begin{array}{c}
               R_2 \leq I(U, A ; Y |X_1, V, W) - I(U;S|W,V,A),\\
               R_1 + R_2 \leq I(X_1, V, U, A, W ; Y) - I(U;S|W,V,A)$, for$\\
               P(w)P(v|w)p(a|w)P(s|a)P(x_1|v,w)P(u,x_2|s,v,a,w)P(y|x_1,x_2,s).
             \end{array}
           \right\}\label{eq:capacityA1}
\end{eqnarray}
First, we notice that the cribbing in this case is redundant. Second, since the action is now taken from $(M_1,M_2)$ we can set the RV $W=X_1$ and $V$ as a constant and the region coincides with the capacity region in \cite{dikstein2012mac}.

\textit{Case 2: The State-Dependent MAC with State Known at a Cribbing Encoder, i.e., $|\mathcal{A}|=1$ and $C_{12}=0$:} Notice that in this case the state is not action-dependent and the region reduces to
\begin{eqnarray}
\mathcal{R}^2_{Action} =\left\{
             \begin{array}{c}
               R_1 \leq H(X_1 |V,W),\\
               R_2 \leq I(U ; Y |X_1, V, W) - I(U;S|W,V),\\
               R_1 + R_2 \leq I(X_1, V, U ; Y|W) - I(U;S|W,V),\\
               R_1 + R_2 \leq I(X_1, V, U, W ; Y) - I(U;S|W,V)$, for$\\
               P(w)P(v|w)P(s)P(x_1|v,w)P(u,x_2|s,v,w)P(y|x_1,x_2,s).
             \end{array}
           \right\}\label{eq:capacityA2}
\end{eqnarray}
If we set $W$ as constant, the region coincides with the capacity region in \cite{bross2010state}. Since these regions are equal, this shows that the capacity region in \cite{bross2010state} is a special case of the region in Theorem \ref{theorem:add1}.

The proof of Theorem \ref{theorem:add1} is given in Appendix \ref{appendix:crib+coop+state}.

\section{Conclusions and Future Work}\label{sec:conc}
In this paper, we have presented the capacity region for the MAC with combined cooperation and partial cribbing. Remarkably, the solution necessitates the use of only one auxiliary RV. Additionally, we have shown an achievability scheme for the Gaussian MAC with combined one-sided cooperation and causal partial cribbing. In this case, partial cribbing is a scalar quantization of Encoder 1's output obtained by Encoder 2. Graphs of achievability regions were presented for various number of quantization bits and capacity links. Using these results, it is possible to find under which conditions the outer bound is achieved. Thereafter, we considered a dual setting for the MAC with a common message, a private message, and combined cooperation and cribbing. We successfully characterized the rate-distortion region for the dual model using a single auxiliary RV. We applied our methods in order to find the capacity region for a MAC with cooperation and state known at a cribbing encoder and at the decoder. Again, the capacity region consisted of only one auxiliary RV. Finally, we addressed a MAC with one-way cooperation and cribbing and action-dependent state, where the action was based on the cooperation between the encoders. In this case two auxiliary RVs were needed. We stated that if only part of the common information that the encoders share is being used for arbitrary purposes, then cooperation and cribbing cannot be combined into one RV. We suggest, for future work, considering the non-causal partial cribbing case and the interference channel with combined cooperation and cribbing. An additional case to consider is where the state or action is known at the weak encoder (the non-cognitive encoder).

\appendices
\section{Achievability for the MAC with a Common Message and Partially Cribbing Encoders}\label{appendix:achieve}
Fix a joint distribution $P(u)P(x_1|u)\mathbbm{1}_{z_1=f(x_1)}P(x_2|u)\mathbbm{1}_{z_2=f(x_2)}P(y|x_1, x_2)$. In the following achievability scheme we use Block Markov Coding and Rate-Splitting.

\emph{\underline{Coding Scheme:}} We consider $B$ blocks, each consisting of $n$ symbols and thus we transmit $nB$ symbols. We transmit $B-1$ message-pairs $(M_1,M_2)$ in $B$ blocks of information. Here, $M_i \in \{1,\dots,2^{nR_i}\}$ for $i \in \{1,2\}$; thus, asymptotically, for a large enough $n$, our transmission rate would be $\frac{nR_i(B-1)}{nB}\stackrel{n\rightarrow\infty}\longrightarrow R_i$ for $i \in \{1,2\}$. In each block we split messages $M_{1}$ and $M_{2}$ into $(M^\prime_{1},M^{\prime\prime}_{1})$ and $(M^\prime_{2},M^{\prime\prime}_{2})$, respectively, s.t. $R_1 = R_1^\prime + R_1^{\prime\prime}$ and $R_2 = R_2^\prime + R_2^{\prime\prime}$.

\emph{\underline{Code Design:}} Generate $2^{n(R_0+R_1^\prime+R_2^\prime)}$ codewords $u^n$ i.i.d. using $P(u^n)=\Pi_{i=1}^nP(u_i)$. For each $u^n$, generate $2^{nR_1^\prime}$ codewords $z_1^n$ i.i.d. using $P(z_1^n|u^n)=\Pi_{i=1}^nP(z_{1,i}|u_i)$ and $2^{nR_1^{\prime\prime}}$ codewords $x_1^n$ i.i.d. using $P(x_1^n|u^n,z_1^n)=\Pi_{i=1}^nP(x_{1,i}|u_i,z_{1,i})$. Additionally, for each $u^n$, generate $2^{nR_2^\prime}$ codewords $z_2^n$ i.i.d. using $P(z_2^n|u^n)=\Pi_{i=1}^nP(z_{2,i}|u_i)$ and $2^{nR_2^{\prime\prime}}$ codewords $x_2^n$ i.i.d. using $P(x_2^n|u^n,z_2^n)=\Pi_{i=1}^nP(x_{2,i}|u_i,z_{2,i})$.

\emph{\underline{Encoding:}} We denote the realizations of the sequences $(M_0,M_1,M_2)$ at block $b$ as $(m_{0,b},m_{1,b},m_{2,b})$.
Since we use block Markov coding, we set $(m^\prime_{1,B},m^\prime_{1,B})=(1,1)$.
In block $b \in \{1,\dots,B\}$, encode message $(m_{0,b},m^\prime_{1,b-1},m^\prime_{2,b-1})$ using $u^n(m_{0,b},m^\prime_{1,b-1},m^\prime_{2,b-1})$.
Encode message $m^\prime_{1,b}$ conditioned on $(m_{0,b},m^\prime_{1,b-1},m^\prime_{2,b-1})$ using $z_1^n(m^\prime_{1,b},u^n)$ and message $m^{\prime\prime}_{1,b}$ conditioned on $(m_{0,b},m^\prime_{1,b-1},m^\prime_{2,b-1},m^\prime_{1,b})$ using $x_1^n(m^{\prime\prime}_{1,b},u^n,z_1^n)$.
Additionally, encode message $m^\prime_{2,b}$ conditioned on $(m_{0,b},m^\prime_{1,b-1},m^\prime_{2,b-1})$ using $z_2^n(m^\prime_{2,b},u^n)$ and message $m^{\prime\prime}_{2,b}$ conditioned on $(m_{0,b},m^\prime_{1,b-1},m^\prime_{2,b-1},m^\prime_{2,b})$ using $x_2^n(m^{\prime\prime}_{2,b},u^n,z_2^n)$.
Send $x_1^n(m^{\prime\prime}_{1,b},u^n,z_1^n)$ and $x_2^n(m^{\prime\prime}_{2,b},u^n,z_2^n)$ over the channel.

\emph{\underline{Decoding at Encoder 1:}} At the end of block $b$, Encoder 1 tries to decode message $m^\prime_{2,b}$. Given $(m_{0,b},m^\prime_{1,b-1})$ and assuming that message $m^\prime_{2,b-1}$ was decoded correctly at the end of block $b-1$, Encoder 1 looks for $\hat{m}^\prime_{2,b}$ s.t.
\begin{equation}
    (u^n(m_{0,b},m^\prime_{1,b-1},m^\prime_{2,b-1}),z_2^n(\hat{m}^\prime_{2,b},u^n))\in T_\epsilon^{(n)}(U,Z_2).
\end{equation}
If no such $\hat{m}^\prime_{2,b}$, or more than one such $\hat{m}^\prime_{2,b}$, was found, an error is declared at block $b$ and therefore in the whole super-block $nB$.

\emph{\underline{Decoding at Encoder 2:}} Similarly for Encoder 2; at the end of block b, Encoder 2 tries to decode message $m^\prime_{1,b}$. Given $(m_{0,b},m^\prime_{2,b-1})$ and assuming that message $m^\prime_{1,b-1}$ was decoded correctly at the end of block $b-1$, Encoder 2 looks for $\hat{m}^\prime_{1,b}$ s.t.
\begin{equation}
    (u^n(m_{0,b},m^\prime_{1,b-1},m^\prime_{2,b-1}),z_1^n(\hat{m}^\prime_{1,b},u^n))\in T_\epsilon^{(n)}(U,Z_1).
\end{equation}
If no such $\hat{m}^\prime_{1,b}$, or more than one such $\hat{m}^\prime_{1,b}$, was found, an error is declared at block $b$ and therefore in the whole super-block $nB$.

\emph{\underline{Decoding at the receiver:}} At the end of block $B$, the decoding is done backwards. At block $b$, the decoder looks for the triplet $(\hat{m}_{0,b},\hat{m}^\prime_{1,b-1},\hat{m}^{\prime\prime}_{1,b},\hat{m}^\prime_{2,b-1},\hat{m}^{\prime\prime}_{2,b})$ s.t.
\begin{eqnarray}
    (u^n(\hat{m}_{0,b},\hat{m}^\prime_{1,b-1},\hat{m}^\prime_{2,b-1}),z_1^n(\hat{m}^\prime_{1,b},u^n),z_2^n(\hat{m}^\prime_{2,b},u^n),x_1^n(\hat{m}^{\prime\prime}_{1,b},u^n,z_1^n),x_2^n(\hat{m}^{\prime\prime}_{2,b},u^n,z_2^n),y^n)\notag\\
    \tab\in T_\epsilon^{(n)}(U,Z_1,Z_2,X_1,X_2,Y).
\end{eqnarray}
If no such tuple, or more than one such tuple, was found, an error is declared at block $b$ and therefore in the whole super-block $nB$.

\emph{\underline{Error Analysis:}} The probability that $z_1^n(1,u^n)=z_1^n(i,u^n)$ where $i>1$ and where $(u^n(i),z_1^n(1,u^n))\in T_\epsilon^{(n)}(U,Z_1)$ is bounded by $2^{-n(H(Z_1|U)-\delta(\epsilon))}$, where $\delta(\epsilon)$ goes to zero as $\epsilon$ goes to zero. Hence, if
\begin{equation}\label{eq:E1}
    R^\prime_1<H(Z_1|U),
\end{equation}
then the probability that an incorrect message $m^\prime_{1,b}$ was decoded goes to zero for a large enough $n$.

From symmetry, we can see that if
\begin{equation}\label{eq:E2}
    R^\prime_2<H(Z_2|U),
\end{equation}
then the probability that an incorrect message $m^\prime_{2,b}$ was decoded goes to zero for a large enough $n$.
We define the following event at block $b$:
\begin{eqnarray}
    E_{i,j,k,b} \triangleq (u^n(i),z_1^n(\hat{m}^\prime_{1,b},u^n),z_2^n(\hat{m}^\prime_{2,b},u^n),x_1^n(j,u^n,z_1^n),x_2^n(k,u^n,z_2^n),y^n)\in T_\epsilon^{(n)}(U,Z_1,Z_2,X_1,X_2,Y).
\end{eqnarray}
We can bound the probability of error as follows:
\begin{eqnarray}
    P_{e,b}^{(n)} \leq \Pr(E^c_{1,1,1,b}) &+& \sum_{i=1,j>1,k=1}\Pr(E_{1,j,1,b}) + \sum_{i=1,j=1,k>1}\Pr(E_{1,1,k,b})\notag\\ &+& \sum_{i=1,j>1,k>1}\Pr(E_{1,j,k,b}) + \sum_{i>1,j>1,k>1}\Pr(E_{i,j,k,b}). \label{eq:prob_of_error}
\end{eqnarray}
We now show that each term in (\ref{eq:prob_of_error}) goes to zero for a large enough $n$.
\begin{itemize}
  \item Upper-bounding $\Pr(E^c_{1,1,1,b})$: Since we assume that Transmitters 1 and 2 encode the correct message triplet $(m_{0,b},m_{1,b-1},m_{2,b-1})$ at block $b$ and that the receiver decoded the right triplet $(m_{0,b+1},m_{1,b},m_{2,b})$ at block $b+1$, by the law of large numbers (LLN), $\Pr(E^c_{1,1,1,b}) \rightarrow 0$ when $n\rightarrow\infty$.
  \item Upper-bounding $\sum_{i=1,j>1,k=1}\Pr(E_{1,j,1,b})$: Assuming that $(m^\prime_{1,b},m^\prime_{2,b})$ were decoded correctly at block $b+1$, the probability for this event is bounded by
      \begin{equation}
            \sum_{i=1,j>1,k=1}\Pr(E_{1,j,1,b}) \leq 2^{nR_1^{\prime\prime} }2^{n(I(X_1;Y|U,Z_1,X_2)-\delta(\epsilon)}. \label{eq:E3}
      \end{equation}
  \item Upper-bounding $\sum_{i=1,j=1,k>1}\Pr(E_{1,1,k,b})$: From symmetry,
      \begin{equation}
            \sum_{i=1,j=1,k>1}\Pr(E_{1,1,k,b}) \leq 2^{nR_2^{\prime\prime}}2^{n(I(X_2;Y|U,Z_2,X_1)-\delta(\epsilon)}. \label{eq:E4}
      \end{equation}
  \item Upper-bounding $\sum_{i=1,j>1,k>1}\Pr(E_{1,j,k,b})$: Again we assume that $(m^\prime_{1,b},m^\prime_{2,b})$ were decoded correctly at block $b+1$; the probability for this event is bounded by
      \begin{equation}
            \sum_{i=1,j>1,k>1}\Pr(E_{1,j,k,b}) \leq 2^{n(R_1^{\prime\prime} + R_2^{\prime\prime})}2^{n(I(X_1,X_2;Y|U,Z_1,Z_2)-\delta(\epsilon)}. \label{eq:E5}
      \end{equation}
  \item Upper-bounding $\sum_{i>1,j>1,k>1}\Pr(E_{i,j,k,b})$: We assume that $(m^\prime_{1,b},m^\prime_{2,b})$ were decoded correctly at block $b+1$; the probability for this event is bounded by
      \begin{equation}
            \sum_{i>1,j>1,k>1}\Pr(E_{i,j,k,b}) \leq 2^{n(R_0+R_1+R_2)}2^{n(I(X_1,X_2;Y)-\delta(\epsilon)}. \label{eq:E6}
      \end{equation}
\end{itemize}
Using the Fourier-Motzkin Elimination on equations (\ref{eq:E1}), (\ref{eq:E2}), (\ref{eq:E3}), (\ref{eq:E4}), (\ref{eq:E5}), and (\ref{eq:E6}) yields the achievable region in (\ref{eq:capacity-mac-ce}), thus completing the proof.
\hfill $\blacksquare$

\section{Proof of Theorem \ref{theorem:add2}} \label{appendix:crib+coop+state+decoder}
\subsection{Converse}
\textit{Converse for the strictly causal case:}
Given an achievable rate-pair $(R_1,R_2)$ we need to show that there exists a joint distribution of the form $P(s)P(u|s)P(v|u)P(z,x_1|v,u)P(x_2|s,v,u)P(y|x_1,x_2,s)$ such that the inequalities in (\ref{eq:capacityS}) are satisfied. Since $(R_1,R_2)$ is an achievable rate-pair, there exists a $(2^{nR_1},2^{nR_2},2^{nC_{12}},2^{nC_{21}},n)$ code with an arbitrarily small error probability $P^{(n)}_e$. By Fano's inequality,
\begin{eqnarray}
H(M_1,M_2|Y^n,S^n)\leq n(R_1 + R_2)P^{(n)}_e + H(P^{(n)}_e).
\end{eqnarray}
We set
\begin{equation}
(R_1 + R_2)P^{(n)}_e + \frac{1}{n}H(P^{(n)}_e)\triangleq \epsilon_n,
\end{equation}
where $\epsilon_n \rightarrow 0$ as $P^{(n)}_e\rightarrow 0$. Hence,
\begin{eqnarray}
H(M_1|Y^n,M_2,S^n) \leq H(M_1,M_2|Y^n,S^n) \leq n\epsilon_n,\\
H(M_2|Y^n,M_1,S^n) \leq H(M_1,M_2|Y^n,S^n) \leq n\epsilon_n.
\end{eqnarray}
For $R_1$ we have the following:
\begin{eqnarray}
    nR_1 &=& H(M_1)\\
        &=& H(M_1|M_{12})+H(M_{12})\\
        &\stackrel{(a)}=&   H(M_1|M_{12},M_2,S^n)+H(M_{12})\\
        &=&                 I(M_1;Y^n|M_{12},M_2,S^n) + H(M_1|Y^n,M_{12},M_2,S^n)+H(M_{12})\\
        &\stackrel{(b)}\leq&I(M_1;Y^n|M_{12},M_2,S^n)+nC_{12} + n\epsilon_n\\
        &\stackrel{(c)}=&   I(X^n_1,Z^n;Y^n|M_{12},M_2,S^n)+nC_{12} + n\epsilon_n\\
        &\stackrel{(d)}=&   I(Z^n;Y^n|M_{12},M_2,S^n) + I(X^n_1;Y^n|M_{12},M_2,S^n,Z^n) +nC_{12} + n\epsilon_n\\
        &\stackrel{(e)}=&   \sum_{i=1}^n [I(Z_i;Y^n|M_{12},M_{21},M_2,Z^{i-1},S^n) + I(X^n_1;Y_i|Y^{i-1},M_{12},M_{21},M_2,S^n,Z^n)]\notag\\
        &&\tab + nC_{12} + n\epsilon_n\\
        &\stackrel{(f)}\leq&\sum_{i=1}^n [H(Z_i|M_{21},Z^{i-1},M_{12},S^{i-1}) + I(X^n_1;Y_i|Y^{i-1},M_{12},M_{21},M_2,S^n,Z^n,X_{2,i})]\notag\\ &&\tab + nC_{12} + n\epsilon_n \\
        &\stackrel{(g)}\leq&\sum_{i=1}^n [H(Z_i|M_{21},Z^{i-1},M_{12},S^{i-1}) + I(X_{1,i};Y_i|M_{21},S^i,Z^{i-1},M_{12},X_{2,i},Z_i)] \notag\\ &&\tab +  nC_{12} + n\epsilon_n \label{markov-YX1X2S}\\
        &\stackrel{(h)}=&\sum_{i=1}^n [H(Z_i|U_i) + I(X_{1,i};Y_i|U_i,X_{2,i},S_i,Z_i)] + nC_{12} + n\epsilon_n,
\end{eqnarray}
where (a) follows from the fact that the messages $M_1$ and $(M_2,S^n)$ are independent, (b) follows from Fano's inequality, (c) follows from the Markov chain $M_1 - (X^n_1,Z^n,M_{12},M_2,S^n) - Y^n$, (d) and (e) follow from the chain rule and since $M_{21}=f(S^n,M_2,M_{12})$, (f) follows since conditioning reduces entropy and since $X_{2,i}=f(S^n,Z^{i-1},M_{12},M_2)$, (g) follows from the Markov Chain $Y_i - (X_{1,i},X_{2,i},S^i,M_{12},M_{21},Z^i) - (Y^{i-1},M_2,S_{i+1}^n,Z_{i+1}^n)$, and (h) follows by setting the RV
\begin{eqnarray}
    U_i&\triangleq& (M_{12},M_{21},Z^{i-1},S^{i-1}).
\end{eqnarray}
Thus, we obtained
\begin{eqnarray}
    R_1&\leq& \frac{1}{n}\sum_{i=1}^n [H(Z_i|U_i) + I(X_{1,i};Y_i|U_i,X_{2,i},S_i,Z_i)] + C_{12} + \epsilon_n.
\end{eqnarray}
Next, we consider $R_2$;
\begin{eqnarray}
    nR_2 &=& H(M_2)\\
    &\stackrel{(a)}=& H(M_2|S^n,M_1)\\ 
    &\stackrel{(b)}=& H(M_{21},M_2|S^n,M_1)\\
    &=& H(M_{21}|S^n,M_1) + H(M_2|S^n,M_{21},M_1)\\
    &\stackrel{(c)}\leq& H(M_{21}|M_1) - I(M_{21};S^n|M_1) + I(M_2;Y^n|S^n,M_1,M_{21}) + n\epsilon_n \label{C21-1}\\ 
    &\stackrel{(d)}\leq& nC_{21} + \sum_{i=1}^n [I(M_2;Y_i|Y^{i-1},S^n,M_1,M_{21}) - I(S_i;M_{21}|S^{i-1},M_1)] + n\epsilon_n\\ 
    &\stackrel{(e)}=&  nC_{21} + \sum_{i=1}^n [I(M_2,X_{2,i};Y_i|Y^{i-1},M_1,M_{12},M_{21},S^n,X_{1,i},Z^{i-1})\notag\\ &&\tab - I(S_i;M_{21},S^{i-1},M_1,M_{12},Z^{i-1})] + n\epsilon_n\\ 
    &\stackrel{(f)}\leq& nC_{21} + \sum_{i=1}^n [I(X_{2,i};Y_i|M_{21},M_{12},S^{i},Z^{i-1},X_{1,i}) \notag\\ &&\tab - I(S_i;M_{21},S^{i-1},M_{12},Z^{i-1})] + n\epsilon_n\label{C21-2}\\
    &=& nC_{21} + \sum_{i=1}^n [I(X_{2,i};Y_i|U_i,S_i,V_i,X_{1,i}) - I(S_i;U_i)] + n\epsilon_n,
\end{eqnarray}
where (a) follows since $M_2$ is independent of $S^n$ and $M_1$, (b) follows since $M_{21}=f(S^n,M_2,M_1)$, (c) follows from Fano's inequality, (d) follows from the chain rule, (e) follows since $S_i$ is independent of $(S^{i-1},M_1)$ and since $(M_{12},Z^{i-1},X_{1,i}) = f(M_{21},M_1)$, and (f) follows from the same argument as in (\ref{markov-YX1X2S}) and since conditioning reduces entropy. Thus, we obtained
\begin{eqnarray}
    R_2 &\leq& C_{21} + \frac{1}{n}\sum_{i=1}^n [I(X_{2,i};Y_i|U_i,S_i,X_{1,i}) - I(S_i;U_i)] + \epsilon_n.
\end{eqnarray}
Now, consider
\begin{eqnarray}
    n(R_1 + R_2) &=& H(M_1,M_2,M_{12})\\
    &\stackrel{(a)}=& H(M_1,M_2|S^n,M_{12}) + H(M_{12})\\
    &\leq& H(M_1,M_2|M_{21},S^n,M_{12}) + H(M_{21}|S^n,M_{12}) + nC_{12}\\
    &\stackrel{(c)}\leq& nC_{12} + I(M_1,M_2,Z^n;Y^n|S^n,M_{12},M_{21}) + H(M_{21}|S^n,M_1) + n\epsilon_n\\ 
    &\leq& nC_{12} + I(M_1,M_2;Y^n|S^n,M_{12},M_{21},Z^n) + I(Z^n|S^n,M_{12},M_{21}) \notag\\ &&\tab + H(M_{21}|S^n,M_1) + n\epsilon_n\\ 
    &\stackrel{(d)}\leq& nC_{12} + I(X_1^n,X_2^n;Y^n|S^n,M_{12},M_{21},Z^n) + nC_{21}\notag\\ &&\tab +  \sum_{i=1}^n [H(Z_i|U_i) - I(S_i;U_i)]  + n\epsilon_n\\
    &\stackrel{(e)}=& nC_{12} + nC_{21} + \sum_{i=1}^n [I(X_1^n,X_2^n;Y_i|S^n,Y^{i-1},M_{12},M_{21},Z^n)\notag\\ &&\tab +  H(Z_i|U_i) - I(S_i;U_i)] + n\epsilon_n\\ 
    &\stackrel{(f)}=& nC_{12} + nC_{21} + \sum_{i=1}^n [I(X_{1,i},X_{2,i};Y_i|S_i,S^{i-1},M_{21},M_{12},Z^i)\notag\\ &&\tab +  H(Z_i|U_i) - I(S_i;U_i)] + n\epsilon_n\\ 
    &\leq& nC_{12} + nC_{21} + \sum_{i=1}^n [I(X_{1,i},X_{2,i};Y_i|S_i,U_i,Z_i)\notag\\ &&\tab +  H(Z_i|U_i)  - I(S_i;U_i)] + n\epsilon_n,
\end{eqnarray}
where (a) follows since $(M_1,M_2)$ is independent of $S^n$, (b) follows since $Z^n=f(M_1,M_{21})$, (c) follows from Fano's inequality and since $M_{21}$ is independent of $M_{12}$, (d) follows from the same arguments as given in (\ref{C21-1})-(\ref{C21-2}) and from the Markov chain $(M_1,M_2) - (X_1^n,X_2^n,M_{12},M_{21},Z^n,S^n) - Y^n$, (e) follows from the chain rule, and (f) follows from the same argument as given in (\ref{markov-YX1X2S}). Thus, we obtained
\begin{eqnarray}
    R_1 + R_2 &\leq& C_{12} + C_{21} + \frac{1}{n}\sum_{i=1}^n [I(X_{1,i},X_{2,i};Y_i|S_i,U_i,Z_i) +  H(Z_i|U_i) - I(S_i;U_i)] + \epsilon_n.
\end{eqnarray}
Additionally,
\begin{eqnarray}
    n(R_1 + R_2) &\leq& H(M_1,M_2)\\
    &\stackrel{(a)}=& H(M_1,M_2|S^n)\\
    &\stackrel{(b)}\leq& I(M_1,M_2;Y^n|S^n) + n\epsilon_n\\ 
    &\stackrel{(c)}\leq& I(X_1^n,X_2^n;Y^n|S^n) + n\epsilon_n\\ 
    &\stackrel{(d)}=& \sum_{i=1}^n I(X_1^n,X_2^n;Y_i|S^n,Y^{i-1}) + n\epsilon_n\\ 
    &\stackrel{(e)}\leq& \sum_{i=1}^n I(X_{1,i},X_{2,i};Y_i|S_i) + n\epsilon_n, 
\end{eqnarray}
where (a) follows since $(M_1,M_2)$ is independent of $S^n$, (b) follows from Fano's inequality, (c) follows from encoding relations (\ref{eq:en-relation-state1})-(\ref{eq:en-relation-state2}), (d) follows from the chain rule, and step (e) follows from the Markov Chain $Y_i - X_{1,i},X_{2,i},S_i - Y^{i-1}$ and since conditioning reduces entropy. Thus we obtained
\begin{eqnarray}
    R_1 + R_2 &\leq& \frac{1}{n}\sum_{i=1}^n I(X_{1,i},X_{2,i};Y_i|S_i) + \epsilon_n.
\end{eqnarray}
Finally,
\begin{eqnarray}
  nC_{21} &\geq& H(M_{21}) \\
   &\geq& H(M_{21}|M_{1}) \\
   &\geq& I(M_{21};S^n|M_{1}) \\
   &=& \sum_{i=1}^n I(S_i;M_{21}|S^{i-1},M_{1}) \\
   &\stackrel{(a)}=& \sum_{i=1}^n I(S_i;M_{21},S^{i-1},M_{1}) \\
   &\geq& \sum_{i=1}^n I(S_i;M_{21},S^{i-1},Z^{i-1},M_{12}) \\
   &=& \sum_{i=1}^n I(S_i;U_i),
\end{eqnarray}
where (a) follows since $S_i$ is independent of $(S^{i-1},M_1)$.
Finally, let $Q$ be an RV independent of $(X_1^n,X_2^n,Y^n)$ and uniformly distributed over the set $\{1,2,3,\dots,n\}$. We define the RV $U\triangleq(Q,U_Q)$ and obtain the region given in (\ref{eq:capacityS}).
\begin{figure}[!h]
    \begin{center}
        \begin{psfrags}
            \psfragscanon
            \psfrag{I}[][][1]{$M_1$}
            \psfrag{B}[][][1]{$M_{21}$}
            \psfrag{H}[][][1]{$Z^{i-1}$}
            \psfrag{A}[][][1]{$X_{1,i}$}
            \psfrag{F}[][][1]{$M_{12}$}
            \psfrag{D}[][][1]{$S^i$}
            \psfrag{C}[][][1]{$S_{i+1}^{n}$}
            \psfrag{E}[][][1]{$M_2$}
            \psfrag{G}[][][1]{$X_{2,i}$}
            \psfrag{J}[][][1]{$Z_i$}
            \includegraphics[scale = 0.7]{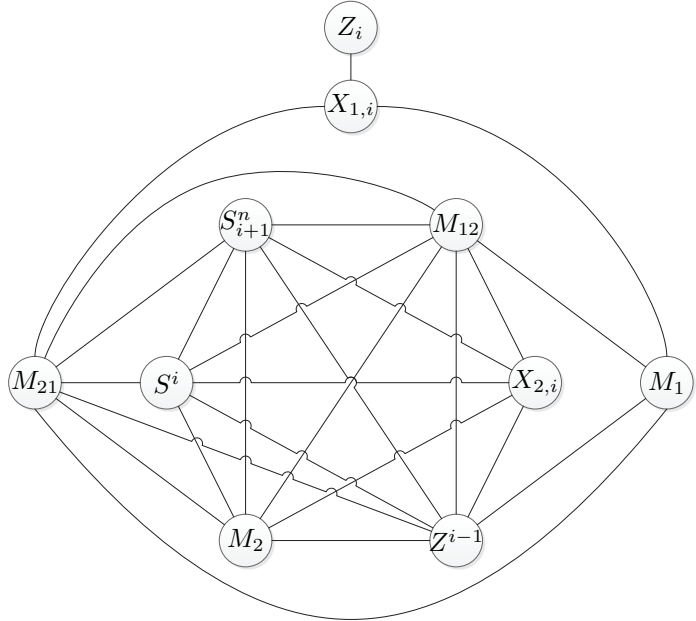}
            \caption{Proof of the Markov chains $Z_i - U_i - S_i$, $X_{1,i} - (U_i,Z_i) - S_i$, and $X_{2,i} - (M_{12},M_{21},Z^{i-1},S^{i-1}) - X_{1,i}$ using the undirected graphical technique \cite[Section II]{permuter2010two}. This graph corresponds to the joint distribution $P(s^n)P(m_1)P(m_2)P(m_{12}|m_1)P(m_{21}|m_2,s^n,m_{21})P(z^{i-1}|m_1,m_{21})P(x_{1,i}|m_{21},m_1)P(z_{i}|x_{1,i})P(x_{2,i}|m_{12},m_2,s^n,z^{i-1})$.} \label{fig:markov-state-A}
            \psfragscanoff
        \end{psfrags}
    \end{center}
\end{figure}

To complete the converse, we need to show the following Markov relations:
\begin{itemize}
\item $Z_i - U_i - S_i$, $X_{1,i} - (U_i,Z_i) - S_i$, and $X_{2,i} - (M_{12},M_{21},Z^{i-1},S^{i-1}) - X_{1,i}$ - These Markov relations can be proven by using the undirected graph method in Fig. \ref{fig:markov-state-A}. For the first Markov chain, see that it is impossible to get from node $Z_i$ to node $S_i$ without going through nodes $(S^{i-1},Z^{i-1},M_{12},M_{21})$. For the second Markov chain, it is impossible to get from node $X_{1,i}$ to node $S_i$ without going through nodes $(S^{i-1},Z^{i},M_{12},M_{21})$. Finally, for the third Markov chain, we can see that it is impossible to get from node $X_{1,i}$ to node $X_{2,i}$ without going through nodes $(S^{i},Z^{i-1},M_{12},M_{21})$.
\item $Y_i - (X_{1,i},X_{2,i}) - (Z_{1,i},U_i)$ - Follows from the fact that the channel output at any time $i$ is assumed to depend only on the channel inputs and state at time $i$.
\end{itemize}
This completes the converse part.
\hfill $\blacksquare$

\textit{Converse for the causal case:}
For the causal case we repeat the same converse as for the strictly causal case, except that in the final step we need to show the Markov chain $X_{2,i}-(U_i, Z_{i}, S_i)-X_{1,i}$, rather than $X_{2,i}-(U_i, S_i)-X_{1,i}$, as in the strictly causal case. If we change node $Z^{i-1}$ to $Z^i$ in Fig. \ref{fig:markov-state-A}, we can see that the Markov chain $X_{2,i} - (M_{12},M_{21},Z^{i},S^{i}) - X_{1,i}$ holds since we cannot get from node $X_{2,i}$ to node $X_{1,i}$ without going through nodes $(M_{12},M_{21},Z^{i},S^{i})$.
\hfill $\blacksquare$

\subsection{Achievability}
In order to prove the achievability, we will consider a similar setting and then, by doing minor modifications, we will prove our setting. We first prove the achievability for the strictly causal case.

\textit{Achievability for the strictly causal case:}
Let us look at a similar model depicted in Fig. \ref{fig:common+cribbing+decoder}.

\begin{figure}[!h]
\begin{center}
\begin{psfrags}
    \psfragscanon
    \psfrag{A}[][][1]{Encoder 1}
    \psfrag{B}[][][1]{Encoder 2}
    \psfrag{D}[][][1]{$S^n$}
    \psfrag{E}[][][1]{$P_{Y|X_1,X_2,S}$}
    \psfrag{F}[][][1]{Decoder}
    \psfrag{G}[][][1]{$m_0,m_2$}
    \psfrag{H}[][][1]{$m_0,m_1$}
    \psfrag{I}[][][1]{$X_{1,i}(m_0,m_1,m_{21})$}
    \psfrag{J}[][][1]{$X_{2,i}(m_0,m_2,Z^{i-1},S^n)$}
    \psfrag{K}[][][1]{$Y_i$}
    \psfrag{L}[][][1]{$(\hat{m}_0,\hat{m_1},\hat{m_2})$}
    \psfrag{M}[][][1]{$C_{21}$}
    \psfrag{O}[][][1]{$Z_i=f(X_{1,i})$}
\includegraphics[width=16cm]{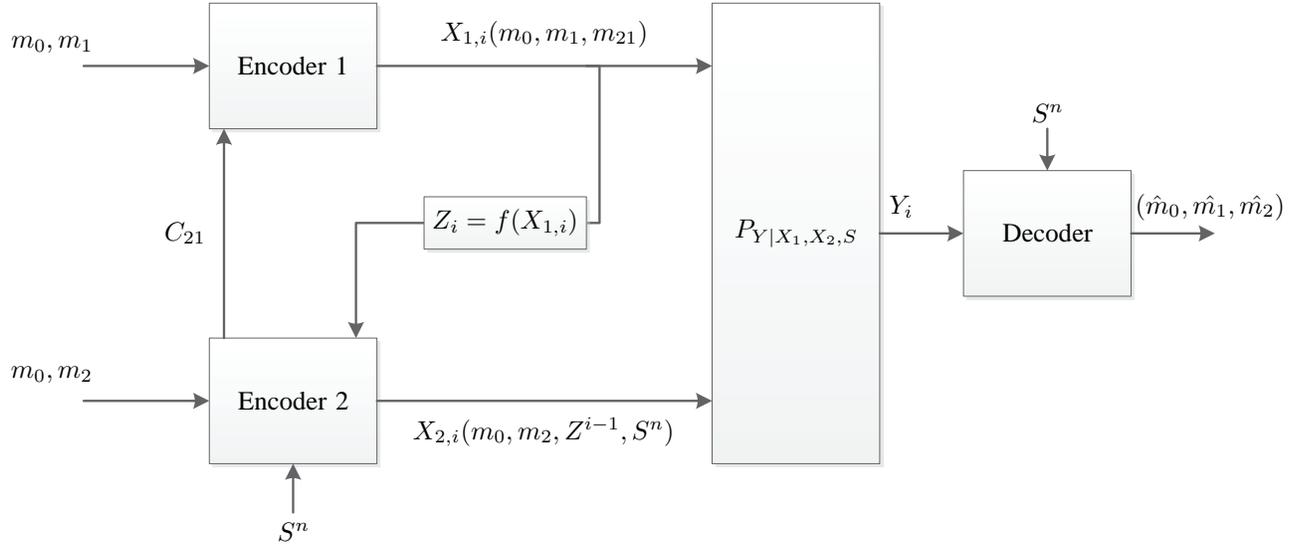}
\caption{MAC with a common message and state known at a partially cribbing Encoder.} \label{fig:common+cribbing+decoder}
\psfragscanoff
\end{psfrags}
\end{center}\vspace{-10mm}
\end{figure}

First, we will solve the achievability for this model.
Fix a joint distribution $P(s)P(u|s)P(z,x_1|u)P(x_2|s,u)P(y|x_1,x_2,s)$ where $P(s)$ and $P(y|x_1,x_2,s)$ are given by the channel. In the following achievability scheme we use block Markov coding, rate splitting, and double binning.

\emph{\underline{Coding Scheme:}} We consider $B$ blocks, each consisting of $n$ symbols; thus we transmit $nB$ symbols. We transmit $B-1$ messages $M_1$ in $B$ blocks of information. Here, $M_1 \in \{1,\dots,2^{nR_1}\}$; thus asymptotically, for a large enough $n$, our transmission rate would be $\frac{nR_1(B-1)}{nB}\stackrel{n\rightarrow\infty}\longrightarrow R_1$. At each block we split messages $M_1$ and $M_2$ into $(M^{\prime}_1,M^{\prime\prime}_1)$ and $(M^{\prime}_2,M^{\prime\prime}_2)$ at rates $(R^{\prime}_1,R^{\prime\prime}_1)$ and $(R^{\prime}_2,R^{\prime\prime}_2)$, respectively. We note that $R^{\prime}_1+R^{\prime\prime}_1=R_1$ and $R^{\prime}_2+R^{\prime\prime}_2=R_2$.

\emph{\underline{Code Design:}} The following binning process is depicted in Fig. \ref{fig:binning}. Generate $2^{n(R_0 + R^{\prime}_1 + C_{21})}$ codewords $u^n$ i.i.d. using $P(u^n)=\Pi_{i=1}^nP(u_i)$. Bin all $u^n$s into $2^{n(R_0 + R^{\prime}_1)}$ super-bins. In each super-bin, bin all $u^n$s into $2^{nR_2^\prime}$ bins. Thus we have $2^{n(R_0 + R^{\prime}_1)}$ super-bins, each consisting of  $2^{nR_2^\prime}$ bins, where in each bin we have $2^{n(C_{21}-R_2^\prime)}$ $u^n$ codewords. For each $u^n$, generate $2^{nR_1^\prime}$ codewords $z^n$ i.i.d. using $P(z^n|u^n)=\Pi_{i=1}^nP(z_i|u_i)$. For each pair $(u^n,z^n)$, generate $2^{nR_1^{\prime\prime}}$ codewords $x_1^n$ i.i.d. using $P(x_1^n|u^n,z^n)=\Pi_{i=1}^nP(x_{1,i}|u_i,z_i)$. Additionally, for each pair $(u^n,s^n)$, generate $2^{nR_2^{\prime\prime}}$ codewords $x_2^n$ i.i.d. using $P(x_2^n|u^n,s^n)=\Pi_{i=1}^nP(x_{2,i}|u_i,s_i)$.
\begin{figure}[!h]
\begin{center}
\begin{psfrags}
    \psfragscanon
    \psfrag{A}[][][1]{a codeword $u^n$}
    \psfrag{B}[][][1]{a superbin (contains bins)}
    \psfrag{C}[][][1]{a bin (contains codewords)}
    \includegraphics[width=16cm]{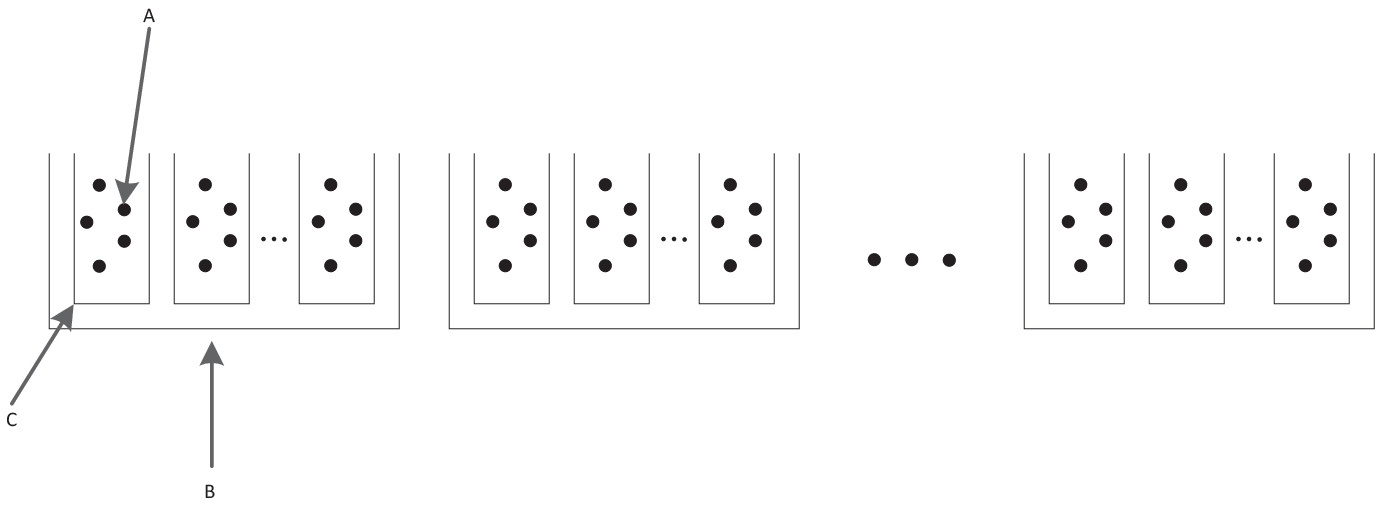}
\caption{The binning process as explained in the code design. There are $2^{n(R_0 + R^{\prime}_1)}$ super-bins and $2^{nR_2^\prime}$ bins in each super-bin. The number of codewords in each bin must be greater than $I(U;S)$ in order to find $u^n$ such that $(u^n,s^n)\in T_\epsilon^{(n)}(U,S)$.} \label{fig:binning}
\psfragscanoff
\end{psfrags}
\end{center}
\end{figure}

\emph{\underline{Encoding:}} We denote the realizations of the sequences $(M_0,M_1^\prime,M_1^{\prime\prime},M_2^\prime,M_2^{\prime\prime})$ at block $b$ as $(m_{0,b},m_{1,b}^\prime,m_{1,b}^{\prime\prime},m_{2,b}^\prime,m_{2,b}^{\prime\prime})$.
Since we use block Markov coding, we set $m_{1,B}^\prime=1$. In block $b \in \{1,\dots,B\}$, Encoder 2 looks in super-bin $(m_{0,b},m_{1,b}^\prime)$ and bin $m_{2,b}^\prime$ for $u^n$ such that $(u^n,s^n)\in T_\epsilon^{(n)}(U,S)$ and sends its index $l$ inside the super-bin over the rate-limited cooperation link to Encoder 1, where $l\in\{1,\dots,2^{nC_{21}}\}$. If such a codeword $u^n$ does not exist, namely, among the codewords in the bin none is jointly typical with $s^n$, choose an arbitrary $u^n$ from the bin $m_{2,b}^\prime$ (in such a case the decoder will declare an error). Encoder 1 looks in super-bin $(m_{0,b},m_{1,b}^\prime)$ for the bin that $u^n(l)$ lies in. That bin's index is $m_{2,b}^\prime$. Then, Encoder 1 encodes message $m_{1,b}^\prime$ conditioned on $(m_{0,b},m_{1,b-1},m_{2,b}^\prime)$ using $z^n(m_{1,b}^{\prime},u^n)$ and encodes message $m_{1,b}^{\prime\prime}$ conditioned on $(m_{0,b},m_{1,b-1},m_{2,b}^\prime,m_{1,b}^\prime)$ using $x_1^n(m_{1,b}^{\prime\prime},u^n,z^n)$. Encoder 2 encodes message $m_{2,b}^{\prime\prime}$ conditioned on $(m_{0,b},m_{1,b-1},m_{2,b}^\prime)$ and $s^n$ using $x_2^n(m_{2,b},u^n,s^n)$. Send $x_1^n(m_{1,b}^{\prime\prime},u^n,z^n)$ and $x_2^n(m_{2,b}^{\prime\prime},u^n,s^n)$ over the channel.

\emph{\underline{Decoding at Encoder 2:}} At the end of block $b$, Encoder 2 tries to decode message $m^\prime_{1,b}$. Given $(m_{0,b},m^\prime_{2,b})$ and assuming that message $m^\prime_{1,b-1}$ was decoded correctly at the end of block $b-1$, Encoder 2 looks for $\hat{m}^\prime_{1,b}$ s.t.
\begin{equation}
    (u^n(m_{0,b},m^\prime_{1,b-1},m^\prime_{2,b}),z^n(\hat{m}^\prime_{1,b},u^n))\in T_\epsilon^{(n)}(U,Z).
\end{equation}
If no such $\hat{m}^\prime_{1,b}$, or more than one such $\hat{m}^\prime_{1,b}$, was found, an error is declared at block $b$ and therefore in the whole super-block $nB$.

\emph{\underline{Decoding at the receiver:}} At the end of block $B$, the decoding is done backwards. At block $b$, assuming that $(m_{0,b+1},m_{1,b},m_{2,b+1}^\prime)$ was decoded correctly in block $b+1$, the decoder looks for the set $(\hat{m}_{0,b},\hat{m}_{1,b-1}^{\prime},\hat{m}_{1,b}^{\prime\prime},\hat{m}_{2,b}^{\prime},\hat{m}_{2,b}^{\prime\prime})$ s.t.
\begin{align*}
    (u^n(\hat{m}_{0,b},\hat{m}_{1,b-1}^\prime,\hat{m}_{2,b}^\prime,s^n),z^n(\hat{m}_{1,b}^\prime,u^n),x_1^n(m_{1,b}^{\prime\prime},u^n,z^n),x_2^n(\hat{m}_{2,b}^{\prime\prime},u^n,s^n),s^n,y^n)\in T_\epsilon^{(n)}(U,Z,X_1,X_2,S,Y).
\end{align*}
If no such tuple, or more than one such tuple, was found, an error is declared in block $b$ and therefore at the whole super-block $nB$.

\emph{\underline{Error Analysis:}} The probability that $z^n(1,u^n)=z^n(i,u^n)$, where $i>1$ and where $(u^n,z^n(1,u^n))\in T_\epsilon^{(n)}(U,Z)$ is bounded by $2^{-n(H(Z|U)-\delta(\epsilon))}$, where $\delta(\epsilon)$ goes to zero as $\epsilon$ goes to zero. Hence, if
\begin{equation}
    R_1^\prime<H(Z|U),
\end{equation}
then the probability that an incorrect message $m_{1,b}^\prime$ was decoded goes to zero for a large enough $n$.
In order to find in super-bin $(\hat{m}_{0,b},\hat{m}_{1,b-1}^\prime)$ and in bin $m_{2,b}^\prime$ a codeword $u^n$ that is jointly typical with $s^n$, we need to have more than $I(U;S)$ codewords in each bin; thus if
\begin{eqnarray}
  C_{21}-R_2^\prime &\geq& I(U;S), \\
  R_2^\prime &\leq& C_{21}-I(U;S),
\end{eqnarray}
then the probability of finding a codeword $u^n$ such that $(u^n,s^n)\in T_\epsilon^{(n)}(U,S)$ goes to 1 for a large enough $n$.
We define the following event at block b:
\begin{equation}
    E_{i,j,k,b} \triangleq (u^n(i,s^n),z^n(\hat{m}_{1,b}^\prime,u^n),x_1^n(j,u^n,z^n),x_2^n(k,s^n),s^n,y^n)\in T_\epsilon^{(n)}(U,Z,X_1,X_2,S,Y).
\end{equation}
We can bound the probability of error as follows:
\begin{eqnarray}
    P_{e,b}^{(n)} &\leq& \Pr(E^c_{1,1,1,b}) + \sum_{i=1,j=1,k>1}\Pr(E_{1,1,k,b}) +  \sum_{i=1,j>1,k=1}\Pr(E_{1,j,1,b})\notag\\ &&+ \sum_{i=1,j>1,k>1}\Pr(E_{1,j,k,b}) + \sum_{i>1,j>1,k>1}\Pr(E_{i,j,k,b}). \label{eq:prob_of_error1}
\end{eqnarray}
We now show that each term in (\ref{eq:prob_of_error1}) goes to zero for a large enough $n$.
\begin{itemize}
  \item Upper-bounding $\Pr(E^c_{1,1,1,b})$: Since we assume that Encoders 1 and 2 encode the correct message-tuple $(m_{0,b},m_{1,b-1}^\prime,m_{1,b}^{\prime\prime},m_{2,b}^{\prime},m_{2,b}^{\prime\prime})$ at block $b$ and that the decoder decoded the right $(m_{0,b+1},m_{1,b}^\prime,m_{1,b+1}^{\prime\prime},m_{2,b+1}^{\prime},m_{2,b+1}^{\prime\prime})$ at block $b+1$, by the LLN, $\Pr(E^c_{1,1,1,b}) \rightarrow 0$.
  \item Upper-bounding $\sum_{i=1,j=1,k>1}\Pr(E_{1,1,k,b})$: Assuming that $m_{1,b}^\prime$ was decoded correctly at block $b+1$, the probability for this event is bounded by
      \begin{eqnarray}
            \sum_{i=1,j=1,k>1}\Pr(E_{1,1,k,b}) &\leq& 2^{nR_2^{\prime\prime}}2^{-n(I(X_2;Y|S,U,Z,X_1)-\delta(\epsilon)}\\
            &=& 2^{nR_2^{\prime\prime}}2^{-n(I(X_2;Y|S,U,X_1)-\delta(\epsilon)}.
      \end{eqnarray}
  \item Upper-bounding $\sum_{i=1,j>1,k=1}\Pr(E_{1,j,1,b})$: Assuming that $m_{1,b}^\prime$ was decoded correctly at block $b+1$, the probability for this event is bounded by
      \begin{equation}
            \sum_{i=1,j>1,k=1}\Pr(E_{1,j,1,b}) \leq 2^{n(R_1^{\prime\prime})}2^{-n(I(X_1;Y|S,U,Z,X_2)-\delta(\epsilon)}.
      \end{equation}
  \item Upper-bounding $\sum_{i=1,j>1,k>1}\Pr(E_{1,j,k,b})$: Assuming that $m_{1,b}^\prime$ was decoded correctly at block $b+1$, the probability for this event is bounded by
      \begin{eqnarray}
            \sum_{i=1,j>1,k>1}\Pr(E_{1,j,k,b}) &\leq& 2^{n(R_1^{\prime\prime}+R_2^{\prime\prime})}2^{-n(I(X_1,X_2;Y|S,U,Z)-\delta(\epsilon)}.
      \end{eqnarray}
  \item Upper-bounding $\sum_{i>1,j>1,k>1}\Pr(E_{i,j,k,b})$: Assuming that $m_{1,b}^\prime$ was decoded correctly at block $b+1$, the probability for this event is bounded by
      \begin{eqnarray}
            \sum_{i>1,j>1,k>1}\Pr(E_{1,j,k,b}) &\leq& 2^{n(R_0+R_1^\prime+R_1^{\prime\prime}+R_2^{\prime}+R_2^{\prime\prime})}2^{-n(I(U,V,Z,X_1,X_2;Y|S)-\delta(\epsilon)}\\
            &\leq& 2^{n(R_0+R_1^\prime+R_1^{\prime\prime}+R_2^{\prime}+R_2^{\prime\prime})}2^{-n(I(X_1,X_2;Y|S)-\delta(\epsilon)}.
      \end{eqnarray}
\end{itemize}
To summarize, we note that $R_1^\prime=R_1-R_1^{\prime\prime}$ and $R_2^\prime=R_2-R_2^{\prime\prime}$ and thus we obtained that if $(R_1^{\prime\prime},R_2^{\prime\prime},R_1,R_2)$ satisfy
\begin{eqnarray}
  R_1-R_1^{\prime\prime} &\leq& H(Z|U), \\
  R_2-R_2^{\prime\prime} &\leq& C_{21}-I(U;S), \\
  R_2^{\prime\prime} &\leq& I(X_2;Y|S,U,X_1), \\
  R_1^{\prime\prime} &\leq& I(X_1;Y|S,U,Z,X_2), \\
  R_1^{\prime\prime}+R_2^{\prime\prime} &\leq& I(X_1,X_2;Y|S,U,Z), \\
  R_0+R_1^{\prime\prime}+R_2^{\prime\prime} &\leq& I(X_1,X_2;Y|S),
\end{eqnarray}
then there exists a code with a probability of error that goes to zero as the block length goes to infinity.
Using the Fourier-Motzkin elimination and by setting $R_1 = \tilde{R_1}, R_0 = \tilde{R_0}$, we obtain the following region
\begin{eqnarray}
    \tilde{R_1} &\leq& H(Z |V,U)+I(X_1;Y|S,U,X_2,Z),\notag\\
    R_2 &\leq& I(X_2; Y |X_1, S,U) + C_{21} - I(U;S),\notag\\
    \tilde{R_1} + R_2 &\leq& I(X_1, X_2 ; Y|U, Z, S)+H(Z|U)  + C_{21} - I(U;S),\notag \\
    \tilde{R_0} + \tilde{R_1} + R_2 &\leq& I(X_1, X_2 ; Y|S).\label{eq:capacityS_old}
\end{eqnarray}
Now we can easily see that if we set
\begin{eqnarray}
    \tilde{R_0} = C_{12},\\
    \tilde{R_1} = R_1-C_{12},
\end{eqnarray}
then the inequalities can be rewritten as
\begin{eqnarray}
    R_1 - C_{12} &\leq& H(Z |U)+I(X_1;Y|S,U,X_2,Z),\notag\\
    R_2 &\leq& I(X_2; Y |X_1, V, S,U) + C_{21} - I(U;S),\notag\\
    R_1 - C_{12} + R_2 &\leq& I(X_1, X_2 ; Y|U, Z, S)+H(Z|U)  + C_{21} - I(U;S),\notag\\
    C_{12} + (R_1 - C_{12}) + R_2 &\leq& I(X_1, X_2 ; Y|S),
\end{eqnarray}
and thus we obtain the region in (\ref{eq:capacityS}).
\hfill $\blacksquare$

\textit{Achievability for the causal case:}
The achievability part follows similarly to that of the strictly causal case, only now the generation of $X_2^n$ is done i.i.d. according to the conditional distribution of $p(x_2|u,s,z)$ induced by (\ref{eq:state-joint-B}).

\hfill $\blacksquare$

\section{Proof of Theorem \ref{theorem:add1}} \label{appendix:crib+coop+state}
\subsection{Converse}
\textit{Converse for the strictly causal case:}
Given an achievable rate-pair $(R_1,R_2)$, we need to show that there exists a joint distribution of the form $P(w)P(v|w)p(a|w)P(s|a)P(x_1|v,w)P(u,x_2|s,v,a,w)P(y|x_1,x_2,s)$ such that the inequalities in (\ref{eq:capacity8}) are satisfied. Since $(R_1,R_2)$ is an achievable rate-pair, there exists a $(2^{nR_1},2^{nR_2},2^{nC_{12}},n)$ code with an arbitrarily small error probability $P^{(n)}_e$. By Fano's inequality,
\begin{eqnarray}
H(M_1,M_2|Y^n)\leq n(R_1 + R_2)P^{(n)}_e + H(P^{(n)}_e).
\end{eqnarray}
We set
\begin{equation}
(R_1 + R_2)P^{(n)}_e + \frac{1}{n}H(P^{(n)}_e)\triangleq \epsilon_n,
\end{equation}
where $\epsilon_n \rightarrow 0$ as $P^{(n)}_e\rightarrow 0$. Hence,
\begin{eqnarray}
H(M_1|Y^n,M_2)\leq H(M_1,M_2|Y^n)\leq n\epsilon_n,\\
H(M_2|Y^n,M_1)\leq H(M_1,M_2|Y^n)\leq n\epsilon_n.
\end{eqnarray}
For $R_1$ we have the following:
\begin{eqnarray}
    nR_1 &=& H(M_1)\\
        &=& H(M_1,M_{12})\\
        &\stackrel{(a)}=&   H(M_1|M_2,M_{12}) + H(M_{12})\\
        &\leq&                 nC_{12} + I(M_1;Y^n|M_2,M_{12}) + H(M_1|Y^n,M_2,M_{12})\\
        &\stackrel{(b)}\leq&nC_{12} + I(M_1;Y^n|M_2,M_{12}) + n\epsilon_n\\
        &\stackrel{(c)}=&   nC_{12} + I(X^n_1;Y^n|M_2,M_{12}) + n\epsilon_n\\
        &\stackrel{(d)}=&   nC_{12} + \sum_{i=1}^n I(X_{1,i};Y^n|M_2,X^{i-1}_1,M_{12}) + n\epsilon_n\\
        &\leq&              nC_{12} + \sum_{i=1}^n H(X_{1,i}|M_2,X^{i-1}_1,M_{12}) + n\epsilon_n\\
        &\stackrel{(e)}\leq&nC_{12} + \sum_{i=1}^n H(X_{1,i}|V_i,W_i) + n\epsilon_n,
\end{eqnarray}
where (a) follows from the fact that the messages $M_1$ and $M_2$ are independent, (b) follows from Fano's inequality, (c) follows from the encoding relation in (\ref{eq:f1}), (d) follows from the chain rule, and step (e) follows since conditioning reduces entropy and by setting the RVs
\begin{eqnarray}
    V_i&\triangleq& X_1^{i-1},\\
    W_i&\triangleq& M_{12}.
\end{eqnarray}
Thus, we obtained
\begin{eqnarray}
    R_1&\leq& C_{12} + \frac{1}{n}\sum_{i=1}^n H(X_{1,i}|V_i,W_i) + \epsilon_n.
\end{eqnarray}
Additionally,
\begin{eqnarray}
    nR_1 &=& H(M_1)\\
    &=& H(M_1|M_2,M_{12}) + H(M_{12}) \\ 
    &\stackrel{(a)}\leq& nC_{12} + I(M_1;Y^n|M_2,M_{12}) + n\epsilon_n\\ 
    &\stackrel{(b)}=& nC_{12} + \sum_{i=1}^n I(M_1;Y_i|Y^{i-1},M_2,M_{12}) + n\epsilon_n\\ 
    &\leq& nC_{12} + \sum_{i=1}^n I(Y^{i-1},M_1,M_2;Y_i|M_{12}) + n\epsilon_n\\
    &=& nC_{12} + \sum_{i=1}^n [I(Y^{i-1},M_1,M_2,S_{i+1}^n;Y_i|M_{12}) \notag\\ &&\tab - I(S_{i+1}^n;Y_i|M_1,M_2,Y^{i-1},M_{12})] + n\epsilon_n\\ 
    &\stackrel{(c)}=& nC_{12} + \sum_{i=1}^n [I(Y^{i-1},M_1,M_2,S_{i+1}^n,X_1^{i-1},X_{1,i};Y_i|M_{12}) \notag\\ &&\tab - I(S_{i};Y^{i-1}|M_1,M_2,S_{i+1}^n,M_{12})] + n\epsilon_n \label{eq:csiszar1}\\
    &\stackrel{(d)}=& nC_{12} + \sum_{i=1}^n [I(Y^{i-1},M_2,S_{i+1}^n,X_1^{i-1},X_{1,i};Y_i|A_i,M_{12}) \notag\\ &&\tab - I(S_{i};Y^{i-1}|M_1,M_2,A_i,S_{i+1}^n,M_{12})] + n\epsilon_n \label{eq:markov-state3}\\
    &\stackrel{(e)}\leq& nC_{12} + \sum_{i=1}^n [I(Y^{i-1},M_2,S_{i+1}^n,X_1^{i-1},X_{1,i};Y_i|A_i,M_{12}) \notag\\ &&\tab - I(S_{i};Y^{i-1},M_2,S_{i+1}^n|M_1,A_i,M_{12})] + n\epsilon_n \\ 
    &=& nC_{12} + \sum_{i=1}^n [I(Y^{i-1},M_2,S_{i+1}^n,X_1^{i-1},X_{1,i};Y_i|A_i,M_{12}) \notag\\ &&\tab - I(S_{i};Y^{i-1},M_2,S_{i+1}^n|M_1,A_i,X_1^{i-1},M_{12})] + n\epsilon_n \\
    &\stackrel{(f)}=& nC_{12} + \sum_{i=1}^n [I(Y^{i-1},M_2,S_{i+1}^n,X_1^{i-1},X_{1,i};Y_i|A_i,M_{12}) \notag\\ &&\tab - I(S_{i};Y^{i-1},M_2,S_{i+1}^n|A_i,X_1^{i-1},M_{12})] + n\epsilon_n\\
    &\stackrel{(g)}=& nC_{12} + \sum_{i=1}^n [I(V_i,U_i,X_{1,i};Y_i|A_i,W_i) - I(S_{i};U_i|V_i,A_i,W_i)] + n\epsilon_n,\label{eq:csiszar2} 
\end{eqnarray}
where (a) follows from Fano's inequality, (b) follows from the chain rule, (c) follows since $X_1^i=f(M_1)$ and by using the Csiszar Sum Equality, (d) follows since $A_i = f(M_{12},M_2)$ and from the Markov Chain $M_1 - (M_{12},X_{1,i},X_1^{i-1},Y^{i-1},M_2,S_{i+1}^n,A_i,M_{12}) - Y_i$, (e) follows since $S_i$ is independent of $(M_2,S_{i+1}^n)$ given $(M_1,A_i)$, (f) follows from the Markov Chain $M_1 - (M_{12},X_1^{i-1},A_i) - (Y^{i-1},M_2,S_{i+1}^n)$, and (g) follows by setting the RVs $W$,$V$ and
\begin{eqnarray}
    U_i\triangleq(Y^{i-1},M_2,S_{i+1}^n).
\end{eqnarray}
Thus, we obtained
\begin{eqnarray}
    R_1&\leq& C_{12} + \frac{1}{n}\sum_{i=1}^n [I(V_i,U_i,X_{1,i};Y_i|A_i,W_i) - I(S_{i};U_i|V_i,A_i,W_i)] + \epsilon_n.
\end{eqnarray}
Next, we consider $R_2$
\begin{eqnarray}
    nR_2 &=& H(M_2)\\
    &=& H(M_2|M_1)\\ 
    &\stackrel{(a)}\leq& I(M_2;Y^n|M_1) + n\epsilon_n\\ 
    &=& \sum_{i=1}^n I(M_2;Y_i|Y^{i-1},M_1) + n\epsilon_n\\ 
    &\leq& \sum_{i=1}^n I(Y^{i-1},M_2;Y_i|M_1) + n\epsilon_n\\
    &\stackrel{(b)}=& \sum_{i=1}^n [I(Y^{i-1},M_2,S_{i+1}^n;Y_i|M_1) - I(S_{i+1}^n;Y_i|M_1,M_2,Y^{i-1})] + n\epsilon_n\\ 
    &\stackrel{(c)}=& \sum_{i=1}^n [I(Y^{i-1},M_2,S_{i+1}^n;Y_i|M_1,M_{12},X_{1,i},X_1^{i-1}) \notag\\ &&\tab - I(S_{i};Y^{i-1},M_2,S_{i+1}^n|A_i,M_{12},X_1^{i-1})] + n\epsilon_n \\
    &\stackrel{(d)}=& \sum_{i=1}^n [I(Y^{i-1},M_2,S_{i+1}^n,A_i;Y_i|M_1,M_{12},X_{1,i},X_1^{i-1}) \notag\\ &&\tab - I(S_{i};Y^{i-1},M_2,S_{i+1}^n|A_i,M_{12},X_1^{i-1})] + n\epsilon_n \\
    &\stackrel{(e)}\leq& \sum_{i=1}^n [I(Y^{i-1},M_2,S_{i+1}^n,A_i;Y_i|M_{12},X_{1,i},X_1^{i-1}) \notag\\ &&\tab - I(S_{i};Y^{i-1},M_2,S_{i+1}^n|A_i,M_{12},X_1^{i-1})] + n\epsilon_n \\ 
    &\stackrel{(f)}=& \sum_{i=1}^n [I(U_i,A_i;Y_i|W_i,X_{1,i},V_i) - I(S_{i};U_i|W_i,V_i,A_i)] + n\epsilon_n, 
\end{eqnarray}
where (a) follows from Fano's inequality, (b) follows from the chain rule, (c) follows since $(M_{12},X_1^i)=f(M_1)$ and from the same arguments as given in (\ref{eq:csiszar1}) - (\ref{eq:csiszar2}), (d) follows since $A_i = f(M_{12},M_2)$, (e) follows from the same arguments as given in (\ref{eq:markov-state3}), and (f) follows by setting the RVs $U,V$ and $W$. Thus, we obtained
\begin{eqnarray}
    R_2 &\leq& \frac{1}{n}\sum_{i=1}^n [I(U_i,A_i;Y_i|W_i,X_{1,i},V_i) - I(S_{i};U_i|W_i,V_i,A_i)] +\epsilon_n.
\end{eqnarray}
Now, consider
\begin{eqnarray}
    n(R_1 + R_2) &=& H(M_1,M_2)\\
    &=& H(M_1,M_2|M_{12})+H(M_{12})\\
    &\stackrel{(a)}\leq& nC_{12} + I(M_1,M_2;Y^n|M_{12}) + n\epsilon_n\\ 
    &\stackrel{(b)}=& nC_{12} + \sum_{i=1}^n I(M_1,M_2;Y_i|Y^{i-1},M_{12}) + n\epsilon_n\\ 
    &\stackrel{(c)}\leq& nC_{12} + \sum_{i=1}^n [I(M_1,Y^{i-1},M_2,S_{i+1}^n;Y_i|M_{12}) \notag\\ &&\tab - I(Y^{i-1},M_2,S_{i+1}^n;S_i|M_{12},A_i,X_1^{i-1})] + n\epsilon_n\\ 
    &\stackrel{(d)}=& nC_{12} + \sum_{i=1}^n [I(M_1,X_{1,i},X_1^{i-1},Y^{i-1},M_2,S_{i+1}^n;Y_i|M_{12}) \notag\\ &&\tab - I(Y^{i-1},M_2,S_{i+1}^n;S_i|M_{12},A_i,X_1^{i-1})] + n\epsilon_n\\ 
    &\stackrel{(e)}=& nC_{12} + \sum_{i=1}^n [I(X_{1,i},X_1^{i-1},Y^{i-1},M_2,S_{i+1}^n,A_i;Y_i|M_{12}) \notag\\ &&\tab - I(Y^{i-1},M_2,S_{i+1}^n;S_i|M_{12},A_i,X_1^{i-1})] + n\epsilon_n\\ 
    &=& nC_{12} + \sum_{i=1}^n [I(U_i,V_i,X_{1,i},A_i;Y_i|W_i) - I(U_i;S_i|V_i,A_i,W_i)] + n\epsilon_n, 
\end{eqnarray}
where (a) follows from Fano's inequality, (b) follows from the chain rule, (c) follows from the same arguments as given in (\ref{eq:csiszar1})-(\ref{eq:csiszar2}), (d) follows since $X_1^i=f(M_1)$ and $A_i = f(M_{12},M_2)$, and (e) follows from the same arguments as given in (\ref{eq:markov-state3}). Thus we obtained
\begin{eqnarray}
    R_1 + R_2 &\leq& C_{12} + \frac{1}{n}\sum_{i=1}^n [I(U_i,V_i,X_{1,i},A_i;Y_i|W_i) - I(U_i;S_i|V_i,A_i|W_i)] + \epsilon_n.
\end{eqnarray}
Again,
\begin{eqnarray}
    n(R_1 + R_2) &=& H(M_1,M_2)\\
    &\stackrel{(a)}\leq& I(M_1,M_2;Y^n) + n\epsilon_n\\ 
    &\stackrel{(b)}=& \sum_{i=1}^n I(M_1,M_2;Y_i|Y^{i-1}) + n\epsilon_n\\ 
    &\stackrel{(c)}\leq& \sum_{i=1}^n [I(M_1,Y^{i-1},M_2,S_{i+1}^n;Y_i) \notag\\ &&\tab - I(Y^{i-1},M_2,S_{i+1}^n;S_i|M_{12},A_i,X_1^{i-1})] + n\epsilon_n\\ 
    &\stackrel{(d)}=& \sum_{i=1}^n [I(M_1,M_{12},X_{1,i},X_1^{i-1},Y^{i-1},M_2,S_{i+1}^n;Y_i) \notag\\ &&\tab - I(Y^{i-1},M_2,S_{i+1}^n;S_i|M_{12},A_i,X_1^{i-1})] + n\epsilon_n\\ 
    &\stackrel{(e)}=& \sum_{i=1}^n [I(M_{12},X_{1,i},X_1^{i-1},Y^{i-1},M_2,S_{i+1}^n,A_i;Y_i) \notag\\ &&\tab - I(Y^{i-1},M_2,S_{i+1}^n;S_i|M_{12},A_i,X_1^{i-1})] + n\epsilon_n\\ 
    &\leq& \sum_{i=1}^n [I(W_i,U_i,V_i,X_{1,i},A_i;Y_i) - I(U_i;S_i|W_i,V_i,A_i)] + n\epsilon_n, 
\end{eqnarray}
where (a) follows from Fano's inequality, (b) follows from the chain rule, (c) follows from the same arguments as given in (\ref{eq:csiszar1})-(\ref{eq:csiszar2}), (d) follows since $X_1^i=f(M_1)$ and $A_i = f(M_{12},M_2)$, and (e) follows from the same arguments as given in (\ref{eq:markov-state3}). Thus, we obtained
\begin{eqnarray}
    R_1 + R_2 &\leq& \frac{1}{n}\sum_{i=1}^n [I(W_i,U_i,V_i,X_{1,i},A_i;Y_i) - I(U_i;S_i|W_i,V_i,A_i)] + \epsilon_n.
\end{eqnarray}
Finally, we need to prove the following Markov chains:
\begin{itemize}
\item $A_i - W_i - V_i$ -
\begin{eqnarray}
    p(a_i|m_{12},x_1^{i-1}) &=& \sum_{m_2\in \mathcal{M}_2} p(m_2|m_{12},x_1^{i-1})p(a_i|m_{12},m_2,x_1^{i-1})\\
    &\stackrel{(a)}=&\sum_{m_2\in \mathcal{M}_2} p(m_2|m_{12})p(a_i|m_{12},m_2)\\
    &=&p(a_i|m_{12}),
\end{eqnarray}
where (a) follows since $m_2$ is independent of $m_1$ and since $a_i = f(m_2,m_{12})$.
\item $S_i - A_i - (W_i,V_i)$ - Follows from the fact that the channel state at any time $i$ is assumed to depend only on the action at time $i$.
\item $X_{1,i} - (V_i,W_i) - (A_i,S_i)$ -
\begin{eqnarray}
    p(x_{1,i}|m_{12},x_1^{i-1},a_i,s_i) &=& \sum_{m_1\in \mathcal{M}_1} p(m_1|m_{12},x_1^{i-1},a_i,s_i)p(x_{1,i}|m_{12},m_1,x_1^{i-1},a_i,s_i)\\
    &\stackrel{(a)}=&\sum_{m_1\in \mathcal{M}_1} p(m_1|m_{12},x_1^{i-1})p(x_{1,i}|m_1,m_{12},x_1^{i-1})\\
    &=&p(x_{1,i}|m_{12},x_1^{i-1}),
\end{eqnarray}
where (a) follows since $m_1$ is independent of $(a_i,s_i)$ given $(m_{12},x_1^{i-1})$ and since $x_{1,i} = f(m_1)$.
\item $(U_i,X_{2,i}) - S_i,A_i,W_i,V_i - X_{1,i}$ -
\begin{eqnarray}
    p(x_{1,i}|m_{12},x_1^{i-1},a_i,s_{i}^n,y^{i-1},m_2,x_{2,i}) &=& \sum_{m_1\in \mathcal{M}_1} p(m_1|m_{12},x_1^{i-1},a_i,s_{i}^n,y^{i-1},m_2,x_{2,i})\notag\\&&\tab p(x_{1,i}|m_1,m_{12},x_1^{i-1},a_i,s_{i}^n,y^{i-1},m_2,x_{2,i})\\
    &\stackrel{(a)}=&\sum_{m_1\in \mathcal{M}_1} p(m_1|m_{12},x_1^{i-1},a_i,s_i)\notag\\&&\tab p(x_{1,i}|m_1,m_{12},x_1^{i-1},a_i,s_i)\\
    &=&p(x_{1,i}|m_{12},x_1^{i-1},a_i,s_i),
\end{eqnarray}
where (a) follows since $m_1$ is independent of $(s_{i+1}^n,y^{i-1},m_2,x_{2,i})$ given $(m_{12},x_1^{i-1},a_i,s_i)$ and since $x_{1,i} = f(m_1)$.
\item $Y_i - (X_{1,i},X_{2,i},S_i) - (W_i,V_i,U_i,A_i)$ - Follows from the fact that the channel output at any time $i$ is assumed to depend only on the channel inputs and state at time $i$.
\end{itemize}
Finally, let $Q$ be an RV independent of $(X_1^n,X_2^n,Y^n)$ and uniformly distributed over the set $\{1,2,3,\dots,n\}$. We define the RV $W\triangleq(Q,W_Q)$ and obtain the region given in (\ref{eq:capacity8}).\hfill $\blacksquare$

\textit{Converse for the causal case:}
For the causal case we repeat the same approach as for the strictly causal case, except that in the final step we need to show the Markov chain $U_i - (S_i,A_i,W_i,V_i) - X_{1,i}$. We can see from the following derivations that this Markov chain holds
\begin{eqnarray}
    p(x_{1,i}|m_{12},x_1^{i-1},a_i,s_{i}^n,y^{i-1},m_2) &=& \sum_{m_1\in \mathcal{M}_1} p(m_1|m_{12},x_1^{i-1},a_i,s_{i}^n,y^{i-1},m_2)\notag\\&&\tab p(x_{1,i}|m_1,m_{12},x_1^{i-1},a_i,s_{i}^n,y^{i-1},m_2)\\
    &\stackrel{(a)}=&\sum_{m_1\in \mathcal{M}_1} p(m_1|m_{12},x_1^{i-1},a_i,s_i)\notag\\&&\tab p(x_{1,i}|m_1,m_{12},x_1^{i-1},a_i,s_i)\\
    &=&p(x_{1,i}|m_{12},x_1^{i-1},a_i,s_i),
\end{eqnarray}
where (a) follows since $m_1$ is independent of $(s_{i+1}^n,y^{i-1},m_2)$ given $(m_{12},x_1^{i-1},a_i,s_i)$ and since $x_{1,i} = f(m_1)$.
\hfill $\blacksquare$

\subsection{Achievability}
\textit{Achievability for the strictly causal case:}
Fix a joint distribution $P(w)P(v|w)P(a|w)P(s|a)P(x_1|v,w)P(u|s,w,v,a)$\newline$p(x_2|w,a,v,u,s)P(y|x_1,x_2,s)$ where $P(s|a)$ and $P(y|x_1,x_2,s)$ are given by the channel. In the following achievability scheme we use block Markov coding, rate splitting, and Gelfand-Pinsker coding.

\emph{\underline{Coding Scheme:}} We consider $B$ blocks, each consisting of $n$ symbols; thus we transmit $nB$ symbols. We transmit $B-1$ messages $M_1$ in B blocks of information. Here, $M_1 \in \{1,\dots,2^{nR_1}\}$; thus, asymptotically, for a large enough $n$, our transmission rate would be $\frac{nR_1(B-1)}{nB}\stackrel{n\rightarrow\infty}\longrightarrow R_1$. We also split message $M_1$ into $(M_1^\prime,M_1^{\prime\prime})$ such that $(R_1^\prime,R_1^{\prime\prime}) = (C_{12},R_1 - C_{12})$.

\emph{\underline{Code Design:}} Generate $2^{nR_1^\prime}$ codewords $w^n$ i.i.d. using $P(w^n)=\Pi_{i=1}^nP(w_i)$. For each $w^n$, generate $2^{nR_1^{\prime\prime}}$ codewords $v^n$ i.i.d. using $P(v^n|w^n)=\Pi_{i=1}^nP(v_i|w_i)$. For each $w^n$, generate $2^{nR_2}$ codewords $a^n$ i.i.d. using $P(a^n|w^n)=\Pi_{i=1}^nP(a_i|w_i)$. For each pair $(w^n,v^n)$, generate $2^{nR_1^{\prime\prime}}$ codewords $x_1^n$ i.i.d. using $P(x_{1}^n|v^n,w^n)=\Pi_{i=1}^nP(x_{1,i}|v_i,w_i)$. Additionally, for each triplet $(w^n,v^n,a^n)$, generate $2^{n(R_2+\tilde{R})}$ codewords $u^n$ i.i.d. using $P(u^n|a^n,v^n,w^n)=\Pi_{i=1}^nP(u_i|a_i,v_i,w_i)$. Randomly bin all $u^n$ codewords into $2^{nR_2}$ bins where each bin contains $2^{n\tilde{R}}$ codewords.

\emph{\underline{Encoding:}} We denote the realizations of the messages $(M_1^\prime,M_1^{\prime\prime},M_2)$ at block $b$ as $(m_{1,b}^{\prime},m_{1,b}^{\prime\prime},m_{2,b})$.
Since we use block Markov coding, we set $m_{1,B}=1$.
In block $b \in \{1,\dots,B\}$, send $m_{1,b}^{\prime}$ from Encoder 1 to Encoder 2 via the rate-limited cooperation link. Encode message $m_{1,b}^\prime$ using $w^n(m_{1,b}^\prime)$. Encode message $m_{1,b-1}^{\prime\prime}$ conditioned on $m_{1,b}^\prime$ using $v^n(m_{1,b-1}^{\prime\prime},w^n)$ and encode message $m_{1,b}^{\prime\prime}$ conditioned on $(m_{1,b-1}^{\prime\prime},m_{1,b}^\prime)$ using $x_1^n(m_{1,b}^{\prime\prime},v^n,w^n)$. Given $(m_{1,b}^\prime,m_{2,b})$, Encoder 2 chooses an action sequence $a^n$. Given $(s^n,w^n,v^n,a^n)$, look in bin $m_{2,b}$ for a codeword $u^n(w^n,v^n,a^n,m_{2,b},l)$ that is jointly typical with $(w^n(m_{1,b}^\prime),v^n(m_{1,b-1}^{\prime\prime}),s^n,a^n(m_{2,b}))$, where $l\in\{1,\dots,2^{n\tilde{R}}\}$. Send $x_1^n(m_{1,b}^{\prime\prime},w^n,v^n)$ and $x_2^n$ according to $p(x_2|w,v,u,s)$ i.i.d. over the channel.

\emph{\underline{Decoding at Encoder 2:}} At the end of block $b$, Encoder 2 tries to decode message $m^{\prime\prime}_{1,b}$. Given $m^\prime_{1,b}$ and assuming that message $m^{\prime\prime}_{1,b-1}$ was decoded correctly at the end of block $b-1$, Encoder 2 looks for $\hat{m}^{\prime\prime}_{1,b}$ s.t.
\begin{equation}
    (w^n(m^\prime_{1,b}),v^n(m^{\prime\prime}_{1,b-1},w^n),x_1^n(\hat{m}^{\prime\prime}_{1,b},w^n,v^n))\in T_\epsilon^{(n)}(W,V,X_1).
\end{equation}
If no such $\hat{m}^\prime_{1,b}$, or more than one such $\hat{m}^\prime_{1,b}$, was found, an error is declared at block $b$ and therefore in the whole super-block $nB$.

\emph{\underline{Decoding at the receiver:}} At the end of block $B$, the decoding is done backwards. At block $b$, assuming that $m_{1,b}$ was decoded correctly in block $b+1$, the decoder looks for the triplet $(m_{1,b}^{\prime},m_{1,b-1}^{\prime\prime},\hat{m}_{2,b})$ s.t.
\begin{equation}
    (w^n(\hat{m}_{1,b}^{\prime}),v^n(\hat{m}_{1,b-1}^{\prime\prime},w^n),x_1^n(m_{1,b}^{\prime\prime},w^n,v^n),a^n(\hat{m}_{2,b},w^n),u^n(\hat{m}_{2,b},w^n,v^n,s^n,a^n,l),y^n)\in T_\epsilon^{(n)}(W,V,X_1,A,U,Y).
\end{equation}
If no such pair, or more than one such pair, was found, an error is declared at block $b$ and therefore in the whole super-block $nB$.

\emph{\underline{Error Analysis:}} Without loss of generality, we assume that $(m_{1,b}^{\prime},m_{1,b-1}^{\prime\prime},m_{2,b}) = (1,1,1)$.
The probability that $x_1^n(1,w^n,v^n)=x_1^n(i,w^n,v^n)$ where $i>1$ and where $(w^n(1),v^n(1,w^n),x_1^n(1,w^n,v^n))\in T_\epsilon^{(n)}(W,V,X_1)$ is bounded by $2^{-n(H(X_1|V,W)-\delta(\epsilon))}$, where $\delta(\epsilon)$ goes to zero as $\epsilon$ goes to zero. Hence, if
\begin{equation}\label{eq:action-E1}
    R_1-C_{12}<H(X_1|V,W),
\end{equation}
then the probability that an incorrect message $m_{1,b}$ was decoded goes to zero for a large enough $n$.
We define the following event at block $b$:
\begin{equation}
    E_{i,j,k,l,b} \triangleq (w^n(i),v^n(j,w^n),x_1^n(\hat{m}_{1,b}^{\prime\prime},v^n,w^n),a^n(k,w^n),u^n(k,v^n,s^n,a^n,w^n,l),y^n)\in T_\epsilon^{(n)}(W,V,X_1,A,U,Y).
\end{equation}
We can bound the probability of error as follows:
\begin{eqnarray}
    P_{e,b}^{(n)} &\leq& \Pr(E^c_{1,1,1,1,b}) + \sum_{\substack{{i=1,j=1}\\{k>1,l>1}}}\Pr(E_{1,1,k,l,b}) +  \sum_{\substack{{i=1,j>1}\\{k=1,l>1}}}\Pr(E_{1,j,1,l,b})\notag\\
    && + \sum_{\substack{{i=1,j>1}\\{k>1,l>1}}}\Pr(E_{1,j,k,l,b}) + \sum_{\substack{{i>1,j>1}\\{k>1,l>1}}}\Pr(E_{i,j,k,l,b}). \label{eq:prob_of_error2}
\end{eqnarray}
We now show that each term in (\ref{eq:prob_of_error2}) goes to zero for a large enough $n$.
\begin{itemize}
  \item Upper-bounding $\Pr(E^c_{1,1,1,1,b})$: Since we assume that Transmitters 1 and 2 encode the correct message triplet $(m_{1,b}^{\prime},m_{1,b-1}^{\prime\prime},m_{2,b})$ at block $b$ and that the receiver decoded the right $(m_{1,b+1}^{\prime},m_{1,b}^{\prime\prime},m_{2,b+1})$ at block $b+1$, by the LLN, $\Pr(E^c_{1,1,1,b}) \rightarrow 0$.
  \item Upper-bounding $\sum_{\substack{{i=1,j=1}\\{k>1,l>1}}}\Pr(E_{1,1,k,l,b})$: Assuming that $m_{1,b}^{\prime\prime}$ was decoded correctly at block $b+1$, the probability for this event is bounded by
      \begin{equation}
            \sum_{\substack{{i=1,j=1}\\{k>1,l>1}}}\Pr(E_{1,1,k,l,b}) \leq 2^{n(R_2+\tilde{R})}2^{-n(I(U,A;Y|W,V,X_1)-\delta(\epsilon)}. \label{eq:action-E2}
      \end{equation}
  \item Upper-bounding $\sum_{\substack{{i=1,j>1}\\{k=1,l>1}}}\Pr(E_{1,j,1,l,b})$: Similarly to (\ref{eq:action-E2}) we obtain
      \begin{equation}
            \sum_{\substack{{i=1,j>1}\\{k=1,l>1}}}\Pr(E_{1,j,1,l,b}) \leq 2^{n(R_1-C_{12}+\tilde{R})}2^{-n(I(V,X_1,U;Y|W,A)-\delta(\epsilon)}. \label{eq:action-E3}
      \end{equation}
  \item Upper-bounding $\sum_{\substack{{i=1,j>1}\\{k>1,l>1}}}\Pr(E_{1,j,k,l,b})$: Similarly to (\ref{eq:action-E2}) we obtain
      \begin{equation}
            \sum_{\substack{{i=1,j>1}\\{k>1,l>1}}}\Pr(E_{1,j,k,l,b}) \leq 2^{n(R_1-C_{12}+R_2+\tilde{R})}2^{-n(I(U,A,V,X_1;Y|W)-\delta(\epsilon)}. \label{eq:action-E4}
      \end{equation}
  \item Upper-bounding $\sum_{\substack{{i>1,j>1}\\{k>1,l>1}}}\Pr(E_{1,j,k,l,b})$: Similarly to (\ref{eq:action-E2}) we obtain
      \begin{equation}
            \sum_{\substack{{i>1,j>1}\\{k>1,l>1}}}\Pr(E_{1,j,k,l,b}) \leq 2^{n(R_1+R_2+\tilde{R})}2^{-n(I(U,A,W,V,X_1;Y)-\delta(\epsilon)}. \label{eq:action-E5}
      \end{equation}
\end{itemize}
Finally, we analyze the probability of error for finding $u^n$ at Encoder 2. By the covering lemma, if
\begin{equation}
    \tilde{R}>I(U;S|W,V,A)  \label{eq:action-E6}
\end{equation}
then with high probability, in block $b$ we can find a codeword $u^n$ that is jointly typical with $s^n$ in bin number $m_{2,b}$.
The combination of (\ref{eq:action-E1}), (\ref{eq:action-E2}), (\ref{eq:action-E3}), (\ref{eq:action-E4}), (\ref{eq:action-E5}), and (\ref{eq:action-E6}) yields the capacity region in (\ref{eq:capacity8}), thus completing the proof.\hfill $\blacksquare$

\textit{Achievability for the causal case:}
The achievability part follows similarly to that of the strictly causal case, only now the generation of $X_2^n$ is done i.i.d. according to the conditional distribution of $p(x_2|w,v,u,s,x_1)$ induced by (\ref{eq:action-joint-B}).
\hfill $\blacksquare$

\bibliographystyle{IEEEtran}
\bibliography{MAC+cribbing+conferencing}

\begin{thebibliography}{10}
\providecommand{\url}[1]{#1}
\csname url@samestyle\endcsname
\providecommand{\newblock}{\relax}
\providecommand{\bibinfo}[2]{#2}
\providecommand{\BIBentrySTDinterwordspacing}{\spaceskip=0pt\relax}
\providecommand{\BIBentryALTinterwordstretchfactor}{4}
\providecommand{\BIBentryALTinterwordspacing}{\spaceskip=\fontdimen2\font plus
\BIBentryALTinterwordstretchfactor\fontdimen3\font minus
  \fontdimen4\font\relax}
\providecommand{\BIBforeignlanguage}[2]{{%
\expandafter\ifx\csname l@#1\endcsname\relax
\typeout{** WARNING: IEEEtran.bst: No hyphenation pattern has been}%
\typeout{** loaded for the language `#1'. Using the pattern for}%
\typeout{** the default language instead.}%
\else
\language=\csname l@#1\endcsname
\fi
#2}}
\providecommand{\BIBdecl}{\relax}
\BIBdecl

\bibitem{Willems}
F.~M.~J. Willems, ``Information-theoretical results for the discrete memoryless
  multiple access channel,'' Ph.D. dissertation, Katholieke Universiteit
  Leuven, Haverlee, Belgium, 1982.

\bibitem{willems1983discrete}
------, ``The discrete memoryless multiple access channel with partially
  cooperating encoders,'' \emph{IEEE Trans. Inf. Theory}, vol.~29, no.~3, pp.
  441--445, 1983.

\bibitem{willems1985discrete}
F.~M.~J. Willems and E.~Van~der Meulen, ``The discrete memoryless
  multiple-access channel with cribbing encoders,'' \emph{IEEE Trans. Inf.
  Theory}, vol.~31, no.~3, pp. 313--327, 1985.

\bibitem{asnani2013multiple}
H.~Asnani and H.~H. Permuter, ``Multiple-access channel with partial and
  controlled cribbing encoders,'' \emph{IEEE Trans. Inf. Theory}, vol.~59,
  no.~4, pp. 2252--2266, 2013.

\bibitem{simeone2012cooperative}
O.~Simeone, N.~Levy, A.~Sanderovich, O.~Somekh, B.~M. Zaidel, H.~V. Poor,
  S.~Shamai \emph{et~al.}, ``Cooperative wireless cellular systems: An
  information-theoretic view,'' \emph{Foundations and Trends in Communications
  and Information Theory}, vol.~8, no. 1-2, pp. 1--177, 2012.

\bibitem{lin2013overview}
X.~Lin, J.~G. Andrews, A.~Ghosh, and R.~Ratasuk, ``An overview on {3GPP}
  device-to-device proximity services,'' \emph{Submitted to IEEE Communications
  Magazine available at arXiv preprint arXiv:1310.0116}, 2013.

\bibitem{cao2010cognitive}
Y.~Cao and B.~Chen, ``The cognitive radio channel: From spectrum sensing to
  message cribbing,'' \emph{Ad Hoc Networks}, pp. 868--883, 2010.

\bibitem{shimonovich2013cognitive}
J.~Shimonovich, A.~S. Baruch, and S.~S. Shitz, ``Cognitive cooperative
  communications on the multiple access channel,'' in \emph{Proc. IEEE Inf.
  Theory Workshop}.\hskip 1em plus 0.5em minus 0.4em\relax IEEE, 2013, pp.
  1--5.

\bibitem{slepian1973coding}
D.~Slepian and J.~K. Wolf, ``A coding theorem for multiple access channels with
  correlated sources,'' \emph{"Bell Syst. Tech. J.}, vol.~52, no.~7, pp.
  1037--1076, 1973.

\bibitem{bracher2012journal}
A.~Bracher and A.~Lapidoth, ``Feedback, cribbing, and causal state-information
  on the multiple-access channel,'' \emph{Submitted to IEEE Trans. Inf.
  Theory}, 2013.

\bibitem{bross2008gaussian}
S.~I. Bross, A.~Lapidoth, and M.~A. Wigger, ``The gaussian mac with
  conferencing encoders,'' in \emph{Proc. IEEE Int. Symp. Inf. Theory}.\hskip
  1em plus 0.5em minus 0.4em\relax IEEE, 2008, pp. 2702--2706.

\bibitem{asnani2013successive}
H.~Asnani, H.~H. Permuter, and T.~Weissman, ``Successive refinement with
  decoder cooperation and its channel coding duals,'' \emph{IEEE Trans. Inf.
  Theory}, vol.~59, no.~9, pp. 5511--5533, 2013.

\bibitem{steinberg2012cooperative}
Y.~Steinberg, ``Cooperative coding for channels with uncertainty in the
  cooperation links,'' in \emph{Proc. Inf. Theory and App. Workshop}, UCSD, San
  Diego, CA, USA, 2014.

\bibitem{weissman2010capacity}
T.~Weissman, ``Capacity of channels with action-dependent states,'' \emph{IEEE
  Trans. Inf. Theory}, vol.~56, no.~11, pp. 5396--5411, 2010.

\bibitem{ahlswede1974capacity}
R.~Ahlswede, ``The capacity region of a channel with two senders and two
  receivers,'' \emph{Ann. Probab.}, vol.~2, pp. 805--814, 1974.

\bibitem{permuter2010two}
H.~H. Permuter, Y.~Steinberg, and T.~Weissman, ``Two-way source coding with a
  helper,'' \emph{IEEE Trans. Inf. Theory}, vol.~56, no.~6, pp. 2905--2919,
  2010.

\bibitem{shannon1959coding}
C.~E. Shannon, ``Coding theorems for a discrete source with a fidelity
  criterion,'' \emph{IRE Nat. Conv. Rec}, vol.~4, no. 142--163, p.~1, 1959.

\bibitem{wyner1976rate}
A.~D. Wyner and J.~Ziv, ``The rate-distortion function for source coding with
  side information at the decoder,'' \emph{IEEE Trans. Inf. Theory}, vol.~22,
  no.~1, pp. 1--10, 1976.

\bibitem{gel1980coding}
S.~I. Gel'fand and M.~S. Pinsker, ``Coding for channel with random
  parameters,'' \emph{Probl. Contr. and Inf. Theory}, vol.~9, no.~1, pp.
  19--31, 1980.

\bibitem{cover2002duality}
T.~M. Cover and M.~Chiang, ``Duality between channel capacity and rate
  distortion with two-sided state information,'' \emph{IEEE Trans. Inf.
  Theory}, vol.~48, no.~6, pp. 1629--1638, 2002.

\bibitem{pradhan2003duality}
S.~S. Pradhan, J.~Chou, and K.~Ramchandran, ``Duality between source coding and
  channel coding and its extension to the side information case,'' \emph{IEEE
  Trans. Inf. Theory}, vol.~49, no.~5, pp. 1181--1203, 2003.

\bibitem{gupta2011operational}
A.~Gupta and S.~Verd{\'u}, ``Operational duality between lossy compression and
  channel coding,'' \emph{IEEE Trans. Inf. Theory}, vol.~57, no.~6, pp.
  3171--3179, 2011.

\bibitem{permuter2011message}
H.~H. Permuter, S.~Shamai, and A.~Somekh-Baruch, ``Message and state
  cooperation in multiple access channels,'' \emph{IEEE Trans. Inf. Theory},
  vol.~57, no.~10, pp. 6379--6396, 2011.

\bibitem{dikstein2012mac}
L.~Dikstein, H.~H. Permuter, and S.~Shamai, ``{MAC} with action-dependent state
  information at one encoder,'' \emph{Submitted to IEEE Trans. Inf. Theory
  available at arXiv preprint arXiv:1212.4626}, 2012.

\bibitem{bross2010state}
S.~I. Bross and A.~Lapidoth, ``The state-dependent multiple-access channel with
  states available at a cribbing encoder,'' in \emph{IEEE 26th Convention of
  Electrical and Electronics Engineers in Israel (IEEEI). Longer version
  available at arXiv:1007.3896v1}.\hskip 1em plus 0.5em minus 0.4em\relax IEEE,
  2010, pp. 665--669.

\bibitem{somekh2008cooperative}
A.~Somekh-Baruch, S.~Shamai, and S.~Verd{\'u}, ``Cooperative multiple-access
  encoding with states available at one transmitter,'' \emph{IEEE Trans. Inf.
  Theory}, vol.~54, no.~10, pp. 4448--4469, 2008.

\bibitem{cover2012elements}
T.~M. Cover and J.~A. Thomas, \emph{Elements of information theory}.\hskip 1em
  plus 0.5em minus 0.4em\relax John Wiley \& Sons, 2012.

\bibitem{goldfeld2013finite}
Z.~Goldfeld, H.~H. Permuter, and B.~M. Zaidel, ``The finite state {MAC} with
  cooperative encoders and delayed {CSI},'' \emph{Submitted to IEEE Trans. Inf.
  Theory available at arXiv preprint arXiv:1303.7083}, 2013.

\end{thebibliography}

\nocite{somekh2008cooperative}
\nocite{cover2012elements}
\nocite{goldfeld2013finite}
\end{document}